\newtheorem{theorem}{Theorem}[section]
\newtheorem{lemma}[theorem]{Lemma}
\newtheorem{example}[theorem]{Example}
\newtheorem{proposition}[theorem]{Proposition}
\newtheorem{corollary}[theorem]{Corollary}
\newtheorem{assumption}[theorem]{Assumption}
\theoremstyle{definition}
\newtheorem{definition}[theorem]{Definition}
\DeclareMathOperator{\pa}{pa}
\DeclareMathOperator{\neigh}{ne}
\DeclareMathOperator{\an}{an}
\newcommand*{\indep}{%
  \mathbin{%
    \mathpalette{\@indep}{}%
  }%
}
\newcommand*{\nindep}{%
  \mathbin{
    \mathpalette{\@indep}{\not}
  }%
}
\newcommand*{\@indep}[2]{%
  \sbox0{$#1\perp\m@th$}
  \sbox2{$#1=$}
  \sbox4{$#1\vcenter{}$}
  \rlap{\copy0}
  \dimen@=\dimexpr\ht2-\ht4-.2pt\relax
  \kern\dimen@
  {#2}%
  \kern\dimen@
  \copy0 
} 
\icmltitlerunning{Characterizing and Learning Equivalence Classes of Causal DAGs under Interventions}
\begin{document}

\twocolumn[
\icmltitle{                                                                                                                                                                                                                                                                                                                                                                Characterizing and Learning Equivalence Classes of Causal DAGs\\ under Interventions
}

\begin{icmlauthorlist}
\icmlauthor{Karren D. Yang}{mit}
\icmlauthor{Abigail Katcoff}{mit}
\icmlauthor{Caroline Uhler}{mit}
\end{icmlauthorlist}

\icmlaffiliation{mit}{Massachusetts Institute of Technology, Cambridge, MA}

\icmlcorrespondingauthor{Karren Yang, Caroline Uhler}{\{karren, cuhler\}@mit.edu}


\vskip 0.3in
]

\printAffiliationsAndNotice{} 

\begin{abstract}
We consider the problem of learning causal DAGs in the setting where both observational and interventional data is available. This setting is common in biology, where gene regulatory networks can be intervened on using chemical reagents or gene deletions.
\citet{hauser12} previously characterized the identifiability of causal DAGs under perfect interventions, which eliminate dependencies between targeted variables and their direct causes. 
In this paper, we extend these identifiability results to \emph{general interventions}, which may modify the dependencies between targeted variables and their causes without eliminating them. We define and characterize the \emph{interventional Markov equivalence class} that can be identified from general (not necessarily perfect) intervention experiments. We also propose the first provably consistent algorithm for learning DAGs in this setting and evaluate our algorithm on simulated and biological datasets.

\end{abstract}

\section{Introduction}

The problem of learning a causal \emph{directed acyclic graph} (DAG) from observational data over its nodes is important across disciplines such as computational biology, sociology, and economics \cite{friedman00,pearl03,robins00,spirtes00}. A causal DAG imposes conditional independence (CI) relations on its node variables that can be used to infer its structure. Since multiple DAGs can encode the same CI relations, a causal DAG is generally only identifiable up to its \emph{Markov equivalence class} (MEC) \cite{verma90,andersson97}.

The identifiability of causal DAGs can be improved by performing \emph{interventions} on the variables. Interventions that eliminate the dependency between targeted variables and their causes are known as \emph{perfect} (or \emph{hard}) interventions \cite{eberhardt05}.
Under perfect interventions, the identifiability of causal DAGs improves to a smaller equivalence class called the \emph{perfect-$\mathcal{I}$-MEC}%
\footnote{In \citet{hauser12}, they call this the interventional MEC ($\mathcal{I}$-MEC). We call it the perfect-$\mathcal{I}$-MEC to avoid confusion with the equivalence class for DAGs under general interventions that we characterize in this paper, which we call the $\mathcal{I}$-MEC.}
\cite{hauser12}. Recently, \citet{wang17} proposed the first provably consistent algorithm for recovering the perfect-$\mathcal{I}$-MEC and successfully applied it towards learning regulatory networks from interventional data.

However, only considering perfect interventions is restrictive: in practice, many interventions are \emph{non-perfect} (or \emph{soft}) and modify the causal relations between targeted variables and their direct causes without eliminating them \cite{eberhardt05}. 
In genomics, for example, interventions such as RNA interference or CRISPR-mediated gene activation often have only modest effects on gene suppression and activation respectively \cite{dominguez16}. Even interventions meant to be perfect, such as CRISPR/Cas9-mediated gene deletions, may not be uniformly successful across a cell population \cite{dixit16}. Although non-perfect interventions may be considered inefficient from an engineering perspective, they may still provide valuable information about regulatory networks. The identifiability of causal DAGs in this setting needs to be formally analyzed to develop maximally effective algorithms for learning from these types of interventions.
 
In this paper, we define and characterize \emph{$\mathcal{I}$-Markov equivalence classes ($\mathcal{I}$-MECs)} of causal DAGs that can be identified from \emph{general}  interventions that are not assumed to be perfect, thus extending the results of \citet{hauser12} (Section~\ref{sec_ident}). 
We show that under reasonable assumptions on the experiments, general interventions provide the same causal information as perfect interventions. These insights allow us to develop the first \emph{provably consistent algorithm} for learning the $\mathcal{I}$-MEC from data from general interventions (Section~\ref{sec_alg}), which we evaluate on synthetic and biological datasets (Section~\ref{sec_res}).

\section{Related Work} \label{sec:related-work}

\subsection{Identifiability of causal DAGs}
Given only observational data and without further distributional assumptions%
\footnote{See \citet{shimizu06}, \citet{Hoyer09}, \citet{peters14} for identifiability results for non-Gaussian or nonlinear structural equation models.}%
, the identifiability of a causal DAG is limited to its MEC \cite{verma90}.
\citet{hauser12} proved that a smaller class of DAGs, the perfect-$\mathcal{I}$-MEC, can be identified given data from perfect interventions. They conjectured but did not prove that their results extend to soft interventions.
For general interventions, \citet{tian01} presented a graph-based criterion for two DAGs being indistinguishable under single-variable interventions. Their criterion is consistent with Hauser and B\"uhlmann's perfect-$\mathcal{I}$-MEC, but they did not discuss equivalence classes, nor did they consider multi-variable interventions. \citet{eberhardt07} and \citet{eberhardt08} provided results on the number of single-target interventions required for full identifiability of the causal DAG. However, their work does not characterize equivalence classes for when the DAG is only partially identifiable.

\subsection{Causal inference algorithms}
There are two main categories of algorithms for learning causal graphs from observational data: \emph{constraint-based} and \emph{score-based} \cite{brown05, murphy01}. Constraint-based algorithms, such as the prominent PC algorithm \cite{spirtes00}, view causal inference as a constraint satisfaction problem based on CI relations inferred from data. Score-based algorithms, such as greedy equivalence search (GES) \cite{chickering02}, maximize a particular score function over the space of graphs. Hybrid algorithms such as greedy sparsest permutation (GSP) combine elements of both methods \cite{solus17}.

Algorithms have also been developed to learn causal graphs from both observational and interventional data. GIES is an extension of GES that incorporates interventional data into the score function it uses to search over the space of DAGs \cite{hauser12}, but it is in general not consistent \cite{wang17}. \emph{Perfect interventional GSP} (perfect-IGSP) is a provably consistent extension of GSP that uses interventional data to reduce the search space and orient edges, but it requires perfect interventions \cite{wang17}. 
Methods that allow for latent confounders and unknown intervention targets include \citet{eaton07}, JCI \cite{magliacane16}, HEJ \cite{hyttinen14}, CombINE \cite{triantafillou15}, and ICP \cite{peters16}, but they do not have consistency guarantees for returning a DAG in the correct class.

\section{Identifiability under general interventions}
\label{sec_ident}
In this section, we characterize the $\mathcal{I}$-MEC: a smaller equivalence class than the MEC that can be identified under general interventions with known targets. The main result is a graphical criterion for determining whether two DAGs are $\mathcal{I}$-Markov equivalent, which extends the identifiability results of \citet{hauser12} from perfect interventions to general interventions.

\subsection{Preliminaries}
Let the causal DAG $\mathcal{G} = ([p],E)$ represent a causal model in which every node $i \in [p]$ is associated with a random variable $X_i$, and let $f$ denote the joint probability distribution over $X = (X_1, \cdots, X_p)$. 
Under the {causal Markov assumption}, $f$ satisfies the \emph{Markov property} (or \emph{is Markov}) with respect to $\mathcal{G}$, i.e., $f(X) = \prod_{i} f(X_i | X_{\pa_{\mathcal{G}}(i)})$, where $\pa_{\mathcal{G}}(i)$ denotes the parents of node $i$ in $\mathcal{G}$ \cite{lauritzen96}.

Let $\mathcal{M}(\mathcal{G})$ denote the set of strictly positive densities that are Markov with respect to $\mathcal{G}$. Two DAGs $\mathcal{G}_1$ and $\mathcal{G}_2$ for which $\mathcal{M}(\mathcal{G}_1)=\mathcal{M}(\mathcal{G}_2)$ are said to be \emph{Markov equivalent} and belong to the same MEC \cite{andersson97}. \citet{verma90} gave a graphical criterion for Markov equivalence: two DAGs $\mathcal{G}_1$ and $\mathcal{G}_2$ belong to the same MEC if and only if they have the same skeleta (i.e., underlying undirected graph) and v-structures (i.e., induced subgraphs $i\to j\leftarrow k$).

Under perfect interventions, the identifiability of $\mathcal{G}$ improves from its MEC to its perfect-$\mathcal{I}$-MEC, which has the following graphical characterization \cite{hauser12}.

\begin{theorem} \label{the:perfect-i-mec}
Let $\mathcal{I} \subset \frak{P}([p])$\footnote{power set of $[p]$} be a conservative%
 ~(multi)-set of intervention targets, i.e. $\forall j \in [p], ~\exists I \in \mathcal{I}$ s.t. $j \notin I$. Two DAGs $\mathcal{G}$, $\mathcal{H}$ belong to the same perfect-$\mathcal{I}$-MEC if and only if $\mathcal{G}_{(I)}$, $\mathcal{H}_{(I)}$ are in the same MEC for all $I \in \mathcal{I}$,~where $\mathcal{G}_{(I)}$ denotes the sub-DAG of $\mathcal{G}$ with vertex set $[p]$ and edge set $\{(a \rightarrow b) | (a \rightarrow b) \in E, b \notin I \}$~and~similarly~for~$\mathcal{H}_{(I)}$. 
\end{theorem}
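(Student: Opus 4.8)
The plan is to argue at the level of the sets of interventional distributions. Write $\mathcal{M}_{\mathcal{I}}(\mathcal{G})$ for the set of interventional families $\mathbf{f} = (f^{(I)})_{I\in\mathcal{I}}$ that are $\mathcal{I}$-Markov with respect to $\mathcal{G}$; two DAGs lie in the same perfect-$\mathcal{I}$-MEC precisely when $\mathcal{M}_{\mathcal{I}}(\mathcal{G}) = \mathcal{M}_{\mathcal{I}}(\mathcal{H})$. The first step is to unpack the $\mathcal{I}$-Markov property into two independent parts: \textbf{(a)} a \emph{per-target factorization} requirement, $f^{(I)} \in \mathcal{M}(\mathcal{G}_{(I)})$ for every $I \in \mathcal{I}$ (a perfect intervention on $I$ deletes every arrow into $I$, so the intervened law factorizes over $\mathcal{G}_{(I)}$), and \textbf{(b)} a \emph{cross-experiment invariance} requirement, namely that for all $I,J \in \mathcal{I}$ and every node $i \notin I\cup J$ the conditional law of $X_i$ given its $\mathcal{G}$-parents agrees under $f^{(I)}$ and $f^{(J)}$ (an un-intervened mechanism does not change between experiments). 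Showing that \textbf{(a)} and \textbf{(b)} together are equivalent to the $\mathcal{I}$-Markov property is routine bookkeeping with factorizations of strictly positive densities.

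For the ``only if'' direction I would first prove the auxiliary fact that for each fixed $I \in \mathcal{I}$ the $I$-th coordinate projection of $\mathcal{M}_{\mathcal{I}}(\mathcal{G})$ equals $\mathcal{M}(\mathcal{G}_{(I)})$. One inclusion is \textbf{(a)}; for the reverse, given $g \in \mathcal{M}(\mathcal{G}_{(I)})$ I extend it to a full family by keeping $g$'s conditional factors on the nodes not in $I$, inserting fresh arbitrary factors for the nodes intervened in other experiments, and fixing a single arbitrary ``observational'' factor for each node that happens to lie in $I$ — conservativeness of $\mathcal{I}$ is what lets these choices be made consistently. The resulting family satisfies \textbf{(a)} and \textbf{(b)} by construction, so it lies in $\mathcal{M}_{\mathcal{I}}(\mathcal{G})$ and has $I$-th coordinate $g$. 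Granting this, $\mathcal{M}_{\mathcal{I}}(\mathcal{G}) = \mathcal{M}_{\mathcal{I}}(\mathcal{H})$ implies $\mathcal{M}(\mathcal{G}_{(I)}) = \mathcal{M}(\mathcal{H}_{(I)})$ for all $I$ by projecting onto each coordinate, i.e.\ each $\mathcal{G}_{(I)}$ and $\mathcal{H}_{(I)}$ are Markov equivalent.

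For the ``if'' direction, assume $\mathcal{G}_{(I)}$ and $\mathcal{H}_{(I)}$ are Markov equivalent for all $I$ and take $\mathbf{f} \in \mathcal{M}_{\mathcal{I}}(\mathcal{G})$; by symmetry it suffices to show $\mathbf{f} \in \mathcal{M}_{\mathcal{I}}(\mathcal{H})$. Requirement \textbf{(a)} transfers at once since $\mathcal{M}(\mathcal{G}_{(I)}) = \mathcal{M}(\mathcal{H}_{(I)})$. The substantive step — and the one I expect to be the main obstacle — is transferring requirement \textbf{(b)}, because the invariance constraints are stated in terms of $\mathcal{G}$-parent sets, which need not coincide with $\mathcal{H}$-parent sets. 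I would reduce, via the covered-edge-reversal characterization of Markov equivalence, to the case where $\mathcal{H}$ is obtained from $\mathcal{G}$ by reversing a single covered edge $a \to b$, writing $S = \pa_{\mathcal{G}}(a) = \pa_{\mathcal{G}}(b)\setminus\{a\}$; here the hypothesis that $\mathcal{G}_{(I)}\sim\mathcal{H}_{(I)}$ for \emph{all} $I$ (not just $I=\emptyset$) is exactly what forces every $I\in\mathcal{I}$ to contain both or neither of $a,b$, so ``experiments not intervening on $a$'' and ``experiments not intervening on $b$'' are the same set. Over that common set of experiments, the two $\mathcal{G}$-constraints touching the edge, invariance of $f^{(I)}(x_b \mid x_S, x_a)$ and of $f^{(I)}(x_a \mid x_S)$, are jointly equivalent to invariance of $f^{(I)}(x_a, x_b \mid x_S)$, and the two $\mathcal{H}$-constraints, invariance of $f^{(I)}(x_a \mid x_S, x_b)$ and of $f^{(I)}(x_b \mid x_S)$, are jointly equivalent to invariance of the same $f^{(I)}(x_a, x_b \mid x_S)$ by the chain rule; all other constraints involve parent sets unchanged by the reversal. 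The delicate part is setting up this reduction so that one may restrict to a sequence of covered-edge reversals that individually preserve every $\mathcal{G}_{(I)}\sim\mathcal{H}_{(I)}$ and respect which nodes are intervened where, and then chaining the single-reversal argument along it; once that organization is in place the remaining manipulations are standard properties of DAG factorizations.
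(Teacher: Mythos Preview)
The paper does not prove Theorem~\ref{the:perfect-i-mec}; it is quoted from \citet{hauser12} as background, so there is no in-paper argument to compare against. Your strategy is nonetheless sound and can be completed; the only real work is the step you flag as ``delicate.''

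Your ``only if'' direction is fine. (A minor remark: conservativeness is not actually needed there. Given $g \in \mathcal{M}(\mathcal{G}_{(I)})$, you can extend by choosing one arbitrary factor $\phi_j(x_j\mid x_{\pa_{\mathcal{G}}(j)})$ per node --- equal to $g$'s factor when $j\notin I$, arbitrary otherwise --- and one arbitrary marginal $\psi_j^{(J)}(x_j)$ per $(J,j)$ with $j\in J$, equal to $g(x_j)$ when $J=I$. Defining $f^{(J)}=\prod_{j\in J}\psi_j^{(J)}\prod_{j\notin J}\phi_j$ gives a member of $\mathcal{M}_{\mathcal{I}}(\mathcal{G})$ with $I$-th coordinate $g$; no consistency issue arises.)

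For the ``if'' direction, the delicate step is resolved by two observations. First, from $\mathcal{G}_{(I)}\sim\mathcal{H}_{(I)}$ for all $I\in\mathcal{I}$ and conservativeness one checks that $\mathcal{G}$ and $\mathcal{H}$ share skeleton and v-structures (pick $I$ avoiding the relevant endpoint or collider), so $\mathcal{G}\sim\mathcal{H}$. Chickering's transformational characterization of Markov equivalence then yields a sequence $\mathcal{G}=\mathcal{G}_0,\mathcal{G}_1,\ldots,\mathcal{G}_m=\mathcal{H}$ of covered-edge reversals in which each step reverses an edge whose orientation disagrees with $\mathcal{H}$; since a single reversal fixes exactly one disagreement and creates none, every edge ever reversed is one that already disagreed between the \emph{original} $\mathcal{G}$ and $\mathcal{H}$. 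Second, for any edge $a\!-\!b$ oriented oppositely in $\mathcal{G}$ and $\mathcal{H}$ (say $a\to b$ in $\mathcal{G}$, $b\to a$ in $\mathcal{H}$), choosing $I$ with $b\notin I$ puts $a\!-\!b$ in the skeleton of $\mathcal{G}_{(I)}$, hence of $\mathcal{H}_{(I)}$, forcing $a\notin I$; the symmetric argument gives the converse, so every $I\in\mathcal{I}$ contains both or neither of $a,b$. Thus every edge reversed along Chickering's chain already has the ``both or neither'' property with respect to every $I$, with no need to maintain an inductive invariant about $(\mathcal{G}_k)_{(I)}\sim\mathcal{H}_{(I)}$. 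Your chain-rule computation for a single covered reversal then applies verbatim at each step, and concatenation gives $\mathcal{M}_{\mathcal{I}}(\mathcal{G})=\mathcal{M}_{\mathcal{I}}(\mathcal{H})$. Conservativeness is genuinely used here --- both to recover $\mathcal{G}\sim\mathcal{H}$ and in the ``both or neither'' argument --- rather than in the projection step.
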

In this work, we extend this result to general interventions.
\begin{definition} \label{def:soft-int}
Under a (general) \emph{intervention} on target $I \subset [p]$, the \emph{interventional distribution} $f^{(I)}$ can be factorized as%
\begin{equation}\label{eq:int-factor}
f^{(I)}(X) = \prod_{i \in I} f^{(I)}(X_i|X_{\pa_{\mathcal{G}}(i)})\prod_{j \notin I} f^{(\emptyset)}(X_j|X_{\pa_{\mathcal{G}}(j)})
\end{equation}%
where $f^{(I)}$ and $f^{(\emptyset)}$ denote the interventional and observational distributions over $X$ respectively. Note that $f^{(I)}(X_j|X_{\pa_{\mathcal{G}}(i)})=f^{(\emptyset)}(X_j|X_{\pa_{\mathcal{G}}(i)}), ~\forall j \notin I$, i.e. the conditional distributions of non-targeted variables are {\it invariant} to the intervention.
\end{definition}

\subsection{Main Results} \label{sec:main_iden}

Let $\{f^{(I)}\}_{I \in \mathcal{I}}$ denote a collection of distributions over $X$ indexed by $I \in \mathcal{I}$.

\begin{definition}\label{def:M}
For a DAG $\mathcal{G}$ and interventional target set $\mathcal{I}$, define%
\[\begin{split}
&\mathcal{M}_{\mathcal{I}}(\mathcal{G}):= \{~\{f^{(I)}\}_{I \in \mathcal{I}} \mid
\forall I,J \in \mathcal{I}: f^{(I)} \in \mathcal{M}(\mathcal{G}) ~\text{and} ~ \\
&f^{(I)}(X_j|X_{\pa_{\mathcal{G}}(j)})=f^{(J)}(X_j|X_{\pa_{\mathcal{G}}(j)}), \forall j \notin I \cup J\}
\end{split}\]
\end{definition}
$\mathcal{M}_{\mathcal{I}}(\mathcal{G})$
 contains exactly the sets of interventional distributions (Definition \ref{def:soft-int}) that can be generated from a causal model with DAG $\mathcal{G}$ by intervening on $\mathcal{I}$
(see Supplementary Material for details). 
We therefore use $\mathcal{M}_{\mathcal{I}}(\mathcal{G})$ to formally define equivalence classes of DAGs under interventions.

\begin{definition}[$\mathcal{I}$-Markov Equivalence Class]
Two DAGs $\mathcal{G}_1$ and $\mathcal{G}_2$ for which $\mathcal{M}_{\mathcal{I}}(\mathcal{G}_1) = \mathcal{M}_{\mathcal{I}}(\mathcal{G}_2)$ belong to the same \emph{$\mathcal{I}$-Markov equivalence class ($\mathcal{I}$-MEC)}.
\end{definition}
 
From here, we extend the Markov property to the interventional setting to establish a graphical criterion for $\mathcal{I}$-MECs. We start by introducing the following graphical framework for representing DAGs under interventions.

\begin{figure}[t]
	\center
	\includegraphics[scale=0.6]{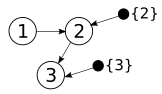}
	\caption{Let $\mathcal{G}$ be the DAG $1 \rightarrow 2 \rightarrow 3$ and let $\mathcal{I} = \{\emptyset, \{2\}, \{3\} \}$. The interventional DAG $\mathcal{G}^{\mathcal{I}}$ is shown above. Solid circles represent the $\mathcal{I}$-vertices, which are parameters indicating the intervention, and open circles represent random variables.}
	 \label{fig:aug-dag}
\end{figure}

\begin{definition}\label{def:i-dag}
Let $\mathcal{G} = ([p], E)$ be a DAG and let $\mathcal{I}$ be a collection of intervention targets. 
The \emph{interventional DAG}%
\footnote{In some previous work, interventions have been treated as additional variables of the causal system, which at first glance results in a DAG similar to the $\mathcal{I}$-DAG. The challenge then is that the new variables are deterministically related to each other, which leads to faithfulness violations (see \citet{magliacane16}). We have avoided this problem by treating the interventions as parameters instead of variables.}%
~(\emph{$\mathcal{I}$-DAG})  $\mathcal{G}^{\mathcal{I}}$ is the graph $\mathcal{G}$ augmented with $\mathcal{I}$-vertices $\{\zeta_I\}_{I \in \mathcal{I}, I \neq \emptyset}$ and $\mathcal{I}$-edges $\{\zeta_I \rightarrow i \}_{i \in I \in \mathcal{I}, I \neq \emptyset}$.
\end{definition}

Figure~\ref{fig:aug-dag} gives a concrete example of an $\mathcal{I}$-DAG. Note that each $\mathcal{I}$-vertex represents an intervention, and an $\mathcal{I}$-edge from an $\mathcal{I}$-vertex to a regular node $i$ indicates that $i$ is targeted under that intervention. Next, we define the $\mathcal{I}$-Markov property for $\mathcal{I}$-DAGs, analogous to the Markov property based on d-separation for DAGs. For now, we make the simplifying assumption that $\emptyset \in \mathcal{I}$; in Section~\ref{sec_no_obs_data}, we will show that this assumption can be made without loss of generality.

\begin{definition} [$\mathcal{I}$-Markov Property%
]\label{def:i-markov} 
	Let $\mathcal{I}$ be a set of intervention targets such that $\emptyset \in \mathcal{I}$, and suppose $\{f^{(I)}\}_{I \in \mathcal{I}}$ is a set of (strictly positive) probability distributions over $X_1, \cdots, X_p$ indexed by $I \in \mathcal{I}$. $\{f^{(I)}\}_{I \in \mathcal{I}}$ satisfies the \emph{$\mathcal{I}$-Markov property} with respect to the $\mathcal{I}$-DAG $\mathcal{G}^{\mathcal{I}}$ iff
\begin{enumerate}
\item $X_A \indep X_B\mid X_C$ for any $I \in \mathcal{I}$ and any disjoint $A, B, C \subset [p]$ such that $C$ d-separates $A$ and $B$~in~$\mathcal{G}$.
\item $f^{(I)}(X_A | X_C)=f^{(\emptyset)}(X_A | X_C)$ for any $I\in\mathcal{I}$ and any disjoint $A, C \subset [p]$ such that  $C \cup \zeta_{\mathcal{I}\backslash I}$ d-separates $A$ and $\zeta_I$ in $\mathcal{G}^{\mathcal{I}}$, where $\zeta_\emptyset:=\emptyset$ and $\zeta_{\mathcal{I} \backslash I} := \{\zeta_J ~|~ J \in \mathcal{I}, J \neq I\}$.
\end{enumerate}
\end{definition}

The first condition is simply the Markov property for DAGs based on d-separation. The second condition generalizes this property to $\mathcal{I}$-DAGs by relating d-separation between $\mathcal{I}$-vertices and regular vertices to the \emph{invariance} of conditional distributions across interventions. We note that the $\mathcal{I}$-Markov property is very similar to the ``missing-link compatibility" by \citet{bare11}

\begin{example} \label{ex:i-markov}
Consider again the augmented graph $\mathcal{G}^{\mathcal{I}}$ from Figure \ref{fig:aug-dag}, and suppose $\{f^{(I)} \}_{I \in \mathcal{I}}$ satisfies the $\mathcal{I}$-Markov property with respect to $\mathcal{G}^{\mathcal{I}}$. Then $\{f^{(I)} \}_{I \in \mathcal{I}}$ satisfies the following invariance relations based on d-separation: (1) $f^{(\emptyset)}(X_1) = f^{(\{2\})}(X_1) = f^{(\{3\})}(X_1)$;
(2) $f^{(\emptyset)}(X_3|X_2)=f^{(\{2\})}(X_3|X_2)$; (3) $f^{(\emptyset)}(X_2|X_1)=f^{(\{3\})}(X_2|X_1)$.

\end{example}

Having defined the $\mathcal{I}$-Markov property, we now formalize its relationship to $\mathcal{I}$-MECs.

\begin{proposition} \label{lem:i-markov}
Suppose $\emptyset \in \mathcal{I}$. Then $\{f^{(I)}\}_{I \in \mathcal{I}} \in \mathcal{M}_{\mathcal{I}}(\mathcal{G})$ if and only if $\{f^{(I)}\}_{I \in \mathcal{I}}$ satisfies the $\mathcal{I}$-Markov property with respect to $\mathcal{G}^{\mathcal{I}}$.
\end{proposition}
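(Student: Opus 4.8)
The plan is to prove both implications by repeatedly invoking the classical equivalence, valid for strictly positive densities, between a distribution factorizing according to a DAG and its satisfying the directed global (d-separation) Markov property \cite{lauritzen96}. The organizing observation — which I would establish first, as it is used in both directions — is a structural fact about the $\mathcal{I}$-vertices: each $\zeta_J$ is a \emph{source} of $\mathcal{G}^{\mathcal{I}}$ (no incoming edges), so on any path it can occur only as a ``fork'' non-collider, and no $\zeta_J$ is a descendant of any node. Hence conditioning on $\zeta_{\mathcal{I}\setminus I}$ can only \emph{block} paths that pass through those vertices and can \emph{never open} a collider. This yields a \textbf{reduction lemma}: for disjoint $A,C\subset[p]$, the set $C\cup\zeta_{\mathcal{I}\setminus I}$ d-separates $A$ from $\zeta_I$ in $\mathcal{G}^{\mathcal{I}}$ if and only if $C$ d-separates $A$ from $\zeta_I$ in the ``two-regime'' $\mathcal{I}$-DAG $\mathcal{G}^{\{\emptyset,I\}}$ (the graph $\mathcal{G}$ plus the single vertex $\zeta_I$ and edges $\zeta_I\to i$ for $i\in I$).

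For the direction $\mathcal{M}_{\mathcal{I}}(\mathcal{G})\Rightarrow\mathcal{I}$-Markov: Condition~1 of Definition~\ref{def:i-markov} is immediate, since $f^{(I)}\in\mathcal{M}(\mathcal{G})$ forces $f^{(I)}$ to obey every d-separation CI statement of $\mathcal{G}$ by its global Markov property. For Condition~2, fix $I$ and $A,C$ with $A$ d-separated from $\zeta_I$ by $C\cup\zeta_{\mathcal{I}\setminus I}$; by the reduction lemma I may work in $\mathcal{G}^{\{\emptyset,I\}}$. Using $\emptyset\in\mathcal{I}$ and the invariance clause of Definition~\ref{def:M} with $J=\emptyset$, I would write $f^{(I)}(X)=\prod_{i\in I}f^{(I)}(X_i|X_{\pa_{\mathcal{G}}(i)})\prod_{j\notin I}f^{(\emptyset)}(X_j|X_{\pa_{\mathcal{G}}(j)})$ and $f^{(\emptyset)}(X)=\prod_i f^{(\emptyset)}(X_i|X_{\pa_{\mathcal{G}}(i)})$. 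Then introduce a binary variable $S$ with full-support marginal, make it a parent of exactly the nodes in $I$, and set $g(X_i\mid X_{\pa_{\mathcal{G}}(i)},S)=f^{(I)}(X_i\mid X_{\pa_{\mathcal{G}}(i)})$ if $S=1$ and $=f^{(\emptyset)}(X_i\mid X_{\pa_{\mathcal{G}}(i)})$ if $S=0$ for $i\in I$, and the (regime-invariant) observational conditional for $j\notin I$. The resulting strictly positive joint $g(X,S)$ factorizes along $\mathcal{G}^{\{\emptyset,I\}}$ with $S$ in the role of $\zeta_I$, and $g(X\mid S=1)=f^{(I)}(X)$, $g(X\mid S=0)=f^{(\emptyset)}(X)$. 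Since $C$ d-separates $A$ from $S$ there, the global Markov property of $g$ gives $X_A\indep S\mid X_C$, so $f^{(I)}(X_A\mid X_C)=g(X_A\mid X_C,S=1)=g(X_A\mid X_C,S=0)=f^{(\emptyset)}(X_A\mid X_C)$, which is Condition~2.

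For the converse, $\mathcal{I}$-Markov$\Rightarrow\mathcal{M}_{\mathcal{I}}(\mathcal{G})$: Condition~1 says every $f^{(I)}$ satisfies all d-separations of $\mathcal{G}$, so by the positive-density equivalence of global and factorization Markov properties, $f^{(I)}\in\mathcal{M}(\mathcal{G})$. It remains to check the invariance-of-parent-conditionals clause of Definition~\ref{def:M}; by transitivity through $\emptyset$ it suffices to show $f^{(I)}(X_j\mid X_{\pa_{\mathcal{G}}(j)})=f^{(\emptyset)}(X_j\mid X_{\pa_{\mathcal{G}}(j)})$ for each $j\notin I$. I would apply Condition~2 with $A=\{j\}$, $C=\pa_{\mathcal{G}}(j)$: one needs $\pa_{\mathcal{G}}(j)\cup\zeta_{\mathcal{I}\setminus I}$ to d-separate $\{j\}$ from $\zeta_I$ in $\mathcal{G}^{\mathcal{I}}$. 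Since $j\notin I$, $\pa_{\mathcal{G}^{\mathcal{I}}}(j)=\pa_{\mathcal{G}}(j)$, and $\zeta_I$ is a non-parent, non-descendant of $j$ (nothing is an ancestor of $\zeta_I$); the local Markov property of $\mathcal{G}^{\mathcal{I}}$ gives $\{j\}\indep\zeta_I\mid\pa_{\mathcal{G}}(j)$ as a d-separation statement, which upgrades to d-separation given $\pa_{\mathcal{G}}(j)\cup\zeta_{\mathcal{I}\setminus I}$ by the reduction lemma. Condition~2 then delivers the equality.

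The main obstacle is Condition~2 in the forward direction, and specifically the bookkeeping around the $\mathcal{I}$-vertices: the reduction lemma, together with checking that the augmented distribution $g$ is genuinely Markov with respect to $\mathcal{G}^{\{\emptyset,I\}}$ while realizing $f^{(I)}$ and $f^{(\emptyset)}$ as its two regime-conditionals. Both rest on the structural facts that $\mathcal{I}$-vertices are sources that appear only as forks, and on $\emptyset\in\mathcal{I}$, which supplies the baseline $f^{(\emptyset)}$ needed in the factorization; everything else is standard DAG Markov theory.
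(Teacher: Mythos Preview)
Your argument is correct, and for the converse ($\mathcal{I}$-Markov $\Rightarrow\mathcal{M}_{\mathcal{I}}(\mathcal{G})$) it coincides with the paper's proof: both factorize each $f^{(I)}$ via Condition~1 and then apply Condition~2 with $A=\{j\}$, $C=\pa_{\mathcal{G}}(j)$ for $j\notin I$ to obtain the invariance of parent-conditionals.

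The forward direction (Condition~2) is where you diverge. The paper argues directly from the factorization: it takes the ancestral set of $A\cup C$, partitions it into the nodes $B'$ that are d-connected to $\zeta_I$ given $C\cup\zeta_{\mathcal{I}\setminus I}$ and the complement $A'$, checks that nodes in $A'$ (and those $i\in C$ with a parent in $A'$) lie outside $I$, and then shows explicitly that $f^{(\hat I)}(X_A,X_C)$ splits as $\hat g_1(X_A,X_C)\cdot\hat g_2(X_C;\hat I)$ for $\hat I\in\{\emptyset,I\}$, whence the conditional $f^{(\hat I)}(X_A\mid X_C)$ is regime-invariant. Your route instead treats the regime indicator as an honest random variable: you build a single strictly positive joint $g(X,S)$ that factorizes along $\mathcal{G}^{\{\emptyset,I\}}$ and realizes $f^{(I)}$ and $f^{(\emptyset)}$ as its two $S$-slices, then read off $X_A\indep S\mid X_C$ from the ordinary global Markov property. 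Your reduction lemma (that conditioning on the remaining source vertices $\zeta_{\mathcal{I}\setminus I}$ neither opens colliders nor misses paths) is exactly the bridge needed to pass from $\mathcal{G}^{\mathcal{I}}$ to $\mathcal{G}^{\{\emptyset,I\}}$. This is a genuinely different and somewhat cleaner decomposition: it replaces the paper's explicit ancestral-set bookkeeping by a one-line appeal to standard DAG Markov theory, at the cost of introducing the auxiliary $S$. The paper's hands-on calculation, on the other hand, is more self-contained and yields the explicit product form $\hat g_1\cdot\hat g_2$, which the paper later reuses (e.g.\ in Proposition~\ref{pro:pool} on pooling).
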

This result states that DAGs are in the same $\mathcal{I}$-MEC if and only if the d-separation statements of their $\mathcal{I}$-DAGs imply the same conditional invariances and independences based on the $\mathcal{I}$-Markov property. We now state the main result of this section: the graphical characterization of $\mathcal{I}$-MECs.

\begin{theorem}\label{the:i-mec}
Suppose $\emptyset \in \mathcal{I}$. Two DAGs $\mathcal{G}_1$ and $\mathcal{G}_2$ belong to the same $\mathcal{I}$-MEC if and only if their $\mathcal{I}$-DAGs $\mathcal{G}_1^{\mathcal{I}}$ and $\mathcal{G}_2^{\mathcal{I}}$ have the same skeleta and v-structures.
\end{theorem}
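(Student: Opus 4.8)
\emph{Proof plan.} The plan is to push everything onto the combinatorial side using Proposition~\ref{lem:i-markov} and then to imitate the classical v\nobreakdash-structure argument of \citet{verma90} on the $\mathcal{I}$-DAGs, the one genuinely new ingredient being an ``$\mathcal{I}$-faithfulness'' lemma. First I would record a purely graphical reduction. Since the $\mathcal{I}$-edges of $\mathcal{G}_1^{\mathcal{I}}$ and $\mathcal{G}_2^{\mathcal{I}}$ depend only on $\mathcal{I}$, the two $\mathcal{I}$-DAGs have the same skeleton iff $\mathcal{G}_1$ and $\mathcal{G}_2$ do; and since an $\mathcal{I}$-vertex has no parents it can never be the collider of a v\nobreakdash-structure, so every v\nobreakdash-structure of an $\mathcal{I}$-DAG is of exactly one of three kinds: (i) $a\to b\leftarrow c$ with $a,b,c\in[p]$, i.e.\ a v\nobreakdash-structure of $\mathcal{G}$; (ii) $\zeta_I\to b\leftarrow\zeta_J$, present iff $b\in I\cap J$ and hence fixed by $\mathcal{I}$ alone; and (iii) $\zeta_I\to b\leftarrow a$ with $a\in[p]$, present iff $b\in I$, $a\notin I$, and $a\in\pa_{\mathcal{G}}(b)$. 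Consequently ``$\mathcal{G}_1^{\mathcal{I}}$ and $\mathcal{G}_2^{\mathcal{I}}$ share skeleton and v\nobreakdash-structures'' is equivalent to: $\mathcal{G}_1$ and $\mathcal{G}_2$ are Markov equivalent, \emph{and} for every $I\in\mathcal{I}$ and every skeleton edge $a-b$ with $b\in I$, $a\notin I$, the edge is oriented the same way in $\mathcal{G}_1$ and $\mathcal{G}_2$. I would then prove each direction against this reformulation.

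For ($\Leftarrow$): assuming the reformulated right-hand side, $\mathcal{G}_1^{\mathcal{I}}$ and $\mathcal{G}_2^{\mathcal{I}}$ have the same skeleton and v\nobreakdash-structures, so by \citet{verma90} they encode the same d-separation relations over all of their vertices, regular and $\mathcal{I}$-vertices alike. Both clauses of the $\mathcal{I}$-Markov property (Definition~\ref{def:i-markov}) are phrased entirely through d-separation in $\mathcal{G}$ and in $\mathcal{G}^{\mathcal{I}}$, so a family $\{f^{(I)}\}$ is $\mathcal{I}$-Markov with respect to $\mathcal{G}_1^{\mathcal{I}}$ iff it is with respect to $\mathcal{G}_2^{\mathcal{I}}$, and Proposition~\ref{lem:i-markov} then gives $\mathcal{M}_{\mathcal{I}}(\mathcal{G}_1)=\mathcal{M}_{\mathcal{I}}(\mathcal{G}_2)$.

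For ($\Rightarrow$) I would argue the contrapositive, splitting according to the reformulation. If $\mathcal{G}_1$ and $\mathcal{G}_2$ are not Markov equivalent then $\mathcal{M}(\mathcal{G}_1)\neq\mathcal{M}(\mathcal{G}_2)$, so picking $f\in\mathcal{M}(\mathcal{G}_1)\setminus\mathcal{M}(\mathcal{G}_2)$ the constant family $f^{(I)}\equiv f$ lies in $\mathcal{M}_{\mathcal{I}}(\mathcal{G}_1)$ (using $\emptyset\in\mathcal{I}$; all invariance requirements are trivial) but not in $\mathcal{M}_{\mathcal{I}}(\mathcal{G}_2)$, so the two $\mathcal{I}$-MECs differ. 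Otherwise $\mathcal{G}_1$ and $\mathcal{G}_2$ are Markov equivalent but some skeleton edge $i-j$ with $j\in I\in\mathcal{I}$, $i\notin I$ is oriented $i\to j$ in $\mathcal{G}_1$ and $j\to i$ in $\mathcal{G}_2$ (without loss of generality). Here I would isolate a single conditional invariance that is forced in $\mathcal{M}_{\mathcal{I}}(\mathcal{G}_1)$ but not in $\mathcal{M}_{\mathcal{I}}(\mathcal{G}_2)$. Because $i\to j$ in $\mathcal{G}_1$ we have $j\notin\pa_{\mathcal{G}_1}(i)$, and by the local Markov property \cite{lauritzen96} the set $\pa_{\mathcal{G}_1}(i)\cup\zeta_{\mathcal{I}\setminus I}$ d-separates $i$ from $\zeta_I$ in $\mathcal{G}_1^{\mathcal{I}}$ (it contains $\pa_{\mathcal{G}_1^{\mathcal{I}}}(i)$ and no descendant of $i$, while $\zeta_I$ is a non-descendant of $i$), so by Definition~\ref{def:i-markov}(2) and Proposition~\ref{lem:i-markov} every family in $\mathcal{M}_{\mathcal{I}}(\mathcal{G}_1)$ satisfies $f^{(I)}(X_i\mid X_{\pa_{\mathcal{G}_1}(i)})=f^{(\emptyset)}(X_i\mid X_{\pa_{\mathcal{G}_1}(i)})$; whereas in $\mathcal{G}_2^{\mathcal{I}}$ the chain $\zeta_I\to j\to i$ is not blocked by $\pa_{\mathcal{G}_1}(i)\cup\zeta_{\mathcal{I}\setminus I}$ (which does not contain $j$), so this invariance is \emph{not} entailed by the $\mathcal{I}$-Markov property of $\mathcal{G}_2^{\mathcal{I}}$. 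To turn ``not entailed'' into ``actually violated by a model in $\mathcal{M}_{\mathcal{I}}(\mathcal{G}_2)$'' I would invoke the faithfulness lemma below applied to $\mathcal{G}_2$, producing a family in $\mathcal{M}_{\mathcal{I}}(\mathcal{G}_2)\setminus\mathcal{M}_{\mathcal{I}}(\mathcal{G}_1)$. In both cases $\mathcal{M}_{\mathcal{I}}(\mathcal{G}_1)\neq\mathcal{M}_{\mathcal{I}}(\mathcal{G}_2)$, which proves the contrapositive.

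The step I expect to be the main obstacle is the faithfulness lemma: for every DAG $\mathcal{H}$ and conservative $\mathcal{I}\ni\emptyset$ there is a family in $\mathcal{M}_{\mathcal{I}}(\mathcal{H})$ whose conditional independences \emph{and} conditional invariances are exactly those entailed by the $\mathcal{I}$-Markov property of $\mathcal{H}^{\mathcal{I}}$. I would prove this (in the supplement) by taking a linear Gaussian structural equation model for $\mathcal{H}$ with algebraically generic edge coefficients and generic perturbations of the targeted structural equations under each $I\in\mathcal{I}$, and showing that any independence, and any conditional invariance, not forced by d-separation in $\mathcal{H}^{\mathcal{I}}$ amounts to the vanishing of a not-identically-zero polynomial in those parameters, hence holds only on a measure-zero set; intersecting the complements of the finitely many such sets leaves a full-measure set of parameters that realizes a faithful family. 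Everything else --- the graphical reduction of the first paragraph, the two applications of Proposition~\ref{lem:i-markov}, and the local-Markov-property bookkeeping --- is routine.
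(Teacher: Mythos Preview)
Your proof follows essentially the same route as the paper's: both directions go through Proposition~\ref{lem:i-markov}, the ($\Leftarrow$) direction is identical, and the ($\Rightarrow$) direction is the same contrapositive with the same two-case split (underlying MEC differs vs.\ a $\zeta_I$-involving v-structure differs). The one substantive difference is the auxiliary lemma you rely on. You propose a \emph{global} $\mathcal{I}$-faithfulness statement---a single family in $\mathcal{M}_{\mathcal{I}}(\mathcal{H})$ realizing exactly the d-separation-entailed independences and invariances---proved by a genericity argument over linear-Gaussian parameters. The paper instead isolates the weaker, \emph{local} completeness statement of Lemma~\ref{lem:i-markov-comp}: for any \emph{single} failure of d-separation between $\zeta_J$ and $A$ given $C\cup\zeta_{\mathcal{I}\setminus J}$, one explicitly constructs (along a specific d-connecting path, by perturbing one noise variance) a family with $f^{(\emptyset)}(X_A\mid X_C)\neq f^{(J)}(X_A\mid X_C)$. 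Your lemma is strictly stronger and your genericity sketch is sound; the paper's version is exactly what the case split needs and comes with a concrete construction rather than a measure-zero argument. A minor cosmetic difference in case~(b): you condition on $\pa_{\mathcal{G}_1}(i)$ (excluding $j$) and use the open chain $\zeta_I\to j\to i$ in $\mathcal{G}_2^{\mathcal{I}}$, whereas the paper conditions on $\pa_{\mathcal{G}_2}(k)$ (including $j$) and uses the activated collider $\zeta_I\to j\leftarrow k$ in $\mathcal{G}_1^{\mathcal{I}}$; these are dual witnesses to the same d-separation discrepancy.
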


The proof of this theorem uses the following weak completeness result for the $\mathcal{I}$-Markov property.

\begin{lemma} \label{lem:i-markov-comp}
For any disjoint $A,C \subset [p]$ and any $J \in \mathcal{I}$ such that $C \cup \zeta_{\mathcal{I}\backslash J}$ does not d-separate $A$ and $\zeta_J$ in $\mathcal{G}^{\mathcal{I}}$, there exists some $\{f^{(I)}\}_{I \in \mathcal{I}}$ that satisfies the $\mathcal{I}$-Markov property with respect to $\mathcal{G}^{\mathcal{I}}$ with $f^{(\emptyset)}(X_A|X_C) \neq f^{(J)}(X_A|X_C)$.
\end{lemma}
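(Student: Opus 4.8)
The plan is to exhibit a single linear‑Gaussian family $\{f^{(I)}\}_{I\in\mathcal{I}}$ that is $\mathcal{I}$‑Markov with respect to $\mathcal{G}^{\mathcal{I}}$ for \emph{every} choice of its real parameters, and then to pick the parameters generically so that the one prescribed invariance breaks. Concretely, I would fix a linear Gaussian SEM on the \emph{full} $\mathcal{I}$‑DAG $\mathcal{G}^{\mathcal{I}}$, treating each $\mathcal{I}$‑vertex $\zeta_K$ as an independent $N(0,1)$ root: for $i\in[p]$ set $X_i=\sum_{K\in\mathcal{I}:\,i\in K}\lambda_{iK}\zeta_K+\sum_{k\in\pa_{\mathcal{G}}(i)}a_{ik}X_k+\varepsilon_i$ with independent $\varepsilon_i\sim N(0,\sigma_i^2)$, $\sigma_i^2>0$. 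Writing $\mathbb{P}$ for the resulting joint law of $(X,\zeta)$, I then \emph{define} $f^{(I)}$ to be the conditional law of $X$ given $\zeta_I=1$ and $\zeta_K=0$ for all other $K$, and $f^{(\emptyset)}$ the conditional law of $X$ given all $\zeta_K=0$. (Here $\emptyset\in\mathcal{I}$ by the standing assumption, so $f^{(\emptyset)}$ is genuinely a member of the family; also $J\neq\emptyset$, since for $J=\emptyset$ we have $\zeta_J=\emptyset$ and the hypothesis of the lemma is vacuous.)

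The first step is to check that this family satisfies the $\mathcal{I}$‑Markov property for \emph{any} parameter values. Conditioning $\mathbb{P}$ on all the $\zeta$'s turns each structural equation into $X_i=\mathrm{const}+\sum_{k\in\pa_{\mathcal{G}}(i)}a_{ik}X_k+\varepsilon_i$, so every $f^{(I)}$ is itself a Gaussian SEM on $\mathcal{G}$; in particular it is Markov with respect to $\mathcal{G}$, which is condition 1 of Definition~\ref{def:i-markov} (and the intercept attached to $X_i$ vanishes whenever $i\notin I$, so $f^{(I)}(X_i\mid X_{\pa_{\mathcal{G}}(i)})=f^{(\emptyset)}(X_i\mid X_{\pa_{\mathcal{G}}(i)})$, giving also $\{f^{(I)}\}\in\mathcal{M}_{\mathcal{I}}(\mathcal{G})$ via Proposition~\ref{lem:i-markov}, though only the $\mathcal{I}$‑Markov property is needed). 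For condition 2, if $C\cup\zeta_{\mathcal{I}\setminus I}$ d‑separates $A$ and $\zeta_I$ in $\mathcal{G}^{\mathcal{I}}$, the global Markov property of the DAG $\mathcal{G}^{\mathcal{I}}$ gives $X_A\indep\zeta_I\mid X_C,\zeta_{\mathcal{I}\setminus I}$ under $\mathbb{P}$; since $(X,\zeta)$ is jointly Gaussian, this forces the conditional law of $X_A$ given $(X_C,\zeta_{\mathcal{I}\setminus I},\zeta_I)$ to be functionally independent of $\zeta_I$, and evaluating at $\zeta_{\mathcal{I}\setminus I}=0$, $\zeta_I\in\{0,1\}$ (with the chain rule for conditioning) yields $f^{(I)}(X_A\mid X_C)=f^{(\emptyset)}(X_A\mid X_C)$. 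Strict positivity is immediate because all $\sigma_i^2>0$.

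The remaining, and only delicate, step is to realize the prescribed non‑invariance by a suitable choice of parameters. By hypothesis $C\cup\zeta_{\mathcal{I}\setminus J}$ does \emph{not} d‑separate $A$ and $\zeta_J$ in $\mathcal{G}^{\mathcal{I}}$, so there is an active path from $\zeta_J$ to some $a\in A$ given $C\cup\zeta_{\mathcal{I}\setminus J}$. I would invoke the classical fact that a linear Gaussian SEM is faithful to its DAG for Lebesgue‑almost‑every parameter vector, and take $(\lambda,a,\sigma^2)$ to be such a faithful point; then $X_A\nindep\zeta_J\mid X_C,\zeta_{\mathcal{I}\setminus J}$ under $\mathbb{P}$, which in the Gaussian model means the coefficient of $\zeta_J$ in the regression of some coordinate $X_a$ ($a\in A$) on $(X_C,\zeta_{\mathcal{I}\setminus J})$ is nonzero, so the conditional mean of $X_a$ given $X_C$ (and $\zeta_{\mathcal{I}\setminus J}=0$) genuinely differs between $\zeta_J=1$ and $\zeta_J=0$, i.e.\ $f^{(J)}(X_A\mid X_C)\neq f^{(\emptyset)}(X_A\mid X_C)$. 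Because the $\mathcal{I}$‑Markov property was verified for \emph{all} parameter values in the previous step, this particular faithful choice is still a legitimate witness, which closes the argument. I expect the genericity step to be the main obstacle to state cleanly — one must rule out the d‑connection being ``accidentally'' cancelled — but it is exactly the standard ``faithfulness holds generically'' theorem for linear Gaussian models; a hands‑on alternative is to exhibit an explicit active $\zeta_J$–$a$ path and set to zero every coefficient not lying on a single trek realizing it.
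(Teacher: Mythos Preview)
Your argument is correct and takes a genuinely different route from the paper. The paper's proof is an explicit construction: it isolates a single $d$-connecting path $P=(p_1,\dots,p_k)$ from a node $p_1\in J$ to some $p_k\in A$, keeps only $P$ together with the directed tails from colliders on $P$ down to $C$, parametrizes this pruned subgraph by a linear Gaussian SEM with nonzero weights, and realizes the intervention as a change of the \emph{noise variance} at $p_1$ only; non-invariance is then verified by a direct computation showing $X_{p_k}=N+S(X_C)$ with $N$ carrying the altered variance. Your approach instead places a single linear Gaussian SEM on the \emph{entire} $\mathcal{I}$-DAG with the $\zeta_K$ as Gaussian roots (so the interventions act as \emph{mean shifts}), verifies the $\mathcal{I}$-Markov property uniformly in the parameters via the global Markov property of $\mathcal{G}^{\mathcal{I}}$, and then obtains the non-invariance from generic faithfulness of linear Gaussian models together with the linearity of Gaussian conditional means. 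Your route is cleaner and avoids the collider bookkeeping that the paper sketches over; the price is that it imports the almost-everywhere faithfulness theorem as a black box, whereas the paper's construction is self-contained and produces an explicit witness. The ``hands-on alternative'' you mention at the end (zero out all coefficients off a single realizing trek) is essentially the paper's strategy.
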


\begin{proof}[Proof of Theorem \ref{the:i-mec}]
If $\mathcal{G}^{\mathcal{I}}_1$ and $\mathcal{G}^{\mathcal{I}}_2$ have the same skeleta and v-structures, then they satisfy the same d-separation statements, and hence $\mathcal{M}_{\mathcal{I}}(\mathcal{G}_1) = \mathcal{M}_{\mathcal{I}}(\mathcal{G}_2)$ by Proposition \ref{lem:i-markov}. If $\mathcal{G}^{\mathcal{I}}_1$ and $\mathcal{G}^{\mathcal{I}}_2$ do not have the same skeleta or v-structures, then (a) $\mathcal{G}_1$ and $\mathcal{G}_2$ do not have the same skeleta or v-structures, or (b) there exists $I \in \mathcal{I}$ and $j \in [p]$ such that $\zeta_I \rightarrow j$ is part of a v-structure in one $\mathcal{I}$-DAG and not the other. In case (a), $\mathcal{G}_1$ and $\mathcal{G}_2$ do not belong to the same MEC \cite{verma90}, so they also cannot belong to the same $\mathcal{I}$-MEC by the first condition in Definition~\ref{def:i-markov}. 
In case (b), suppose without loss of generality that $\zeta_I \rightarrow j$ is part of a v-structure in $\mathcal{G}_1^{\mathcal{I}}$ but not in $\mathcal{G}_2^{\mathcal{I}}$ for some $I \in \mathcal{I}$ and some $j \in [p]$. Then $j$ has a neighbor $k\in[p]\setminus\{j\}$ with orientation $k \rightarrow j$ in $\mathcal{G}_1^{\mathcal{I}}$ and $j \rightarrow k$ in $\mathcal{G}_2^{\mathcal{I}}$. Thus, $k$ and $\zeta_I$ are d-connected in $\mathcal{G}_1^{\mathcal{I}}$ given $pa_{\mathcal{G}_2^{\mathcal{I}}}(k)$ but d-separated in $\mathcal{G}_2^{\mathcal{I}}$ given $pa_{\mathcal{G}_2^{\mathcal{I}}}(k)$. Hence by Lemma~\ref{lem:i-markov-comp}, there exists some $\{f^{(I)}\}_{I \in \mathcal{I}}$ that satisfies the $\mathcal{I}$-Markov property with respect to $\mathcal{G}_1^{\mathcal{I}}$ but not $\mathcal{G}_2^{\mathcal{I}}$ and thus $\mathcal{M}_\mathcal{I}(\mathcal{G}_1) \neq \mathcal{M}_{\mathcal{I}}(\mathcal{G}_2)$.
\end{proof}

\begin{figure}
\subfigure[]{
\includegraphics[scale=0.6]{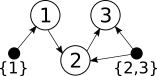}
}
\subfigure[]{
\includegraphics[scale=0.6]{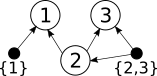}
}
\subfigure[]{
\includegraphics[scale=0.6]{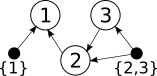}
}
\caption{Example of $\mathcal{I}$-DAGs for 3-node graphs with $\mathcal{I} = \{\emptyset, \{1\}, \{2,3\} \}$}
\label{fig:imec}
\end{figure}

\begin{example}
The three DAGs in Figure \ref{fig:imec} belong to the same MEC. Given interventions on $\mathcal{I} = \{\emptyset, \{1\}, \{2,3\} \}$, by Theorem \ref{the:i-mec}, DAG (a) is not in the same $\mathcal{I}$-MEC as DAGs (b-c) due to its lack of v-structure $\zeta_{\{1\}} \rightarrow 1 \leftarrow 2$. The intervention improves the identifiability of these structures.
\end{example}

It is straightforward to show that our graphical criterion of $\mathcal{I}$-MECs when $\emptyset \in \mathcal{I}$ is equivalent to the characterization of perfect-$\mathcal{I}$-MECs by \citet{hauser12} for perfect interventions, which proves their conjecture.

\begin{corollary} \label{cor:perfect-i-mec}
When $\emptyset \in \mathcal{I}$, two DAGs $\mathcal{G}_1$ and $\mathcal{G}_2$ are in the same $\mathcal{I}$-MEC iff they are in the same perfect-$\mathcal{I}$-MEC.
\end{corollary}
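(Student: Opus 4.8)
The plan is to reduce both sides to a single combinatorial condition on $(\mathcal{G}_1,\mathcal{G}_2)$ and then chain the two equivalences. Concretely, I would show that each of ``$\mathcal{G}_1,\mathcal{G}_2$ lie in the same $\mathcal{I}$-MEC'' and ``$\mathcal{G}_1,\mathcal{G}_2$ lie in the same perfect-$\mathcal{I}$-MEC'' is equivalent to the conjunction of: (C1) $\mathcal{G}_1$ and $\mathcal{G}_2$ have the same skeleton; (C2) $\mathcal{G}_1$ and $\mathcal{G}_2$ have the same v-structures; and (C3) for every $I\in\mathcal{I}$, the two DAGs agree on the orientation of every edge with exactly one endpoint in $I$. (Since $\emptyset\in\mathcal{I}$, the set $\mathcal{I}$ is conservative, so Theorem~\ref{the:perfect-i-mec} is available.)

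For the $\mathcal{I}$-MEC side I would invoke Theorem~\ref{the:i-mec} and inspect the skeleta and v-structures of the $\mathcal{I}$-DAGs. The skeleton of $\mathcal{G}_k^{\mathcal{I}}$ is that of $\mathcal{G}_k$ together with the fixed $\mathcal{I}$-edges $\{\zeta_I - i : i\in I\}$, so equality of the $\mathcal{I}$-DAG skeleta is exactly (C1). The v-structures of $\mathcal{G}_k^{\mathcal{I}}$ split into three kinds: those with all three vertices in $[p]$, which are precisely the v-structures of $\mathcal{G}_k$ and are matched by (C2); those of the form $\zeta_I\to b\leftarrow\zeta_J$ with $I\ne J$ and $b\in I\cap J$, which depend only on $\mathcal{I}$ and are automatically shared; and those of the form $\zeta_I\to b\leftarrow c$ with $b\in I$ and $c\in[p]\setminus I$, which occur in $\mathcal{G}_k^{\mathcal{I}}$ iff $c\to b$ is an edge of $\mathcal{G}_k$, so, given (C1), matching these is exactly (C3). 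Hence ``same $\mathcal{I}$-MEC'' $\Longleftrightarrow$ (C1)$\wedge$(C2)$\wedge$(C3).

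For the perfect-$\mathcal{I}$-MEC side I would use Theorem~\ref{the:perfect-i-mec} and show that ``$\mathcal{G}_{1(I)}$ and $\mathcal{G}_{2(I)}$ are Markov equivalent for all $I\in\mathcal{I}$'' is also equivalent to (C1)$\wedge$(C2)$\wedge$(C3). For ``$\Leftarrow$'': taking $I=\emptyset$ gives $\mathcal{G}_{k(\emptyset)}=\mathcal{G}_k$ and recovers (C1),(C2); and if an edge with exactly one endpoint in some $I\in\mathcal{I}$ were oriented oppositely in $\mathcal{G}_1$ and $\mathcal{G}_2$, it would be deleted in exactly one of $\mathcal{G}_{1(I)},\mathcal{G}_{2(I)}$ and kept in the other, contradicting skeleton equality — hence (C3). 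For ``$\Rightarrow$'': assume (C1),(C2),(C3) and fix $I$. An edge-by-edge check (using (C1) for edges disjoint from $I$, (C3) for edges with exactly one endpoint in $I$, and the fact that every edge into $I$, in particular every edge inside $I$, is deleted on both sides) shows $\mathcal{G}_{1(I)}$ and $\mathcal{G}_{2(I)}$ have the same skeleton; note also that no vertex of $I$ receives an edge in either pruned graph. For v-structures, let $x\to y\leftarrow z$ be a v-structure of $\mathcal{G}_{k(I)}$; then $y\notin I$, so $x\to y$ and $z\to y$ are edges of $\mathcal{G}_k$. If $x$ and $z$ are already non-adjacent in $\mathcal{G}_k$, then (C2) transfers the v-structure to $\mathcal{G}_{3-k}$, and since $y\notin I$ it survives into $\mathcal{G}_{(3-k)(I)}$. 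Otherwise $x$ and $z$ are adjacent in $\mathcal{G}_k$ and that edge, deleted in $\mathcal{G}_{k(I)}$, points into $I$; then (C3) pins down the orientations of the edges $xz$ and $xy$ in $\mathcal{G}_{3-k}$, and when the edge $zy$ has both endpoints outside $I$ — the one case (C3) does not cover — acyclicity of $\mathcal{G}_{3-k}$ forces $z\to y$ there too, since the reverse orientation would close a directed triangle on $\{x,y,z\}$. So the v-structure reappears in $\mathcal{G}_{(3-k)(I)}$, and by symmetry the two v-structure sets agree.

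The main obstacle I anticipate is this last case: a v-structure of $\mathcal{G}_{k(I)}$ need not be a v-structure of $\mathcal{G}_k$ — it can be created by deleting an edge into $I$ — so one cannot merely quote (C2), and must instead track which endpoints of the relevant edges lie in $I$, which orientations (C3) already controls, and exactly where acyclicity must be invoked to control the one remaining edge. Everything else is routine bookkeeping.
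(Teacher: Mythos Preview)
Your proposal is correct, and it proceeds more systematically than the paper's own proof. The paper argues each direction directly by contradiction: assuming the same perfect-$\mathcal{I}$-MEC, it shows that a disagreeing v-structure $\zeta_I\to j\leftarrow k$ in the $\mathcal{I}$-DAGs would force an edge $j\,$--$\,k$ oriented oppositely in $\mathcal{G}_1$ and $\mathcal{G}_2$, contradicting equality of the skeleta of $\mathcal{G}_{1(I)}$ and $\mathcal{G}_{2(I)}$; and conversely, a skeleton disagreement between $\mathcal{G}_{1(I)}$ and $\mathcal{G}_{2(I)}$ would manufacture a disagreeing v-structure at some $\zeta_I$. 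Your route instead factors both equivalences through the explicit intermediate condition (C1)$\wedge$(C2)$\wedge$(C3), which makes the correspondence between $\mathcal{I}$-DAG v-structures at $\zeta_I$ and orientations of edges crossing $I$ completely transparent. The payoff is that you actually verify that $\mathcal{G}_{1(I)}$ and $\mathcal{G}_{2(I)}$ share the same v-structures --- including the delicate case of a v-structure \emph{created} by deleting an edge into $I$, which you handle via the acyclicity argument on the triangle $\{x,y,z\}$ --- whereas the paper's proof only checks that the skeleta of $\mathcal{G}_{1(I)}$ and $\mathcal{G}_{2(I)}$ agree and never addresses this case. One cosmetic slip: your ``$\Leftarrow$'' and ``$\Rightarrow$'' labels in the perfect-$\mathcal{I}$-MEC paragraph are swapped relative to the biconditional you display (the argument under ``$\Leftarrow$'' derives (C1)--(C3) from Markov equivalence of all the $\mathcal{G}_{k(I)}$, which is the $\Rightarrow$ direction), though the arguments themselves are placed correctly.
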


\subsection{Extension to $\emptyset \notin \mathcal{I}$}
\label{sec_no_obs_data}
The identifiability results for perfect-$\mathcal{I}$-MECs by \citet{hauser12} hold for conservative $\mathcal{I}$, while our results for $\mathcal{I}$-MECs requires a stronger assumption, namely that $\emptyset \in \mathcal{I}$ (i.e. observational data is available). While this assumption is not restrictive in practice, it raises the question of whether our results can be extended to conservative sets of targets when $\emptyset\notin\mathcal{I}$. The following example shows that our current graphical characterization of $\mathcal{I}$-MECs (Theorem \ref{the:i-mec}) does not generally hold under this weaker assumption.

\begin{example}
Let $\mathcal{G}$ be the causal DAG $1 \rightarrow 2$ and let $\mathcal{I}=\{\{1\}, \{2\} \}.$ 
The interventional distributions have the factorization $f^{(\{1\})}(X) = f^{(\{1\})}(X_1)f^{(\emptyset)}(X_2|X_1)$ and $f^{(\{2\})}(X) = f^{(\emptyset)}(X_1)f^{(\{2\})}(X_2|X_1)$ respectively, according to Definition \ref{def:soft-int}. Any distributions with this factorization can also be written as $f^{(\{1\})}(X) = g^{(\emptyset)}(X_2)g^{(\{1\})}(X_1|X_2)$ and $f^{(\{2\})}(X) = g^{(\{2\})}(X_2)g^{(\emptyset)}(X_1|X_2)$ for an appropriate choice of $g^{(\emptyset)}$, $g^{(\{1\})}$ and $g^{(\{2\})}$. Thus, $\mathcal{G}_1$ and $\mathcal{G}_2$ belong to the same $\mathcal{I}$-MEC (i.e., $\mathcal{M}_{\mathcal{I}}(\mathcal{G}_1)=\mathcal{M}_{\mathcal{I}}(\mathcal{G}_2)$). But $\mathcal{G}_1^{\mathcal{I}}$ and $\mathcal{G}_2^{\mathcal{I}}$ do not have the same v-structures, contradicting the graphical criterion of Theorem \ref{the:i-mec}.
\end{example}

The following theorem extends our graphical characterization of $\mathcal{I}$-MECs to conservative sets of intervention targets when we don't necessarily have $\emptyset\in\mathcal{I}$. The proof of this result is provided in the Supplementary Material.

\begin{theorem}
\label{th:imec-2}
Let $\mathcal{I} \subset \frak{P}([p])$ be a conservative set of intervention targets. Two causal DAGs $\mathcal{G}_1$ and $\mathcal{G}_2$ belong to the same $\mathcal{I}$-MEC if and only if for all $I\in \mathcal{I}$ the interventional DAGs $\mathcal{G}_1^{\tilde{\mathcal{I}}_I}$ and $\mathcal{G}_2^{\tilde{\mathcal{I}}_I}$ have the same skeletons and v-structures, where 
\[
\tilde{\mathcal{I}}_I := \{\emptyset, \{I \cup J\}_{J \in \mathcal{I}, J \neq I} \}
\]
\end{theorem}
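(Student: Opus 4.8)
\emph{Plan.} The idea is to bootstrap from the already‐settled case $\emptyset\in\mathcal{I}$ (Theorem~\ref{the:i-mec}) by reindexing: for each fixed $J\in\mathcal{I}$, relabel a family $\{f^{(I)}\}_{I\in\mathcal{I}}$ so that $f^{(J)}$ plays the role of the observational distribution. Formally, let $\Phi_J$ send $\{f^{(I)}\}_{I\in\mathcal{I}}$ to the family $\{\tilde f^{(K)}\}_{K\in\tilde{\mathcal{I}}_J}$ indexed by $\tilde{\mathcal{I}}_J=\{\emptyset\}\cup\{I\cup J: I\in\mathcal{I},\ I\neq J\}$ (a multiset in general), with $\tilde f^{(\emptyset)}:=f^{(J)}$ and $\tilde f^{(I\cup J)}:=f^{(I)}$; each $\Phi_J$ is a bijection of families, independent of the DAG. \textbf{Step 1} is to prove the reindexing equivalence: $\{f^{(I)}\}_{I\in\mathcal{I}}\in\mathcal{M}_{\mathcal{I}}(\mathcal{G})$ iff $\Phi_J(\{f^{(I)}\}_{I\in\mathcal{I}})\in\mathcal{M}_{\tilde{\mathcal{I}}_J}(\mathcal{G})$ for every $J\in\mathcal{I}$. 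Both directions are bookkeeping against Definition~\ref{def:M}: ``$\Rightarrow$'' because the invariance of $f^{(I_1)},f^{(I_2)}$ on $X_j\mid X_{\pa_{\mathcal{G}}(j)}$ for all $j\notin I_1\cup I_2$ is stronger than the invariance of $\tilde f^{(I_1\cup J)},\tilde f^{(I_2\cup J)}$ needed for $j\notin I_1\cup I_2\cup J$, and Markovianity is inherited verbatim; ``$\Leftarrow$'' because, to recover the invariance of $f^{(I_1)},f^{(I_2)}$ on $X_j\mid X_{\pa_{\mathcal{G}}(j)}$ for $j\notin I_1\cup I_2$, one applies the hypothesis with $J:=I_1$, whose family has $\tilde f^{(\emptyset)}=f^{(I_1)}$ and $\tilde f^{(I_1\cup I_2)}=f^{(I_2)}$ and is constrained to be invariant exactly on $j\notin I_1\cup I_2$, while Markovianity of each $f^{(I)}$ comes from taking $J:=I$ so that $f^{(I)}=\tilde f^{(\emptyset)}\in\mathcal{M}(\mathcal{G})$. (The multiset subtlety --- $I_1\cup J=I_2\cup J$ with $I_1\neq I_2$ --- is harmless: parallel $\mathcal{I}$-vertices induce identical skeleta and v-structures, and the bookkeeping above is unaffected.)

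\textbf{Step 2} (the ``if'' direction) is then immediate. If $\mathcal{G}_1^{\tilde{\mathcal{I}}_I}$ and $\mathcal{G}_2^{\tilde{\mathcal{I}}_I}$ have the same skeleta and v-structures for every $I\in\mathcal{I}$, then since $\emptyset\in\tilde{\mathcal{I}}_I$, Theorem~\ref{the:i-mec} gives $\mathcal{M}_{\tilde{\mathcal{I}}_I}(\mathcal{G}_1)=\mathcal{M}_{\tilde{\mathcal{I}}_I}(\mathcal{G}_2)$ for all $I$; for any $\{f^{(I)}\}_{I\in\mathcal{I}}\in\mathcal{M}_{\mathcal{I}}(\mathcal{G}_1)$, Step 1 gives $\Phi_J(\{f^{(I)}\})\in\mathcal{M}_{\tilde{\mathcal{I}}_J}(\mathcal{G}_1)=\mathcal{M}_{\tilde{\mathcal{I}}_J}(\mathcal{G}_2)$ for every $J$, and Step 1 applied to $\mathcal{G}_2$ yields $\{f^{(I)}\}_{I\in\mathcal{I}}\in\mathcal{M}_{\mathcal{I}}(\mathcal{G}_2)$; symmetry finishes it.

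\textbf{Step 3} (the ``only if'' direction, by contraposition): suppose $\mathcal{G}_1^{\tilde{\mathcal{I}}_{I_0}}$ and $\mathcal{G}_2^{\tilde{\mathcal{I}}_{I_0}}$ differ in skeleton or v-structures for some $I_0\in\mathcal{I}$; I would exhibit a family in $\mathcal{M}_{\mathcal{I}}(\mathcal{G}_1)\setminus\mathcal{M}_{\mathcal{I}}(\mathcal{G}_2)$, following the two-case structure of the proof of Theorem~\ref{the:i-mec}. \emph{Case (a):} $\mathcal{G}_1,\mathcal{G}_2$ already differ as DAGs; take $f\in\mathcal{M}(\mathcal{G}_1)\setminus\mathcal{M}(\mathcal{G}_2)$ and the constant family $f^{(I)}:=f$, which is in $\mathcal{M}_{\mathcal{I}}(\mathcal{G}_1)$ but not $\mathcal{M}_{\mathcal{I}}(\mathcal{G}_2)$. \emph{Case (b):} $\mathcal{G}_1,\mathcal{G}_2$ share skeleton and v-structures, but some $\zeta_K\to j$ (with $K=I_0\cup L$, $L\in\mathcal{I}\setminus\{I_0\}$, $K\neq\emptyset$) lies in a v-structure $\zeta_K\to j\leftarrow k$ in, say, $\mathcal{G}_1^{\tilde{\mathcal{I}}_{I_0}}$ but not $\mathcal{G}_2^{\tilde{\mathcal{I}}_{I_0}}$; as in Theorem~\ref{the:i-mec}'s proof this forces $k\to j$ in $\mathcal{G}_1$, $j\to k$ in $\mathcal{G}_2$, $j\in K$ and $k\notin K$, hence $k\notin I_0\cup L$ and $j\in\pa_{\mathcal{G}_2}(k)$. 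Since $k\notin I_0\cup L$, every family in $\mathcal{M}_{\mathcal{I}}(\mathcal{G}_2)$ must satisfy $f^{(I_0)}(X_k\mid X_{\pa_{\mathcal{G}_2}(k)})=f^{(L)}(X_k\mid X_{\pa_{\mathcal{G}_2}(k)})$ by Definition~\ref{def:M}, so it suffices to build a family in $\mathcal{M}_{\mathcal{I}}(\mathcal{G}_1)$ violating this. I would do so explicitly: fix $f^{(\emptyset)}$ faithful to $\mathcal{G}_1$ and a perturbed conditional $\hat f_j(X_j\mid X_{\pa_{\mathcal{G}_1}(j)})$, and let $f^{(I)}$ factor along $\mathcal{G}_1$, using $\hat f_j$ for node $j$ precisely when $j\in I$ and $I\neq I_0$, and the $f^{(\emptyset)}$-conditional for every other node/intervention pair --- so only $j$'s mechanism ever varies and $f^{(I_0)}=f^{(\emptyset)}$. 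Such a family lies in $\mathcal{M}_{\mathcal{I}}(\mathcal{G}_1)$ for any $\hat f_j$ (the only invariance constraints are on $X_j\mid X_{\pa_{\mathcal{G}_1}(j)}$ across pairs $I_1,I_2$ with $j\notin I_1\cup I_2$, where both families use the $f^{(\emptyset)}$-conditional); and because $k\to j$ in $\mathcal{G}_1$ with $j\in\pa_{\mathcal{G}_2}(k)$, the conditional $f^{(I)}(X_k\mid X_{\pa_{\mathcal{G}_2}(k)})$ genuinely depends --- by faithfulness of $f^{(\emptyset)}$ --- on $j$'s mechanism, so for generic $f^{(\emptyset)},\hat f_j$ we get $f^{(I_0)}(X_k\mid X_{\pa_{\mathcal{G}_2}(k)})\neq f^{(L)}(X_k\mid X_{\pa_{\mathcal{G}_2}(k)})$. (Alternatively, one can obtain such a witness from Lemma~\ref{lem:i-markov-comp} applied to $\mathcal{G}_1^{\tilde{\mathcal{I}}_{I_0}}$ with $A=\{k\}$, $C=\pa_{\mathcal{G}_2}(k)$, $J=K$ --- the conditioned collider $j$ on $k\to j\leftarrow\zeta_K$ makes $k$ and $\zeta_K$ d-connected --- after reindexing back by $\Phi_{I_0}^{-1}$.)

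\textbf{Main obstacle.} The delicate point is Case (b), specifically keeping the separating family inside $\mathcal{M}_{\mathcal{I}}(\mathcal{G}_1)$: $\Phi_{I_0}$ maps $\mathcal{M}_{\mathcal{I}}(\mathcal{G})$ \emph{into} but not \emph{onto} $\mathcal{M}_{\tilde{\mathcal{I}}_{I_0}}(\mathcal{G})$, because membership in $\mathcal{M}_{\mathcal{I}}(\mathcal{G})$ enforces cross-intervention invariances on the conditionals of nodes in $I_0$ that $\mathcal{M}_{\tilde{\mathcal{I}}_{I_0}}(\mathcal{G})$ drops (they get absorbed once every target is lumped with $I_0$). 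So one cannot blindly pull back an arbitrary witness from the $\emptyset\in\tilde{\mathcal{I}}_{I_0}$ world; the witness must be chosen --- as above, perturbing a single node uniformly across interventions --- so that those extra invariances survive while the forbidden dependence is still realized. Steps 1--2 and Case (a) are routine; everything hinges on this construction.
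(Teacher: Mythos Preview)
Your plan is essentially the paper's: Step~1 is exactly Lemma~\ref{lem:reindex}, and Steps~2--3 mirror the paper's proof of Theorem~\ref{th:imec-2}, which invokes Theorem~\ref{the:i-mec} on each $\tilde{\mathcal{I}}_J$ together with the reindexing lemma. Your diagnosis of the main obstacle---that $\Phi_{I_0}$ is not surjective onto $\mathcal{M}_{\tilde{\mathcal{I}}_{I_0}}(\mathcal{G}_1)$, so a naive pullback of a Lemma~\ref{lem:i-markov-comp} witness need not land in $\mathcal{M}_{\mathcal{I}}(\mathcal{G}_1)$---is exactly right, and the paper's own proof is correspondingly terse at this point (``similar to the proof of Lemma~\ref{lem:i-markov-comp}, one can construct\dots'').

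There is one small gap in your explicit Case~(b) construction. You perturb node~$j$'s conditional to $\hat f_j$ in $f^{(I)}$ precisely when $j\in I$ and $I\neq I_0$. But $j\in K=I_0\cup L$ only guarantees $j\in I_0$ or $j\in L$; in the subcase $j\in I_0\setminus L$, your rule leaves both $f^{(I_0)}$ (excepted because $I=I_0$) and $f^{(L)}$ (because $j\notin L$) with the unperturbed $f^{(\emptyset)}$-conditional at~$j$, whence $f^{(I_0)}=f^{(L)}$ and the desired violation of $f^{(I_0)}(X_k\mid X_{\pa_{\mathcal{G}_2}(k)})\neq f^{(L)}(X_k\mid X_{\pa_{\mathcal{G}_2}(k)})$ cannot occur for any choice of $\hat f_j$. (Concretely: $p=3$, $\mathcal{I}=\{\{1\},\{2\}\}$, $I_0=\{1\}$, $L=\{2\}$, $j=1$, $k=3$, with $3\to 1$ in $\mathcal{G}_1$ and $1\to 3$ in $\mathcal{G}_2$.) The fix is easy and fits your framework: whenever $j\in I_0$, the $j$-conditional of $f^{(I_0)}$ is itself unconstrained by Definition~\ref{def:M} (any invariance involving $I_0$ at node~$j$ requires $j\notin I_0\cup\cdot$, which is vacuous), so in that subcase perturb $f^{(I_0)}$'s $j$-conditional instead; a short case split on which of $I_0,L$ contains $j$ then covers all three possibilities while keeping the family inside $\mathcal{M}_{\mathcal{I}}(\mathcal{G}_1)$.
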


The proof formalizes the following intuition: in the absence of an observational dataset, we can relabel one of the interventional datasets (i.e. from intervening on $I$) as the observational one; or equivalently, we ``pretend" that our datasets are obtained under interventions on $\tilde{\mathcal{I}}_I$ instead of $\mathcal{I}$. Then two DAGs cannot be distinguished under interventions on $\mathcal{I}$ if and only if this also holds for $\tilde{\mathcal{I}}_I$, for all $I \in \mathcal{I}$. Note that if $\emptyset \in \mathcal{I}$, then this statement is equivalent to Theorem~\ref{the:i-mec}. 
Hence the assumption $\emptyset \in\mathcal{I}$ in Section~\ref{sec:main_iden} can be made without loss of generality and our identifiability results extend to all conservative sets of intervention targets.

\begin{figure}
\centering
\subfigure[]{
\includegraphics[scale=0.6]{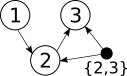}
}
\subfigure[]{
\includegraphics[scale=0.6]{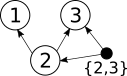}
}
\subfigure[]{
\includegraphics[scale=0.6]{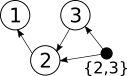}
}
\caption{Example of $\tilde{\mathcal{I}}_{\{2\}}$-DAGs for 3-node graphs with $\mathcal{I} = \{\{2\}, \{3\} \}$. Note that the $\tilde{\mathcal{I}}_{\{3\}}$-DAGs are identical since $\tilde{\mathcal{I}}_{\{2\}} = \tilde{\mathcal{I}}_{\{3\}} = \{\emptyset, \{2,3\} \}$ in this case.}
\label{fig:imec-2}
\end{figure}

\begin{example}
The three DAGs in Figure \ref{fig:imec-2} belong to the same MEC. Given interventions on $\mathcal{I} = \{\{2\}, \{3\} \}$, by Theorem \ref{th:imec-2}, DAG (a) is not in the same $\mathcal{I}$-MEC as DAGs (b-c) due to its v-structure $\zeta_{\{2,3\}} \rightarrow 2 \leftarrow 1$. The intervention improves the identifiability of these structures.
\end{example}

\section{Consistent algorithm for learning $\mathcal{I}$-MECs}
\label{sec_alg}
Having shown that the $\mathcal{I}$-MEC of a causal DAG can be identified from general interventions, we now propose a permutation-based algorithm for learning the $\mathcal{I}$-MEC. The algorithm takes interventional datasets obtained under \emph{general interventions} with \emph{known targets} $\mathcal{I}$ and returns a DAG in the correct $\mathcal{I}$-MEC. 

\subsection{Preliminaries}

\begin{figure}[t]
	\center
	\includegraphics[scale=0.45]{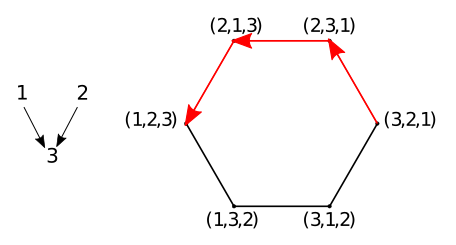}
	\vspace{-5mm}
	\caption{Left: DAG corresponding to permutations (1,2,3) or (2,1,3). Right: Illustration of greedy search over the space of permutations for $p=3$, starting at (3,2,1). The space of permutations is represented by a polytope known as the \emph{permutahedron} in which each node corresponds to a permutation and edges connect neighboring transpositions. A greedy search corresponds to a greedy edge walk (red arrows) over the permutahedron.}
	 \label{fig:permwalk}
\end{figure}

Permutation-based causal inference algorithms search for a permutation $\pi^*$ that is consistent with the topological order of the true causal DAG $\mathcal{G^*}$, i.e. if $(i,j)$ is an edge in $\mathcal{G}^*$ then $i<j$ in $\pi^*$ (Figure \ref{fig:permwalk}, left). Given $\pi^*$, $\mathcal{G}^*$ can then be determined by learning an undirected graph over the nodes and orienting the edges according to the order $\pi^*$.

To find $\pi^*$, one option is to do a greedy search over the space of permutations by tranposing neighboring nodes and optimizing a score function (Figure \ref{fig:permwalk}, right). In \citet{solus17}, the authors propose an algorithm called \emph{Greedy Sparsest Permutations (GSP)} that uses a score function based on CI relations. Specifically, the score of a given permutation $\pi$ is the number of edges in its \emph{minimal I-map} $\mathcal{G}_{\pi} = ([p], E_{\pi})$, which is the sparsest DAG consistent with $\pi$ such that $f^{(\emptyset)}$ is Markov with respect to $\mathcal{G}_{\pi}$. Since the score is only guaranteed to be weakly decreasing on any path from $\pi$ to $\pi^*$, the algorithm iteratively uses a depth-first-search. Additionally, instead of considering all neighboring transpositions of $\pi$ in the search, GSP only transposes neighboring nodes in the permutation that are connected by \emph{covered} edges\footnote{An edge $(i,j)$ in a DAG $\mathcal{G}$ is \emph{covered} if $\pa_{\mathcal{G}}(i)=\pa_{\mathcal{G}}(j)\!\setminus\!\{i\}$.} in $\mathcal{G}_{\pi}$, which improves the efficiency of the algorithm. Under the assumptions of causal sufficiency and faithfulness%
\footnote{\emph{Causal sufficiency} is the assumption that there are no hidden latent confounders, and \emph{faithfulness} implies that all CI relations of the observational distribution $f^{\emptyset}$ are implied by d-separation in $\mathcal{G}$.}%
, GSP is \emph{consistent} in that it returns a permutation $\tau$ where $\mathcal{G}_{\tau}$ is in the same MEC as the true DAG $\mathcal{G}^*$ \cite{solus17, Mohammadi18}. However, GSP does not use data from interventions, so it is not guaranteed to return a DAG in the correct $\mathcal{I}$-MEC.

Perfect-IGSP extends GSP to incorporate data from interventions \cite{wang17}. 
However, the consistency result of perfect-IGSP requires the interventional data to come from perfect interventions. 
This motivates our development of a new algorithm, \emph{IGSP (or general-IGSP)}, which is provably consistent for finding the $\mathcal{I}$-MEC of $\mathcal{G}^*$ when the data come from general interventions.

\subsection{Main Results}
In Algorithm \ref{alg:soft_igsp}, we present \emph{IGSP}, a greedy permutation-based algorithm for recovering the $\mathcal{I}$-MEC of $\mathcal{G}^*$ from $\{f^{(I)}\}_{I \in \mathcal{I}}$ for general interventions with known targets $\mathcal{I}$. 

Similar to GSP, IGSP starts with a permutation $\pi$ and implements depth-first-search to look for a permutation $\tau$ such that $|\mathcal{G}_{\tau}|<|\mathcal{G}_{\pi}|$, where $\mathcal{G}_{\tau}$ and $\mathcal{G}_{\pi}$ are the minimal I-maps of $\tau$ and $\pi$ respectively; and iterates until no such permutation can be found. One difference from GSP is that in each step of the search,
IGSP only transposes neighboring nodes that are connected by \emph{$\mathcal{I}$-covered} edges%
\footnote{Correction from the previous version presented at ICML 2018.}%
 in the corresponding minimal I-map.

\begin{definition}
 A covered edge $i \rightarrow j$ in a DAG $\mathcal{G}$ is \emph{$\mathcal{I}$-covered} if $f^{(\{i\})}(X_j)=f^{(\emptyset)}(X_j)$ when $\{i\} \in \mathcal{I}$.
\end{definition}

The use of $\mathcal{I}$-covered edges restricts the search space and ensures that we do not consider permutations that contradict order relations derived from the intervention experiments. Furthermore, the transposition of neighboring nodes connected by $\mathcal{I}$-covered edges that are also \emph{$\mathcal{I}$-contradictory} edges is prioritized during the search.

\begin{definition}
	\label{def:i-cont-edge}
Let $\neigh_{\mathcal{G}}(i)$ denote the neighbors of node $i$ in a DAG $\mathcal{G}$. An edge $i \rightarrow j$ in $\mathcal{G}$ is \emph{$\mathcal{I}$-contradictory} if at least one of the following two conditions hold: 

(1) There exists a set $S \subset \neigh_{\mathcal{G}}(j) \backslash \{i\}$ such that $f^{(\emptyset)}(X_j|X_S)=f^{(I)}(X_j|X_S)$ for all $I \in \mathcal{I}_{i \backslash j}$ ;

(2) $f^{(\emptyset)}(X_i|X_S) \neq f^{(I)}(X_i|X_S)$ for some $I\in\mathcal{I}_{j \backslash i}$, for all $S \subset \neigh_{\mathcal{G}}(i) \backslash \{j\}$.
\end{definition}

$\mathcal{I}$-contradictory edges are prioritized because they violate the $\mathcal{I}$-Markov property (Definition \ref{def:i-markov}). Thus, a DAG in the correct $\mathcal{I}$-MEC should minimize the number of $\mathcal{I}$-contradictory edges. 
Evaluating whether edges are $\mathcal{I}$-contradictory requires invariance tests that grow with the maximum degree of $\mathcal{G}_{\pi}$. When $\mathcal{I}$ consists of only single-node interventions, a modified definition of $\mathcal{I}$-contradictory edges can be used to reduce the number of tests.

\begin{definition}
Let $\mathcal{I}$ be a set of intervention targets such that $\{i\} \in \mathcal{I}$ or $\{j\} \in \mathcal{I}$. The edge $i \rightarrow j$ is \emph{$\mathcal{I}$-contradictory} if either of the following is true: 

(1) $\{i\} \in \mathcal{I}$ and $f^{\{i\}}(X_j)=f^{\emptyset}(X_j)$; or

(2) $\{j\} \in \mathcal{I}$ and $f^{\{j\}}(X_i)\neq f^{\emptyset}(X_i)$.
\end{definition}	

In the special case where we only have single-node interventions, the number of invariance tests no longer depends on the maximum degree of $\mathcal{G}_{\pi}$ under this simplification.
	
Unlike perfect-IGSP, which is consistent only under perfect interventions, our method is consistent for general interventions under the following two assumptions:

\begin{assumption} \label{ass:1}
Let $I \in \mathcal{I}$ with $i\in I$. Then $f^{(I)}(X_j)\neq f^{(\emptyset)}(X_j)$ for all descendants $j$ of $i$.
\end{assumption}

\begin{assumption} \label{ass:2}
Let $I \in \mathcal{I}$ with $i\in I$. Then $f^{(I)}(X_j|X_S)\neq f^{(\emptyset)}(X_j|X_S)$ for any child $j$ of $i$ such that $j\notin I$ and  for all $S \subset \neigh_{\mathcal{G}^*}(j) \setminus\{i\}$, where $\neigh_{\mathcal{G}^*}(j)$ denotes the neighbors of node $j$ in $\mathcal{G}^*$.
\end{assumption}

Both assumptions are strictly weaker than the faithfulness assumption on the $\mathcal{I}$-DAG. Assumption \ref{ass:1} extends the assumption by \citet{tian01} to interventions on multiple nodes. It essentially requires interventions on upstream nodes to affect downstream nodes. Assumption \ref{ass:2} is similarly intuitive and requires the distribution of $X_j$ to change under an intervention on its parent $X_i$ as long as $X_i$ is not part of the conditioning set. 

The main result of this section is the following theorem, which states the consistency of IGSP.
\begin{theorem} \label{the:consistency}
Algorithm \ref{alg:soft_igsp} is consistent under assumptions \ref{ass:1} and \ref{ass:2}, faithfulness of $f^{(\emptyset)}$ with respect to $\mathcal{G}$, and causal sufficiency. When $\mathcal{I}$ only contains single-variable interventions, assumption \ref{ass:2} is not required for the correctness of the algorithm.
\end{theorem}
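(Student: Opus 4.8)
The plan is to follow the template used to prove consistency of GSP \cite{solus17, Mohammadi18} and perfect-IGSP \cite{wang17}, but with the interventional CI tests there replaced by the $\mathcal{I}$-covered/$\mathcal{I}$-contradictory machinery of Section~\ref{sec_alg}. I work in the oracle regime where the CI and invariance tests are exact, so ``consistent'' means Algorithm~\ref{alg:soft_igsp} outputs a DAG in the $\mathcal{I}$-MEC of $\mathcal{G}^*$. The whole argument is organized around a lexicographic objective on permutations $\pi$: the primary key is $|\mathcal{G}_\pi|$, the number of edges of the minimal I-map $\mathcal{G}_\pi$; the secondary key is the number of $\mathcal{I}$-contradictory edges of $\mathcal{G}_\pi$. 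The proof then splits into (A) a \emph{characterization} step --- the permutations minimizing this objective are exactly those whose minimal I-map lies in the $\mathcal{I}$-MEC of $\mathcal{G}^*$ --- and (B) a \emph{reachability} step --- the greedy depth-first search of Algorithm~\ref{alg:soft_igsp}, which only transposes $\mathcal{I}$-covered edges and prioritizes $\mathcal{I}$-contradictory ones, is guaranteed to reach such a minimizer from any starting permutation.

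For step (A): by faithfulness of $f^{(\emptyset)}$ and causal sufficiency, the standard GSP argument gives that $|\mathcal{G}_\pi|$ is minimized exactly when $\mathcal{G}_\pi$ is Markov equivalent to $\mathcal{G}^*$ \cite{solus17}. It then suffices to show that among these Markov-equivalent minimal I-maps, those with no $\mathcal{I}$-contradictory edge are precisely the ones in the $\mathcal{I}$-MEC of $\mathcal{G}^*$. Here I invoke Theorem~\ref{the:i-mec} (and Theorem~\ref{th:imec-2} when $\emptyset\notin\mathcal{I}$): since $\mathcal{G}_\pi$ and $\mathcal{G}^*$ already share a skeleton and the same v-structures among the regular vertices, their $\mathcal{I}$-DAGs agree iff they induce the same v-structures of the form $\zeta_I\to j\leftarrow k$, i.e.\ iff every edge $\{j,k\}$ cut by some target $I\in\mathcal{I}$ (exactly one of $j,k$ in $I$) has the same orientation in $\mathcal{G}_\pi$ as in $\mathcal{G}^*$. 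Using Assumptions~\ref{ass:1} and~\ref{ass:2} together with faithfulness, I would show such an edge is oriented the wrong way in $\mathcal{G}_\pi$ iff it is $\mathcal{I}$-contradictory in the sense of Definition~\ref{def:i-cont-edge} --- Assumption~\ref{ass:1} supplies conditions (1)/(2) via the descendant-shift property, and Assumption~\ref{ass:2} rules out spurious $\mathcal{I}$-contradictory edges arising from conditioning on a child --- and that $\mathcal{G}^*$ itself has no $\mathcal{I}$-contradictory edges. When $\mathcal{I}$ contains only single-variable interventions, the same equivalence goes through with the simplified definition of $\mathcal{I}$-contradictory edges and only Assumption~\ref{ass:1}, since the relevant case of Assumption~\ref{ass:2} becomes vacuous; this is where the last sentence of the theorem comes from.

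For step (B): I would show that from any $\pi$ there is a DFS-admissible path along which $|\mathcal{G}_\pi|$ is weakly decreasing, ending at a lexicographic minimizer. The primary part --- reaching the correct MEC by covered-edge transpositions while weakly decreasing $|\mathcal{G}_\pi|$ --- is the GSP reachability lemma of \cite{solus17, Mohammadi18}. The additional content is (i) restricting transpositions to \emph{$\mathcal{I}$-covered} edges does not disconnect this search: by Assumption~\ref{ass:1}, reversing a covered-but-not-$\mathcal{I}$-covered edge would force an orientation contradicting the intervention data, so no such transposition can lie on a path to a correct minimizer and nothing is lost by excluding it; and (ii) once inside the MEC of $\mathcal{G}^*$, any DAG not in the correct $\mathcal{I}$-MEC admits a covered $\mathcal{I}$-contradictory edge whose reversal (a Markov-equivalence-preserving move) strictly decreases the number of interventionally-forced edges on which the DAG disagrees with $\mathcal{G}^*$ --- the potential from step (A) --- so finitely many such prioritized moves reach the $\mathcal{I}$-MEC. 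Combining (A) and (B), the DFS cannot terminate before reaching a lexicographic minimizer, and every lexicographic minimizer lies in the $\mathcal{I}$-MEC of $\mathcal{G}^*$; since $\mathcal{G}^*$ itself is a minimizer, the algorithm is also at rest there.

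The main obstacle I anticipate is the structural lemma behind (B)(ii): showing that whenever $\mathcal{G}$ is Markov equivalent to $\mathcal{G}^*$ but has a misoriented interventionally-forced edge, there is in fact a \emph{covered} edge that is $\mathcal{I}$-contradictory, is \emph{$\mathcal{I}$-covered} (so the algorithm is permitted to take it), and whose reversal genuinely decreases the disagreement potential rather than merely relocating the problem. This is the analogue of the Chickering/Wang--et--al.\ chain-of-covered-edge-reversals argument, and the delicate point is verifying monotonicity of the potential --- that reversing one covered $\mathcal{I}$-contradictory edge does not create new misoriented forced edges elsewhere --- which is exactly what Assumptions~\ref{ass:1} and~\ref{ass:2} certify through the $\mathcal{I}$-Markov property (Definition~\ref{def:i-markov}). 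Everything else reduces to bookkeeping with d-separation in the $\mathcal{I}$-DAG and the GSP machinery.
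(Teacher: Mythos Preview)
Your overall architecture---characterize the lexicographic minimizers (A), then establish reachability by the DFS (B)---matches the paper's, and you have correctly identified that the crux is the structural lemma behind (B)(ii). However, your argument for (B)(i) has a real gap.

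You write that ``by Assumption~\ref{ass:1}, reversing a covered-but-not-$\mathcal{I}$-covered edge would force an orientation contradicting the intervention data, so no such transposition can lie on a path to a correct minimizer and nothing is lost by excluding it.'' Two things are wrong here. First, Assumption~\ref{ass:1} is the wrong tool: it says descendants \emph{do} change under intervention, whereas what you need is that if $f^{(\{i\})}(X_j)\neq f^{(\emptyset)}(X_j)$ then $j$ \emph{is} a descendant of $i$ in $\mathcal{G}^*$---and that follows from the $\mathcal{I}$-Markov property (Proposition~\ref{lem:i-markov}), not from Assumption~\ref{ass:1}. Second, and more importantly, even granting that a non-$\mathcal{I}$-covered edge $i\to j$ is correctly oriented relative to $\mathcal{G}^*$, this does \emph{not} imply that no Chickering sequence from $\mathcal{G}_\pi$ to $\mathcal{G}^*$ ever reverses it. Chickering sequences can, and generically do, reverse correctly-oriented edges as intermediate moves. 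So your claim that restricting to $\mathcal{I}$-covered transpositions does not disconnect the search is unproven by this reasoning.

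The paper closes this gap constructively: it builds a \emph{specific} Chickering sequence (using the sink-fixing refinement of \cite{solus17}) with the additional property that whenever $i\to j$ is reversed along the sequence, there is no directed path $i\rightsquigarrow j$ in $\mathcal{G}^*$ (this is the paper's Lemma~\ref{lem:directed_path}). With that property in hand, the $\mathcal{I}$-Markov property immediately makes every reversal $\mathcal{I}$-covered (Lemma~\ref{lem:i-cov}), and since $j\to i$ actually lies in $\mathcal{G}^*$ whenever the reversed edge is cut by an intervention, the same sequence yields the $\mathcal{I}$-contradictory reversals needed for (B)(ii) (Lemma~\ref{lem:i-cont}), with Assumptions~\ref{ass:1} and~\ref{ass:2} entering only at that stage. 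In short: the missing idea in your proposal is this ``no-ancestor-reversal'' Chickering sequence; once you have it, both (B)(i) and (B)(ii) fall out simultaneously without needing a separate potential-monotonicity argument.
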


\begin{algorithm}[tb]
   \caption{IGSP for general interventions}
   \label{alg:soft_igsp}
\begin{algorithmic}
   \STATE {\bfseries Input:} A collection of intervention targets $\mathcal{I}$ with $\emptyset\in\mathcal{I}$, samples from distributions $\{{f}^{(I)}\}_{I \in \mathcal{I}}$, and a starting permutation~$\pi_0$.
   \STATE {\bfseries Output:} A permutation $\tau$ and associated I-map $\mathcal{G}_\tau$
   \STATE Set $\pi = \pi_0$, $\mathcal{G_{\pi}} :=$ minimal I-map of $\pi$.
   \REPEAT 
   \STATE Using a depth-first-search with root $\pi$, search for a permutation $\tau$ with minimal I-map $\mathcal{G}_\tau$ such that $\vert \mathcal{G}_{\pi} \vert > \vert \mathcal{G}_\tau \vert$ that is connected to $\mathcal{G}_{\pi}$ by a sequence of $\mathcal{I}$-covered edge reversals, with priority given to $\mathcal{I}$-contradictory edge reversals. If $\tau$ exists, set $\pi = \tau$, $\mathcal{G_{\pi}} =\mathcal{G_{\tau}}$.
   \UNTIL{No such $\tau$ can be found.}
   \STATE Return the permutation $\tau$ and the associated I-map $\mathcal{G}_\tau$ with $\vert \mathcal{G}_{\tau} \vert = \vert \mathcal{G}_{\pi} \vert$ that minimizes the number of $\mathcal{I}$-contradicting edges.
   
\end{algorithmic}
\end{algorithm}

\subsection{Implementation of Algorithm~\ref{alg:soft_igsp}} \label{sec:pool}

\textbf{Testing for invariance:} To test whether a (conditional) distribution $f^{(I)}(X_i | X_S)$ is invariant, we used a method proposed by \citet{heinze17} that we found to work well in practice. Briefly, we test whether $X_i$ is independent of the \emph{index} of the interventional dataset given $X_S$, using the 
HSIC gamma test \cite{gretton2005kernel}.

\textbf{Data pooling for CI testing:} 
Let $\an_{\mathcal{G}_\pi}(i)$ denote the ancestors of node $i$ in $\mathcal{G}_\pi$. After reversing an $\mathcal{I}$-covered edge $(i,j)$, updating $\mathcal{G}_\pi$ requires testing if $X_i \indep X_k \mid X_{\an_{\mathcal{G}_\pi}(i) \setminus \{k\}}$ for $k \in \pa_{\mathcal{G}_\pi}(i)$ under the observational distribution $f^{(\emptyset)}$. By combining the interventional data with the observational data in a provably correct manner, we can increase the power of the CI tests, which is useful when the sample sizes are limited.
In the Supplementary Material, we present a proposition giving sufficient conditions under which CI relations hold when the data come from a mixture of interventional distributions, and use this to derive a set of checkable conditions on $\mathcal{G}_{\pi}$ for determining which datasets can be combined to test $X_i \indep X_k \mid X_{\an_{\mathcal{G}_\pi}(i) \setminus \{k\}}$ for $k \in \pa_{\mathcal{G}_\pi}(i)$.

\section{Empirical Results}
\label{sec_res}
\subsection{Experiments on simulated datasets}

\begin{figure*}[t] 
\center
\subfigure[]{
\includegraphics[scale=0.3]{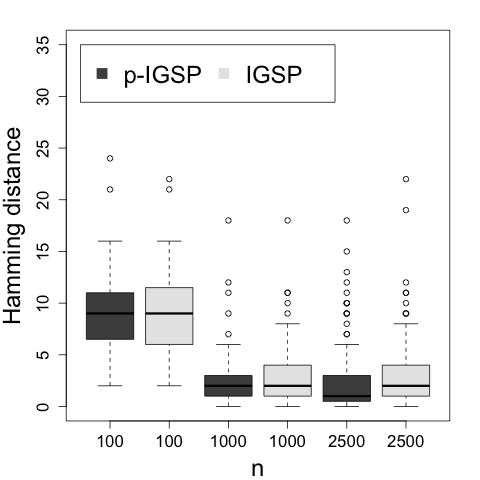}
}
\subfigure[]{
\includegraphics[scale=0.3]{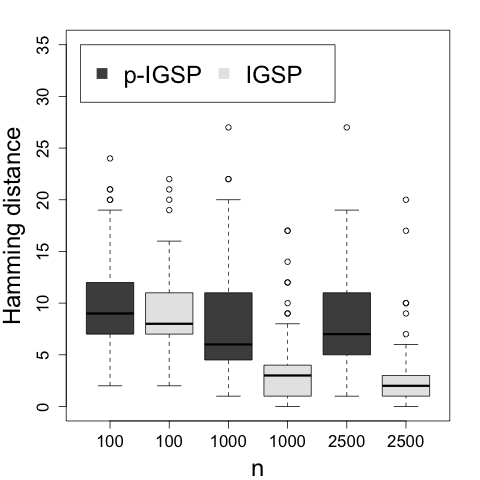}
}
\subfigure[]{
\includegraphics[scale=0.3]{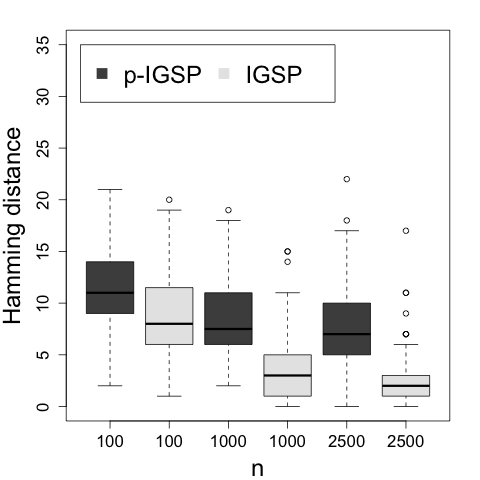}
}

\caption{Distributions of Hamming distances of recovered DAGs using IGSP and perfect-IGSP (p-IGSP) for 20-node graphs with single-variable (a) perfect, (b) imperfect, and (c) inhibitory interventions}
\label{fig:sim_igsp}
\end{figure*}

{\bf IGSP vs perfect-IGSP:} We compared Algorithm~\ref{alg:soft_igsp} to perfect-IGSP on the task of recovering the correct $\mathcal{I}$-MEC under three types of interventions: perfect, \emph{inhibiting}, and \emph{imperfect}. By an \emph{inhibiting} intervention, we mean an intervention that reduces the effect of the parents of the target node. This simulates a biological intervention such as a small-molecule inhibitor with a modest effect. By an \emph{imperfect} intervention, we mean an intervention that is perfect with probability $\alpha$ and ineffective with probability $1-\alpha$ for some $\alpha \in (0,1)$. This simulates biological experiments such as gene deletions that might not work in all cells. 

For each simulation, we sampled $100$ DAGs from an Erd\"os-Renyi random graph model with an average neighborhood size of $1.5$ and $p \in \{10, 20\}$ nodes. The data for each causal DAG $\mathcal{G}$ was generated using a linear structural equation model with independent Gaussian noise: $X = AX + \epsilon$, where $A$ is an upper-triangular matrix with edge weights $A_{ij} \neq 0$ if and only if $i \rightarrow j$, and $\epsilon \sim \mathcal{N}(0,\textrm{I}_p)$. For $A_{ij} \neq 0$, the edge weights were sampled uniformly from $[-1, -0.25] \cup [0.25, 1]$. We simulated perfect interventions on $i$ by setting the column $A_{,i} = 0$; inhibiting interventions by decreasing $A_{,i}$ by a factor of $10$; and imperfect interventions with a success rate of $\alpha=0.5$. Interventions were performed on all single-variable targets or all pairs of multiple-variable targets to maximally illuminate the difference between IGSP and perfect-IGSP.

Figure \ref{fig:sim_igsp} shows that IGSP outperforms perfect-IGSP on data from inhibiting and imperfect interventions and that the algorithms perform comparably on data from  perfect interventions (see also the Supplementary Material for further figures). 
These empirical comparisons corroborate our theoretical results that {IGSP is consistent for general types of interventions}, while perfect-IGSP is only consistent for perfect interventions. Consistency for general interventions is particularly important for applications to genomics, where it is usually not known a priori whether an intervention will be perfect; these results suggest we can use IGSP regardless of the type of intervention.

\begin{figure*}[t] 
\center
\subfigure[]{
\includegraphics[scale=0.3]{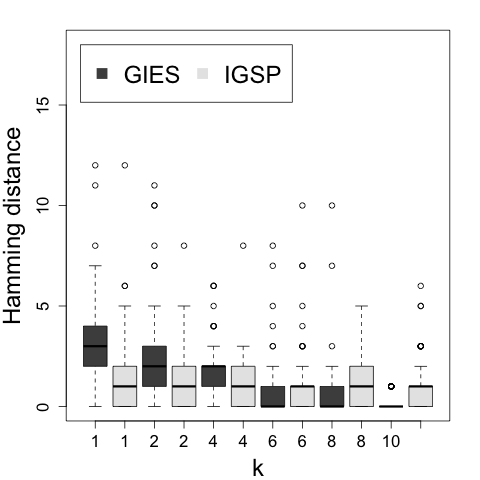}
}
\subfigure[]{
\includegraphics[scale=0.3]{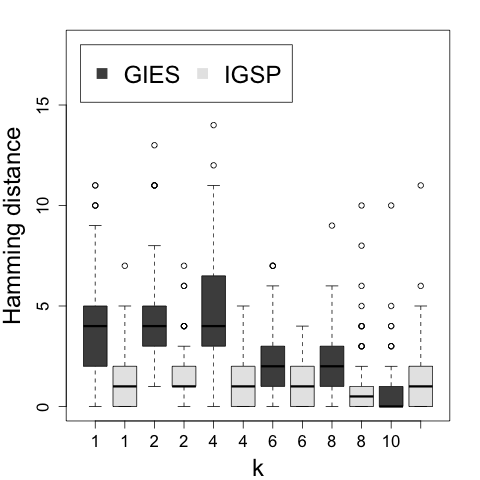}
}
\subfigure[]{
\includegraphics[scale=0.3]{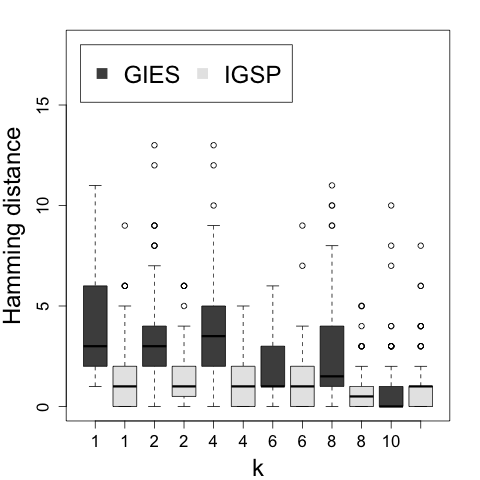}
}
\caption{Distributions of Hamming distances of recovered DAGs using IGSP and GIES for 10-node graphs with average edge density of 1.5 and single-node (a) perfect, (b) imperfect, and (c) inhibitory interventions on $k$ nodes}
\label{fig:sim_gies}
\end{figure*}

{\bf IGSP vs GIES:} GIES is an extension of the score-based causal inference algorithm, \emph{Greedy Equivalence Search (GES)}, to the interventional setting. Its score function incorporates the log-likelihood of the data based on the interventional distribution of Equation (\ref{eq:int-factor}), making it appropriate for learning DAGs under general interventions. Although GIES is not consistent in general \cite{wang17}, it has performed well in previous empirical studies \cite{hauser12,hauser15}. Additionally, both IGSP and GIES assume causal sufficiency and output DAGs, 
while the other methods mentioned in Section \ref{sec:related-work} do not output a DAG or use different assumptions. We therefore used GIES as a baseline for comparison. 

We evaluated IGSP and GIES on learning DAGs from different types of interventions, varying the number of interventional datasets ($|\mathcal{I}| = k \in \{1,2,4,6,8,10 \}$). The synthetic data was otherwise generated as described above. Figure \ref{fig:sim_gies} shows that IGSP in general significantly outperforms GIES. However, GIES performs better when the number of interventional datasets is large, i.e.~for $|\mathcal{I}| = 10$. This performance increase can be credited to the GIES score function which efficiently pools the interventional datasets.

\begin{figure*}[t]
	\label{fig:real_data}
	\subfigure[]{\label{fig:protein_skeleton}
		\includegraphics[scale=0.3]{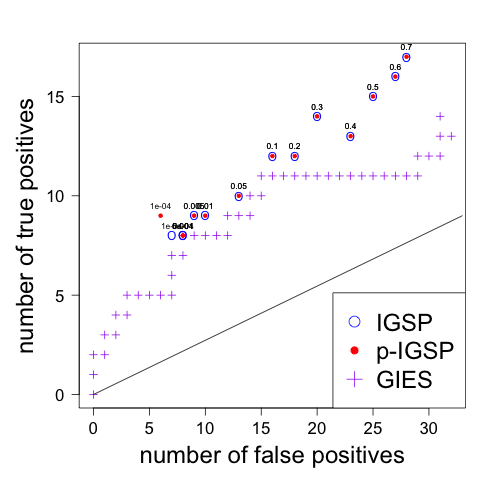}
	}
	\subfigure[]{\label{fig:protein_dag}
		\includegraphics[scale=0.3]{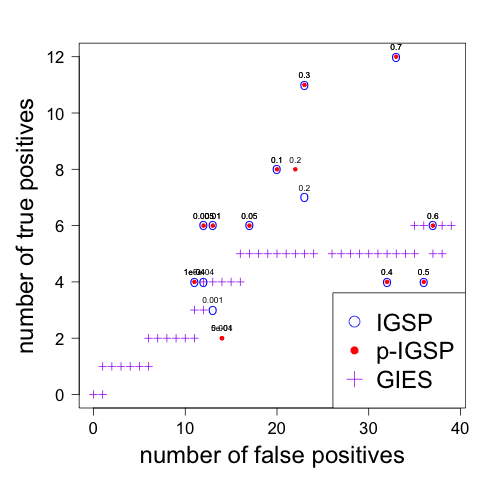}
	}
	\subfigure[]{\label{fig:perturb_seq}
		\includegraphics[scale=0.3]{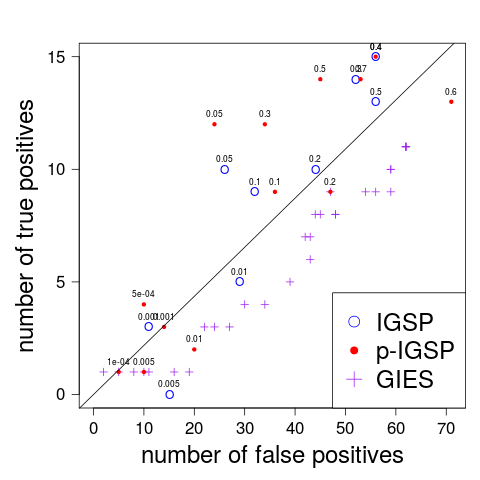}
	}\

	\caption{ROC plots evaluating IGSP, perfect-IGSP (p-IGSP) and GIES on learning the (a) skeleton and (b) DAG of the protein network from \citet{sachs05} and on (c) predicting the causal effects of interventions on a gene network from \cite{dixit16}}
\end{figure*}

\subsection{Experiments on Biological Datasets}
{\bf Protein Expression Dataset:} We evaluated our algorithm on the task of learning a protein network from a protein mass spectroscopy dataset \cite{sachs05}. The processed dataset consists of 5846 measurements of phosphoprotein and phospholipid levels from primary human immune system cells. Interventions on the network were perfect interventions corresponding to chemical reagents that strongly inhibit or activate certain signaling proteins. Figures \ref{fig:protein_skeleton} and \ref{fig:protein_dag} illustrate the ROC curves of IGSP, perfect-IGSP \cite{wang17} and GIES \cite{hauser15} on learning the skeleton and DAG of the ground-truth network respectively. 
We found that IGSP and perfect-IGSP performed comparably well on this dataset, which is consistent with our theoretical results. As expected, both IGSP and perfect-IGSP outperform GIES at recovering the true DAG,
since the former two algorithms have consistency guarantees in this regime while GIES does not.

{\bf Gene Expression Dataset:} We also evaluated IGSP on a single-cell gene expression dataset \cite{dixit16}. The processed dataset contains 992 observational and 13,435 interventional measurements of gene expression from bone marrow-derived dendritic cells. There are eight interventions in total, each corresponding to a targeted gene deletion using the CRISPR/Cas9 system. Since this dataset introduced the perturb-seq technique and was meant as a demonstration, we expected the interventions to be of high-quality and close to perfect.
We applied IGSP, perfect-IGSP, and GIES to learn causal DAGs over 24 transcription factors that modulate each other and play a critical role in regulating downstream genes. Since the ground-truth DAG is not available, we evaluated each learned DAG on its accuracy in predicting the effect of an intervention that was left out during inference, as described by \citet{wang17}. 
Figure \ref{fig:perturb_seq} shows that IGSP is competitive with perfect-IGSP, which suggests that the gene deletion interventions were close to perfect. Once again, both IGSP and perfect-IGSP outperform GIES on this dataset. 

\section{Discussion}

In this paper, we studied $\mathcal{I}$-MECs, the equivalence classes of causal DAGs that can be identified from a set of general (not necessarily perfect) intervention experiments. In particular, we provided a graphical characterization of $\mathcal{I}$-MECs and proved a conjecture of \citet{hauser12} showing that $\mathcal{I}$-MECs are equivalent to perfect-$\mathcal{I}$-MECs under basic assumptions. This result has important practical consequences, since it implies that general interventions provide similar causal information as perfect interventions despite being less invasive. An interesting problem for future research is to extend these identifiability results to the setting where the intervention targets are unknown. Such results would have wide-ranging implications, such as in genomics, where the interventions can have off-target effects. 

We also propose the first provably consistent algorithm, IGSP, for learning the $\mathcal{I}$-MEC from observational and general interventional data and apply it to protein and gene perturbation experiments. IGSP extends perfect-IGSP \cite{wang17}, which is only consistent for perfect interventions. 
In agreement with the theory, IGSP outperforms perfect-IGSP on data from non-perfect interventions and is competitive with perfect-IGSP on data from perfect interventions, 
thereby demonstrating the flexibility of IGSP to learn from different types of interventions. A challenge for future research is to scale algorithms like IGSP up to thousands of nodes, which would allow learning the entire gene network of a cell. The main bottleneck for scaling IGSP and an important area for future research is the development of accurate and fast conditional independence tests that can be applied under general distributional assumptions.

\pagebreak

\section*{Acknowledgements}
Karren D. Yang was partially supported by an NSF Graduate Fellowship.
Caroline Uhler was partially supported by NSF (DMS-1651995), ONR (N00014-17-1-2147), and a Sloan Fellowship.
\bibliographystyle{icml2018}
\bibliography{icml_2018}

\thispagestyle{empty}
\onecolumn
\appendix
\allowdisplaybreaks[1]

\section{Proofs from Section \ref{sec_ident}}
\subsection{Proofs from Section \ref{sec:main_iden}}

The following lemma formalizes the claim that $\mathcal{M}_{\mathcal{I}}(\mathcal{G})$ as given in Definition \ref{def:M} contains exactly the sets of interventional distributions that can be generated from a causal model with DAG $\mathcal{G}$ by intervening on $\mathcal{I}$.

\begin{lemma}\label{lem:M}
$\{f^{(I)}\}_{I \in \mathcal{I}} \in \mathcal{M}_{\mathcal{I}}(\mathcal{G})$ if and only if there exists $f^{(\emptyset)} \in \mathcal{M}(\mathcal{G})$ such that $\forall I \in \mathcal{I}, ~f^{(I)}$ factorizes according to Equation (\ref{eq:int-factor}) in Definition \ref{def:soft-int}.
\end{lemma}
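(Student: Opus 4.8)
The plan is to derive both implications from the Markov factorization theorem for strictly positive densities: $f \in \mathcal{M}(\mathcal{G})$ if and only if $f(X) = \prod_i g_i(X_i \mid X_{\pa_{\mathcal{G}}(i)})$ for some conditional densities $g_i$, and in that case necessarily $g_i(X_i \mid X_{\pa_{\mathcal{G}}(i)}) = f(X_i \mid X_{\pa_{\mathcal{G}}(i)})$ for every $i$ \cite{lauritzen96}. This lets me pass freely between ``$f$ admits a DAG factorization with prescribed factors'' and ``those factors are the honest conditionals of $f$,'' which is the only nontrivial ingredient.

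For the ``if'' direction I would start from a given $f^{(\emptyset)} \in \mathcal{M}(\mathcal{G})$ and a collection $\{f^{(I)}\}_{I \in \mathcal{I}}$ with each $f^{(I)}$ factorizing as in (\ref{eq:int-factor}). The right-hand side of (\ref{eq:int-factor}) is, over all $p$ nodes, a product of conditional densities — the $j \notin I$ factors being conditionals of $f^{(\emptyset)} \in \mathcal{M}(\mathcal{G})$, the $i \in I$ factors being valid conditional densities of $X_i$ given $X_{\pa_{\mathcal{G}}(i)}$ — so the factorization theorem gives $f^{(I)} \in \mathcal{M}(\mathcal{G})$ and moreover $f^{(I)}(X_j \mid X_{\pa_{\mathcal{G}}(j)}) = f^{(\emptyset)}(X_j \mid X_{\pa_{\mathcal{G}}(j)})$ for all $j \notin I$. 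Then for $I, J \in \mathcal{I}$ and $j \notin I \cup J$, both $f^{(I)}$ and $f^{(J)}$ have $j$-th conditional equal to $f^{(\emptyset)}(X_j \mid X_{\pa_{\mathcal{G}}(j)})$, hence agree; this is precisely the invariance condition in Definition \ref{def:M}, so $\{f^{(I)}\}_{I \in \mathcal{I}} \in \mathcal{M}_{\mathcal{I}}(\mathcal{G})$.

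For the ``only if'' direction I would build $f^{(\emptyset)}$ node by node. For each node $j$ outside some target, i.e. $j \notin I$ for some $I \in \mathcal{I}$, set $f^{(\emptyset)}(X_j \mid X_{\pa_{\mathcal{G}}(j)}) := f^{(I)}(X_j \mid X_{\pa_{\mathcal{G}}(j)})$ for any such $I$; this is unambiguous because if $j \notin I$ and $j \notin J$ then $j \notin I \cup J$ and Definition \ref{def:M} forces $f^{(I)}(X_j \mid X_{\pa_{\mathcal{G}}(j)}) = f^{(J)}(X_j \mid X_{\pa_{\mathcal{G}}(j)})$. For a node $j$ contained in every target, pick any strictly positive conditional density (say that of a fixed $f^{(I_0)}$). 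Taking $f^{(\emptyset)}(X) := \prod_j f^{(\emptyset)}(X_j \mid X_{\pa_{\mathcal{G}}(j)})$ yields a strictly positive density lying in $\mathcal{M}(\mathcal{G})$. To finish, fix $I \in \mathcal{I}$; since $f^{(I)} \in \mathcal{M}(\mathcal{G})$ write $f^{(I)}(X) = \prod_k f^{(I)}(X_k \mid X_{\pa_{\mathcal{G}}(k)})$, and for $k \notin I$ the construction together with the invariance clause gives $f^{(I)}(X_k \mid X_{\pa_{\mathcal{G}}(k)}) = f^{(\emptyset)}(X_k \mid X_{\pa_{\mathcal{G}}(k)})$, which upon substitution is exactly (\ref{eq:int-factor}).

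The factorization-theorem bookkeeping is routine; the one point requiring care is the well-definedness of $f^{(\emptyset)}$ in the ``only if'' direction — namely that the locally chosen $j$-th conditionals are mutually consistent across the different $I \in \mathcal{I}$ that omit $j$, which is exactly what the invariance clause of Definition \ref{def:M} supplies, and that nodes lying in every target cause no trouble because their conditionals never appear on the right-hand side of (\ref{eq:int-factor}). As a sanity check, when $\emptyset \in \mathcal{I}$ the construction with $I = \emptyset$ shows the constructed $f^{(\emptyset)}$ coincides with the observational member already present in the collection.
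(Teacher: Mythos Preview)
Your proof is correct and follows essentially the same route as the paper's: for the ``if'' direction both you and the paper read off $f^{(I)}\in\mathcal{M}(\mathcal{G})$ from the factorization and chain the $j$-th conditionals through $f^{(\emptyset)}$ to get the invariance clause; for the ``only if'' direction both build $f^{(\emptyset)}$ node by node by picking, for each $j$, the $j$-th conditional of any $f^{(I_j)}$ with $j\notin I_j$ (arbitrary strictly positive conditional when no such $I_j$ exists), and then verify (\ref{eq:int-factor}) by substituting via the invariance condition. You are slightly more explicit than the paper in invoking the Markov factorization theorem to identify factors with honest conditionals and in checking that the choice of $I_j$ is immaterial, but these are exactly the points the paper takes for granted, so the arguments coincide.
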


\begin{proof}
Suppose there exists $f^{(\emptyset)} \in \mathcal{M}(\mathcal{G})$ such that $\forall I \in \mathcal{I}, ~f^{(I)}$ factorizes according to Equation (\ref{eq:int-factor}) in Definition \ref{def:soft-int}. Then $f^{(I)} \in \mathcal{M}(\mathcal{G})$ is trivially satisfied for all $I \in \mathcal{I}$. Also, we have $f^{(I)}(X_j|X_{\pa_{\mathcal{G}}(j)})=f^{(\emptyset)}(X_j|X_{\pa_{\mathcal{G}}(j)})~ \forall j \notin I$ and $I \in \mathcal{I}$. It follows that $f^{(\emptyset)}(X_j|X_{\pa_{\mathcal{G}}(j)})=f^{(I)}(X_j|X_{\pa_{\mathcal{G}}(j)})=f^{(J)}(X_j|X_{\pa_{\mathcal{G}}(j)}), ~ \forall j \notin I \cup J$ and all $I, J \in \mathcal{I}$. Therefore, $\{f^{(I)}\}_{I \in \mathcal{I}} \in \mathcal{M}_{\mathcal{I}}(\mathcal{G})$.

Conversely, suppose $\{f^{(I)}\}_{I \in \mathcal{I}} \in \mathcal{M}_{\mathcal{I}}(\mathcal{G})$. We will prove that there exists $f^{(\emptyset)} \in \mathcal{M}(\mathcal{G})$ such that $\forall I \in \mathcal{I}, ~f^{(I)}$ factorizes according to Equation (\ref{eq:int-factor}. Since $f^{(\emptyset)} \in \mathcal{M}(\mathcal{G})$, $f^{(\emptyset)}$ must factorize as
$f^{(\emptyset)}(X) =  \prod_{j \in [p]} f^{(\emptyset)}(X_j|X_{\pa_{\mathcal{G}}(j)})$
For each $j \in [p]$, let 
$f^{(\emptyset)}(X_j|X_{\pa_{\mathcal{G}}(j)}) = f^{(I_j)}(X_j|X_{\pa_{\mathcal{G}}(j)})$ for some $I_j \in \mathcal{I}$ s.t. $j \notin I$. 
If such a choice of $I_j$ does not exist, then let $f^{(\emptyset)}(X_j|X_{\pa_{\mathcal{G}}(j)})$ be an arbitrary strictly positive density.
Then note that for any $I \in \mathcal{I}$, we have
\[
\begin{split}
f^{(I)}(X) &= \prod_{j \in [p]} f^{(I)}(X_j|X_{\pa_{\mathcal{G}}(j)})\\
&= \prod_{i \in I} f^{(I)}(X_i|X_{\pa_{\mathcal{G}}(j)}) 
\prod_{j \notin I} f^{(I_j)}(X_j|X_{\pa_{\mathcal{G}}(j)})\\
&= \prod_{i \in I} f^{(I)}(X_i|X_{\pa_{\mathcal{G}}(j)}) 
\prod_{j \notin I} f^{(\emptyset)}(X_j|X_{\pa_{\mathcal{G}}(j)}),
\end{split}
\]
which completes the proof.
\end{proof}

\begin{proof} [Proof of Proposition \ref{lem:i-markov}]
To prove the ``if" direction, choose any $I \in \mathcal{I}$ and use the chain rule to factorize $f^{(I)}$ according to a topological ordering $\pi$ consistent with $\mathcal{G}$. Specifically, if we let $a_{\pi}(i)$ denote the nodes that precede $i$ in this ordering, then $f^{(I)}(X) = \prod_{i} f^{(I)}(X_i | X_{a_{\pi}(i)})$. Since every node is d-separated from its non-descendants given its parents, using condition (1) of the $\mathcal{I}$-Markov property, we can reduce the factorizations to $f^{(I)}(X) = \prod_{i} f^{(I)}(X_i | X_{pa(i)})$. Furthermore, since any node $i \notin I$ is d-separated from $I$ given its parents, using condition ($2$) of the $\mathcal{I}$-Markov property, we can substitute the interventional conditional distributions with the observational ones, resulting in $f^{I}(X) = \prod_{i \in I} f^{I}(X_i | X_{pa(i)}) \prod_{i \notin I} f^{\emptyset}(X_i | X_{pa(i)})$. Since this factorization holds for every $I \in \mathcal{I}$, $\{f^{I}\}_{I \in \mathcal{I}} \in \mathcal{M}_{\mathcal{I}}(\mathcal{G})$ by Lemma \ref{lem:M}.

To prove the ``only if" part of the statement, suppose $\{f^{I}\}_{I \in \mathcal{I}} \in \mathcal{M}_{\mathcal{I}}(\mathcal{G})$. By Lemma \ref{lem:M}, $f^{(I)}$ factorizes according to Equation (\ref{eq:int-factor}) and satisfies the Markov property with respect to $\mathcal{G}$ for all $I \in \mathcal{I}$. It follows that $f^{(I)}$ must also satisfy the Markov property based on d-separation with respect to $\mathcal{G}$ \cite{verma90}. Therefore, condition (1) of the $\mathcal{I}$-Markov property is satisfied.

To prove the second condition, choose any disjoint $A, C \subset [p]$ and any $I \in \mathcal{I}$, and suppose $C \cup \zeta_{\mathcal{I} \backslash \{I\}}$ d-separates $A$ from $\{\zeta_{I}\}$ in $\mathcal{G}^{\mathcal{I}}$. {Let $V_{An}$ be the ancestral set of $A$ and $C$ with respect to $\mathcal{G} = (V, E)$. Let $B' \subset V_{An}$ contain all nodes in $V_{An}$ that are d-connected to $\{\zeta_{I}\}$ in $\mathcal{G}^{\mathcal{I}}$ given $C \cup \zeta_{\mathcal{I} \backslash \{I\}}$, and let $A' = V_{An} \backslash (B' \cup C)$.} Since by Lemma \ref{lem:M}, $f^{(I)}$ factorizes over $\mathcal{G}$ according to Equation (\ref{eq:int-factor}) for every $I \in \mathcal{I}$, then choosing $\hat I \in \{\emptyset, I\}$ yields

\begin{align*}
f^{(\hat I)}(X) &= f^{(\hat I)}(X_{A'}, X_{B'}, X_C, X_{V\backslash V_{An}})\\~\\
&=\prod_{i \in A'} f^{(\hat I)}(X_i | X_{pa(i), \mathcal{G}})
\prod_{i \in C, pa_{\mathcal{G}}(i) \cap A' \neq \emptyset} f^{(\hat I)}(X_i | X_{pa(i), \mathcal{G}})\\
&\prod_{i \in C, pa_{\mathcal{G}}(i) \cap A' = \emptyset} f^{(\hat I)}(X_i | X_{pa(i), \mathcal{G}})
\prod_{i \in {B'}} f^{(\hat I)}(X_i | X_{pa(i), \mathcal{G}})
\prod_{i \in V \backslash V_{An}} f^{(\hat I)}(X_i | X_{pa(i), \mathcal{G}})\\
&=\prod_{i \in A'} f^{(\emptyset)}(X_i | X_{pa(i), \mathcal{G}})
\prod_{i \in C, pa_{\mathcal{G}}(i) \cap A' \neq \emptyset} f^{(\emptyset)}(X_i | X_{pa(i), \mathcal{G}})\\
&\prod_{i \in C, pa_{\mathcal{G}}(i) \cap A' = \emptyset} f^{(\hat I)}(X_i | X_{pa(i), \mathcal{G}})
\prod_{i \in {B'}} f^{(\hat I)}(X_i | X_{pa(i), \mathcal{G}})
\prod_{i \in V \backslash V_{An}} f^{(\hat I)}(X_i | X_{pa(i), \mathcal{G}})\\
\end{align*}

The second equality holds by the factorization of Equation (\ref{eq:int-factor}) because either $i \in A' $ or $i \in C ~|~pa_{\mathcal{G}(i)} \cap A' \neq \emptyset $ implies that $i$ is not targeted by the intervention on $\hat I$, i.e. $i \notin \hat I$. To see this, recall that $A'$ is separated from $\zeta_{\hat I}$ in $\mathcal{G}^{\mathcal{I}}$, which implies that $A'$ does not contain a child of $\zeta_{\hat I}$ in $\mathcal{G}^{\mathcal{I}}$ and is therefore not targeted by the intervention on $\hat I$. Likewise, $\{i \in C | pa_{\mathcal{G}(i)} \cap A' \neq \emptyset \}$ does not contain a child of $\zeta_{\hat I}$ in $\mathcal{G}^{\mathcal{I}}$ because otherwise $A'$ and $\zeta_{\hat I}$ would be d-connected in $\mathcal{G}^{\mathcal{I}}$ by conditioning on this node.

Using similar reasoning, it is easy to see that the parent sets of $A'$ and $\{i \in C | pa_{\mathcal{G}(i)} \cap A' \neq \emptyset \}$ with respect to $\mathcal{G}$ are subsets of $A' \cup C$; and the parent sets of $\{i \in C | pa_{\mathcal{G}(i)} \cap A' = \emptyset \}$ and ${B'}$ are subsets of $B' \cup C$. Therefore, we can write
\[
f^{(\hat I)}(X) = g_1(X_{A'}, X_C) g_2(X_{B'}, X_C; \hat I) \prod_{i \in V \backslash V_{An}} f^{(\hat I)}(X_i | X_{pa(i), \mathcal{G}})
\]
where 
\[g_1(X_{A'}, X_C) = \prod_{i \in A'} f^{(\emptyset)}(X_i | X_{pa(i), \mathcal{G}})
\prod_{i \in C, pa_{\mathcal{G}}(i) \cap A' \neq \emptyset} f^{(\emptyset)}(X_i | X_{pa(i), \mathcal{G}})\]
and 
\[g_2(X_{B'}, X_C; \hat I) = \prod_{i \in C, pa_{\mathcal{G}}(i) \cap A' = \emptyset} f^{(\hat I)}(X_i | X_{pa(i), \mathcal{G}})
\prod_{i \in {B'}} f^{(\hat I)}(X_i | X_{pa(i), \mathcal{G}})\]

Marginalizing out $X_{A'\backslash A}, X_{B'}$ and $X_{V_{An}}$ yields
\[
f^{(\hat I)}(X_A, X_C) = \hat g_1(X_A, X_C) \hat g_2(X_C; \hat I)
\]
where $\hat g_1(X_A, X_C) = \int_{X_{A'\backslash A}}g_1(X_{A'}, X_C)$ and $\hat g_2(X_C; \hat I) = \int_{X_{B'}} g_2(X_{B'}, X_C; \hat I)$. From here, it is easy to see that $f^{(\hat I)}(X_A | X_C) = \frac{g^{(\emptyset)}(X_A, X_C)}{\int_{X_A} g^{(\emptyset)}(X_A, X_C)}$ is invariant to $\hat I \in \{\emptyset, I\}$. 
\end{proof}

\begin{proof} [Proof of Lemma \ref{lem:i-markov-comp}]
Choose any disjoint $A, C \subset [p]$ and any $I \in \mathcal{I}$, and suppose $C \cup \zeta_{\mathcal{I} \backslash \{I\}}$ does not d-separate $A$ from $\{\zeta_{I}\}$ in $\mathcal{G}^{\mathcal{I}}$. To prove this lemma, it is sufficient to construct $f^{(\emptyset)}$ and $f^{(I)}$ such that they satisfy the $\hat{\mathcal{I}}$-Markov properties with respect to $\mathcal{G}^{\hat{\mathcal{I}}}$, where $\hat{\mathcal{I}} = \{\emptyset, I\}$, and $f^{(\emptyset)}(X_A|X_C) \neq f^{(I)}(X_A|X_C)$. 


To do this, we construct a subgraph $\mathcal{G}_{sub} = (V, E_{sub})$ that consists of a d-connected path, $P = \{p_1 \in I, p_2, \cdots, p_{k-1}, p_k \in A\}$ with $p_2, \cdots, p_{k-1} \notin I \cup A$, as well as the directed paths from colliders in $P$ to their nearest descendants in $C$. All other nodes that are not part of these paths are part of the subgraph but have no edges. We parameterize the set of conditional probability distributions, $\{f^{(\emptyset)}(X_i | X_{pa_{G_{sub}}(i)})\}_{i \in V}$, using linear structural equations with non-zero coefficients and independent Gaussian noise. Consider $f^{(\emptyset)}(X_{p_1} | X_{pa_{G_{sub}}(p_1)}) = \mathcal{N}(\sum_{j \in pa_{G_{sub}}(p_1)} c_{j,{p_1}} X_j, \sigma_{\emptyset}^2)$. To construct $f^{(I)}$, let $f^{(I)}(X_{p_1} | X_{pa_{G_{sub}}(p_1)}) = \mathcal{N}(\sum_{j \in pa_{G_{sub}}(p_1)} c_{j,{p_1}} X_j, \sigma_{I}^2)$ for some $\sigma_I \neq \sigma_{\emptyset}$, and let $f^{(I)}(X_i | X_{pa_{G_{sub}}(i)}) = f^{(\emptyset)}(X_i | X_{pa_{G_{sub}}(i)})$ for all $i \neq p_1$.

Note that these distributions factor over $\mathcal{G}$ according to Definition \ref{def:soft-int}, so by Proposition \ref{lem:i-markov}, they satisfy the $\hat{\mathcal{I}}$-Markov properties with respect to $\mathcal{G}^{\hat{\mathcal{I}}}$. { Furthermore, it is straightforward to show that we can write $X_{p_k} = N_{(\emptyset)} + S(X_C)$ under the model corresponding to $f^{(\emptyset)}$ and $X_{p_k} = N_{(I)} + S(X_C) $ under the model corresponding to $f^{(I)}$, where $N_{(\emptyset)} \sim \mathcal{N}(0, c\sigma_{\emptyset}^2)$ and $N_{(I)} \sim \mathcal{N}(0, c\sigma_{I}^2)$ for some constant $c$, and $S(X_C)$ is a Gaussian random variable independent of $N_{(\emptyset)}$ and $N_{(I)}$.} Since $\sigma_{\emptyset}^2 \neq \sigma_{I}^2$, it follows that we have $f^{(\emptyset)}(X_A|X_C) \neq f^{(I)}(X_A|X_C)$, as desired.
\end{proof}

\begin{proof} [Proof of Corollary \ref{cor:perfect-i-mec}]
If $\mathcal{G}_1$ and $\mathcal{G}_2$ are in the same perfect-$\mathcal{I}$-MEC, then $\mathcal{G}_1$ and $\mathcal{G}_2$ have the same skeleton and v-structures. Since $\mathcal{G}_1^{\mathcal{I}}$ and $\mathcal{G}_2^{\mathcal{I}}$ are constructed from $\mathcal{G}_1$ and $\mathcal{G}_2$ by adding the same set of vertices and edges, they must have the same skeleta, so we just need to show that they also have the same v-structures to prove they belong to the same $\mathcal{I}$-MEC. Suppose this is not the case. The only v-structures that can differ between $\mathcal{G}_1^{\mathcal{I}}$ and $\mathcal{G}_2^{\mathcal{I}}$ must involve $\mathcal{I}$-edges, since $\mathcal{G}_1$ and $\mathcal{G}_2$ have the same v-structures. Without loss of generality, suppose $\zeta_I \rightarrow i$ is part of a v-structure in $\mathcal{G}_1^{\mathcal{I}}$ but not in $\mathcal{G}_2^{\mathcal{I}}$. This could only occur if there were a neighbor $j \notin I$ with orientation $j \rightarrow i$ in $\mathcal{G}_1^{\mathcal{I}}$ and $i \rightarrow j$ in $\mathcal{G}_2^{\mathcal{I}}$. However, this contradicts the assumption that $\mathcal{G}_1$ and $\mathcal{G}_2$ belong to the same perfect-$\mathcal{I}$-MEC \cite{hauser12}, since removing the incoming edges of $i$ from $\mathcal{G}_1$ and $\mathcal{G}_2$ would result in graphs with different skeleta. Therefore, $\mathcal{G}_1^{\mathcal{I}}$ and $\mathcal{G}_2^{\mathcal{I}}$ must have the same v-structures.

Conversely, suppose that $\mathcal{G}_1$ and $\mathcal{G}_2$ are in the same $\mathcal{I}$-MEC. Then they must have the same skeleta and v-structures, and we just need to show that for any $I \in \mathcal{I}$, $\mathcal{G}_1$ and $\mathcal{G}_2$ have the same skeleton after removing the incoming edges of $i$ for all $i \in I$ \cite{hauser12}. Suppose this is not the case. This implies that for some $I \in \mathcal{I}$ and some $i \in I$, there is an edge between $i$ and another vertex $j \notin I$ that is removed in $\mathcal{G}_1$ but not in $\mathcal{G}_2$. The orientation of this edge must be $j \rightarrow i$ in $\mathcal{G}_1$ and $i \rightarrow j$ in $\mathcal{G}_2$. But this would mean that $j \rightarrow i$ and $\zeta_I \rightarrow i$ form a v-structure in $\mathcal{G}_1^{\mathcal{I}}$ but not in $\mathcal{G}_2^{\mathcal{I}}$, which is a contradiction to Theorem \ref{the:i-mec}. Therefore, $\mathcal{G}_1$ and $\mathcal{G}_2$ must belong to the same perfect-$\mathcal{I}$-MEC. 
\end{proof}

\subsection{Proofs from Section \ref{sec_no_obs_data}}

The following definition formalizes the notion of relabeling the datasets and intervention targets:

\begin{definition} \label{def:reindex}
Let $\{f^{(I)}\}_{I \in \mathcal{I}}$ be a set of interventional distributions. Let $J \in \mathcal{I}$ be a particular intervention target. The corresponding \emph{$J$-observation target set} is defined as $\tilde{\mathcal{I}}_J:=\{\emptyset, \{I \cup J\}_{I \in \mathcal{I}, I \neq J} \}$. The relabeled set of interventional distributions is denoted $\{\tilde{f}_J^{(I)}\}_{I \in \tilde{\mathcal{I}}_J}$, with $\tilde{f}_J^{(\emptyset)} := f^{(J)}$ and $\tilde{f}_J^{(I \cup J)} := f^{(I)}, ~\forall I \in \mathcal{I}$, $I \neq J$.
\end{definition}

Notice that $\{\tilde{f}_J^{(I)}\}_{I \in \tilde{\mathcal{I}}_J}$ contains the same distributions as $\{f^{(I)}\}_{I \in \mathcal{I}}$ but is reindexed to treat $f^{(J)}$ as the observational distribution and $\{f^{(I)}\}_{I \neq J}$ as distributions obtained under interventions on $I \cup J$. This relabeling is justified by the following lemma:

\begin{lemma} \label{lem:reindex} $\{f^{(I)}\}_{I \in \mathcal{I}} \in \mathcal{M}_{\mathcal{I}}(\mathcal{G})$ if and only if $\{\tilde{f}_J^{(I)}\}_{I \in \tilde{\mathcal{I}}_J} \in \mathcal{M}_{\tilde{\mathcal{I}}_J}(\mathcal{G})$ for all $J \in \mathcal{I}$.
\end{lemma}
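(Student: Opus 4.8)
The plan is to prove both directions by directly unpacking Definition \ref{def:M}; no appeal to Lemma \ref{lem:M}, and no conservativeness of $\mathcal{I}$, will be needed. The observation that drives everything is that passing from $\{f^{(I)}\}_{I\in\mathcal{I}}$ to the relabeled family $\{\tilde f_J^{(I)}\}_{I\in\tilde{\mathcal{I}}_J}$ only renames the target attached to each distribution — $f^{(J)}$ becomes the ``observational'' distribution (target $\emptyset$), and each $f^{(I)}$ with $I\neq J$ keeps its value but acquires target $I\cup J$ — and never alters an actual distribution. Hence the ``$f^{(\cdot)}\in\mathcal{M}(\mathcal{G})$'' clause of Definition \ref{def:M} is preserved verbatim, and the only thing to track is how the index ranges ``$j\notin(\text{union of two targets})$'' in the invariance clause transform under the relabeling.

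For the forward direction I would assume $\{f^{(I)}\}_{I\in\mathcal{I}}\in\mathcal{M}_{\mathcal{I}}(\mathcal{G})$, fix $J\in\mathcal{I}$, and check the two clauses of $\mathcal{M}_{\tilde{\mathcal{I}}_J}(\mathcal{G})$. The Markov clause is immediate since every distribution appearing in the relabeled family is some $f^{(K)}$, $K\in\mathcal{I}$, already known to lie in $\mathcal{M}(\mathcal{G})$. For the invariance clause I would argue pairwise: a pair of targets in $\tilde{\mathcal{I}}_J$ is either $(\emptyset,\,I\cup J)$ with associated distributions $(f^{(J)},f^{(I)})$, where invariance of the factors $f(X_j\mid X_{\pa_{\mathcal{G}}(j)})$ is required for $j\notin\emptyset\cup(I\cup J)=J\cup I$ — exactly what $\mathcal{M}_{\mathcal{I}}(\mathcal{G})$ supplies for the pair $(f^{(J)},f^{(I)})$ — or $(I\cup J,\,I'\cup J)$ with distributions $(f^{(I)},f^{(I')})$, where the required range is $j\notin I\cup I'\cup J\subseteq I\cup I'$, which is implied by the invariance of $(f^{(I)},f^{(I')})$ on all of $\{j:j\notin I\cup I'\}$ guaranteed by $\mathcal{M}_{\mathcal{I}}(\mathcal{G})$. (Pairs of equal targets are vacuous.) So $\{\tilde f_J^{(I)}\}_{I\in\tilde{\mathcal{I}}_J}\in\mathcal{M}_{\tilde{\mathcal{I}}_J}(\mathcal{G})$ for every $J$.

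For the backward direction I would assume the relabeled family lies in $\mathcal{M}_{\tilde{\mathcal{I}}_J}(\mathcal{G})$ for all $J\in\mathcal{I}$ and verify the two clauses of Definition \ref{def:M} for $\{f^{(I)}\}_{I\in\mathcal{I}}$. For the Markov clause, fix $I\in\mathcal{I}$ and apply the hypothesis with $J=I$: since $\emptyset\in\tilde{\mathcal{I}}_I$ and $\tilde f_I^{(\emptyset)}=f^{(I)}$, membership in $\mathcal{M}_{\tilde{\mathcal{I}}_I}(\mathcal{G})$ forces $f^{(I)}\in\mathcal{M}(\mathcal{G})$. For the invariance clause, fix $I\neq I'$ in $\mathcal{I}$ and $j\notin I\cup I'$ (the case $I=I'$ being vacuous), and again apply the hypothesis with $J=I$: in $\tilde{\mathcal{I}}_I$ the distribution $f^{(I)}$ carries target $\emptyset$ and $f^{(I')}$ carries target $I'\cup I$, so the invariance clause of $\mathcal{M}_{\tilde{\mathcal{I}}_I}(\mathcal{G})$ applied to this pair gives $f^{(I)}(X_j\mid X_{\pa_{\mathcal{G}}(j)})=f^{(I')}(X_j\mid X_{\pa_{\mathcal{G}}(j)})$ for all $j\notin\emptyset\cup(I'\cup I)=I\cup I'$, which is exactly the required range. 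This would establish $\{f^{(I)}\}_{I\in\mathcal{I}}\in\mathcal{M}_{\mathcal{I}}(\mathcal{G})$.

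The proof involves no calculation, so there is no real ``hard part''; the only point that needs care is why the backward direction genuinely uses the hypothesis for \emph{all} $J$ and not just one, namely that the tight range $j\notin I\cup I'$ demanded by $\mathcal{M}_{\mathcal{I}}(\mathcal{G})$ for a pair $(f^{(I)},f^{(I')})$ is reproduced without a spurious enlargement by $J$ only when the relabeling basepoint is taken to be $J=I$ (equivalently $J=I'$). I would also include a one-line remark that $\tilde{\mathcal{I}}_J$ and $\{\tilde f_J^{(I)}\}$ are to be understood as families indexed by $\{\emptyset\}\cup(\mathcal{I}\setminus\{J\})$ — so that distinct $I,I'$ with $I\cup J=I'\cup J$ are retained as separate indices — which is consistent with the (multi)set conventions already in force for $\mathcal{I}$ and makes the pairwise checks above unambiguous.
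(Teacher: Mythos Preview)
Your argument is correct and is actually cleaner than the paper's. The paper proves both directions via Lemma~\ref{lem:M}: for the forward direction it unpacks the factorization~(\ref{eq:int-factor}) of each $f^{(I)}$ and re-groups the product to exhibit a factorization of $\tilde f_J^{(I\cup J)}$ relative to the ``new'' observational distribution $\tilde f_J^{(\emptyset)}=f^{(J)}$; for the backward direction it \emph{constructs} an observational distribution $f^{(\emptyset)}$ node-by-node by setting $f^{(\emptyset)}(X_i\mid X_{\pa_{\mathcal{G}}(i)}):=\tilde f_I^{(\emptyset)}(X_i\mid X_{\pa_{\mathcal{G}}(i)})$ for some $I\in\mathcal{I}$ with $i\notin I$, which is where conservativeness of $\mathcal{I}$ enters. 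Your route bypasses Lemma~\ref{lem:M} entirely and works purely at the level of Definition~\ref{def:M}, tracking how the index sets in the invariance clause transform under relabeling; in particular, your backward direction needs only the choice $J=I$ (resp.\ $J=I'$) to recover the exact range $j\notin I\cup I'$, and hence does not use conservativeness at all. So your proof is strictly more general than the paper's as far as this lemma is concerned, while the paper's argument has the mild advantage of actually exhibiting an underlying $f^{(\emptyset)}$.

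One small presentation point: the clause ``$j\notin I\cup I'\cup J\subseteq I\cup I'$'' reads as the false containment $I\cup I'\cup J\subseteq I\cup I'$; what you mean is the containment of complements $\{j:j\notin I\cup I'\cup J\}\subseteq\{j:j\notin I\cup I'\}$, which is what makes the implication go the right way. Your remark on reading $\tilde{\mathcal{I}}_J$ as a family indexed by $\{\emptyset\}\cup(\mathcal{I}\setminus\{J\})$ is well-taken and consistent with the paper's (multi)set convention for $\mathcal{I}$.
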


\begin{proof} [Proof of Lemma \ref{lem:reindex}]
To prove the ``only if" direction, suppose $\{f^{(I)}\}_{I \in \mathcal{I}} \in \mathcal{M}_{\mathcal{I}}(\mathcal{G})$. It follows straight from Definition \ref{def:reindex} that $\tilde{f}_J^{(\emptyset)} = f^{(J)}$ for every $J \in \mathcal{I}$. Since $f^{(J)}$ is Markov with respect to $\mathcal{G}$, so is $\tilde{f}_J^{(\emptyset)}$, and hence $\tilde{f}_J^{(\emptyset)}$ can be factored according to Equation (\ref{eq:int-factor}) with the observational distribution set to $\tilde{f}_J^{(\emptyset)}$. So it remains to show that for every $J \in \mathcal{I}$ and any $I \neq J$, $\tilde{f}_J^{(I \cup J)}$ factorizes according to Equation (\ref{eq:int-factor}) with the observational distribution set to $\tilde{f}_J^{(\emptyset)}$. Then we have
\[
\begin{split}
\tilde{f}_J^{(I \cup J)}(X) &= f^{(I)}(X)  \quad \quad \quad\quad \quad \quad \quad \quad \quad\quad \quad \quad \quad \quad \quad \text{(by Definition \ref{def:reindex})}\\
&= \prod_{i \notin I} f^{(\emptyset)}(X_i | X_{pa(i)}) \prod_{i \in I} f^{(I)}(X_i | X_{pa(i)}) \quad \quad \quad \text{(by Lemma \ref{lem:M})} \\
&= \prod_{i \notin I,J} f^{(\emptyset)}(X_i | X_{pa(i)}) \prod_{i \notin I, i \in J} f^{(\emptyset)}(X_i | X_{pa(i)}) \prod_{i \in I} f^{(I)}(X_i | X_{pa(i)})\\
&= \prod_{i \notin I \cup J} \tilde{f}_J^{(\emptyset)}(X_i | X_{pa(i)}) \prod_{i \in I \cup J} \tilde{f}_J^{(I \cup J)}(X_i | X_{pa(i)})
\end{split}
\]
where the last equality holds because $\tilde{f}_J^{(\emptyset)}(X_i | X_{pa(i)}) = f^{(J)}(X_i | X_{pa(i)}) = f^{(\emptyset)}(X_i | X_{pa(i)})$ when $i \notin J$, and by relabeling the conditional distributions in the last two product terms as $\tilde{f}_J^{(I \cup J)}$. By Lemma \ref{lem:M}, it follows that $\{\tilde{f}_J^{(I)}\}_{I \in \tilde{\mathcal{I}}_J} \in \mathcal{M}_{\tilde{\mathcal{I}}_J}(\mathcal{G})$.

To prove the converse, we show how to construct the observational distribution $f^{(\emptyset)}$ such that $f^{(I)}$ can be factored over $\mathcal{G}$ according to Equation (\ref{eq:int-factor}) for all $I \in \mathcal{I}$. For every $i \in \mathcal{V}$, let $f^{(\emptyset)}(X_i | X_{pa_{\mathcal{G}}(i)}) = \tilde{f}_I^{(\emptyset)}(X_i | X_{pa_{\mathcal{G}}(i)})$ for some $I \in \mathcal{I}$ such that $i \notin I$. The existence of such an $I$ is guaranteed by the assumption that $\mathcal{I}$ is a conservative set of targets. Furthermore, $\tilde{f}_I^{(\emptyset)}(X_i | X_{pa_{\mathcal{G}}(i)})$ is unique; if there are multiple targets that satisfy this requirement (i.e. $\exists J \in \mathcal{I} ~s.t.~ i \notin J$ and  $J\neq I$), we always have $\tilde{f}_I^{(\emptyset)}(X_i | X_{pa_{\mathcal{G}}(i)}) = \tilde{f}_J^{(\emptyset)}(X_i | X_{pa_{\mathcal{G}}(i)})$, since
\[
\tilde{f}_J^{(\emptyset)}(X_i | X_{pa_{\mathcal{G}}(i)})
= \tilde{f}_I^{(I \cup J)}(X_i | X_{pa_{\mathcal{G}}(i)})
= \tilde{f}_I^{(\emptyset)}(X_i | X_{pa_{\mathcal{G}}(i)})
\]
for $i \notin I \cup J$. The first equality follows by Definition \ref{def:reindex}, and the second equality follows since by hypothesis, $\{\tilde{f}_I^{(K)}\}_{K \in \tilde{\mathcal{I}}_{I}} \in \mathcal{M}_{\tilde{\mathcal{I}}_{I}}(\mathcal{G})$. 
Thus, we have defined $f^{(\emptyset)}$ such that $f^{(I)}$ can be factored over $\mathcal{G}$ according to Equation (\ref{eq:int-factor}) for all $I \in \mathcal{I}$. This proves by Lemma \ref{lem:M} that $\{f^{(I)}\}_{I \in \mathcal{I}} \in \mathcal{M}_{\mathcal{I}}(\mathcal{G})$.
\end{proof}

\begin{proof}[Proof of Theorem \ref{th:imec-2}]
We first prove the ``only if" direction. 
Suppose $\mathcal{G}_1^{\tilde{\mathcal{I}}_J}$ and $\mathcal{G}_2^{\tilde{\mathcal{I}}_J}$ do not have the same skeleton and v-structures for some $J \in \mathcal{I}$. If $\mathcal{G}_1, \mathcal{G}_2$ do not have the same v-structures and skeletons, then they do not belong to the same MEC and it is straightforward to see that $\mathcal{M}_{\mathcal{I}}(\mathcal{G}_1)\neq\mathcal{M}_{\mathcal{I}}(\mathcal{G}_2)$. Otherwise there exists $I \in \tilde{\mathcal{I}}_J$ and $j \in [p]$ such that $\zeta_I \rightarrow j$ is part of a v-structure in one $\tilde{\mathcal{I}}_J$-DAG and not the other. Suppose without loss of generality that $\zeta_I \rightarrow j$ is part of a v-structure in $\mathcal{G}_1^{\tilde{\mathcal{I}}_J}$ but not in $\mathcal{G}_2^{\tilde{\mathcal{I}}_J}$. Then $j$ has a neighbor $k\in[p]\setminus\{j\}$ with orientation $k \rightarrow j$ in $\mathcal{G}_1^{\tilde{\mathcal{I}}_J}$ and $j \rightarrow k$ in $\mathcal{G}_2^{\tilde{\mathcal{I}}_J}$; $k$ and $\zeta_I$ given $pa_{\mathcal{G}_2^{\tilde{\mathcal{I}}_J}}(k)$ are d-connected in $\mathcal{G}_1^{\tilde{\mathcal{I}}_J}$  but d-separated in $\mathcal{G}_2^{\tilde{\mathcal{I}}_J}$. Similar to the proof of Lemma \ref{lem:i-markov-comp}, one can construct $\{f^{(I)}\}_{I \in \mathcal{I}}$ such that $\{\tilde{f}_J^{(I)}\}_{I \in \tilde{\mathcal{I}}_J}$ satisfies the $\tilde{\mathcal{I}}_J$-Markov property with respect to $\mathcal{G}_1^{\tilde{\mathcal{I}}_J}$ for all $J \in {\mathcal{I}}$ but not with respect to $\mathcal{G}_2^{\tilde{\mathcal{I}}_J}$ for some $J \in {\mathcal{I}}$. It follows from Lemma \ref{lem:reindex} that $\{f^{(I)}\}_{I \in \mathcal{I}} \in \mathcal{M}_{\mathcal{I}}(\mathcal{G}_1)$ but $\{f^{(I)}\}_{I \in \mathcal{I}} \notin \mathcal{M}_{\mathcal{I}}(\mathcal{G}_2)$, so $\mathcal{M}_{\mathcal{I}}(\mathcal{G}_1)\neq\mathcal{M}_{\mathcal{I}}(\mathcal{G}_2)$.
The ``if" direction follows by applying Theorem \ref{the:i-mec} to $\mathcal{G}_1^{\tilde{\mathcal{I}}_J}$ and $\mathcal{G}_2^{\tilde{\mathcal{I}}_J}$ for every $J \in \mathcal{I}$, followed by Lemma \ref{lem:reindex}.
\end{proof}

\section{Proofs from Section \ref{sec_alg}}
\subsection{Proof of Theorem \ref{the:consistency}}

In this section, we work up to the proof of Theorem \ref{the:consistency}. To do this, we first cover some basic results on the consistency of GSP. Let $\mathcal{G}$ be a DAG and let $\mathcal{H}$ be an independence map (I-map) of $\mathcal{G}$, meaning that all independences implied by $\mathcal{H}$ are satisfied by $\mathcal{G}$ (i.e. $\mathcal{G} \leq \mathcal{H}$). Chickering (2002) showed that there exists a sequence of covered edge reversals and edge additions resulting in a sequence of DAGs, $\mathcal{G}_0, \mathcal{G}_1, \cdots, \mathcal{G}_{\tau}$ such that
\[
\mathcal{G} = \mathcal{G}_0 \leq \mathcal{G}_1 \leq \cdots \leq \mathcal{G}_{\tau} = \mathcal{H}
\]
Furthermore, \citet{solus17} showed that for any $\mathcal{G}$ and $\mathcal{H}$, there exists such a Chickering sequence in which one sink node of $\mathcal{H}$ is fixed at a time. The following lemma connects this sequence over DAGs to a sequence over the topological orderings of the nodes.

\begin{lemma}
Let $\mathcal{G}_{i_1}, \cdots, \mathcal{G}_{i_p}$ be a subsequence of the Chickering sequence where one sink is fixed at a time, and let $\mathcal{G}_{i_j}$ be the first DAG in which the $j$th sink node is fixed, i.e. the sequence of DAGs from $\mathcal{G}_{i_{j-1}}$ to $\mathcal{G}_{i_j}$ involve covered edge reversals and edge additions required to resolve sink node $j$. Furthermore, let $\Pi(\mathcal{G})$ denote the set of topological orderings that are consistent with $\mathcal{G}$. Then for any $\pi_{i_{j-1}} \in \Pi(\mathcal{G}_{i_{j-1}})$ in which the last $j-1$ nodes correspond to the first $j-1$ fixed sink nodes, there exists a sequence of orderings $\pi_{i_{j-1}}, \cdots, \pi_{i_j}$ with $\pi_{k} \in \Pi(\mathcal{G}_k)$ such that the $j$th sink node moves only to the right, stopping in the $j$th position from the end, and the relative ordering of the other nodes remain unchanged.
\end{lemma}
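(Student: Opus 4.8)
\textbf{Setup.} The plan is to track the Chickering sequence of DAGs $\mathcal{G}_{i_{j-1}} = \mathcal{G}_{k_0}, \mathcal{G}_{k_1}, \dots, \mathcal{G}_{k_m} = \mathcal{G}_{i_j}$ (the portion that resolves sink node $j$) and, by induction on the index along this subsequence, maintain a topological ordering $\pi_{k_\ell} \in \Pi(\mathcal{G}_{k_\ell})$ satisfying the three invariants: (i) the last $j-1$ entries are the already-fixed sink nodes in the correct relative order; (ii) node $j$ only ever moves to the right as $\ell$ increases; (iii) the relative order of all nodes other than $j$ is unchanged from $\pi_{k_0}$. The base case is $\pi_{k_0} := \pi_{i_{j-1}}$, which by hypothesis already has property (i) and trivially (ii)--(iii).

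\textbf{Inductive step.} Each transition $\mathcal{G}_{k_\ell} \to \mathcal{G}_{k_{\ell+1}}$ is either an edge addition or a covered edge reversal. If it is an edge addition $\mathcal{G}_{k_{\ell+1}} = \mathcal{G}_{k_\ell} + (a\to b)$, then $\Pi(\mathcal{G}_{k_{\ell+1}}) \subseteq \Pi(\mathcal{G}_{k_\ell})$, so I would try to keep $\pi_{k_{\ell+1}} := \pi_{k_\ell}$; I need to check this ordering is still consistent, i.e. that $a$ precedes $b$ in $\pi_{k_\ell}$ — this should follow because the edge addition is part of a step that transforms toward $\mathcal{H}$ and, being in a Chickering sequence, cannot create a cycle, and (using the Solus et al.\ refinement) the addition occurs ``within'' the nodes not yet fixed with $b$ on the side of the current sink. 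If it is a covered edge reversal of $a\to b$ becoming $b\to a$, then $a$ and $b$ are adjacent in both topological orders and I would obtain $\pi_{k_{\ell+1}}$ from $\pi_{k_\ell}$ by transposing $a$ and $b$ if they are adjacent entries in $\pi_{k_\ell}$; the key point (standard for covered edges: $\pa(a) = \pa(b)\setminus\{a\}$) is that $a$ and $b$ \emph{can} be placed adjacently, so after possibly first sliding one of them — without crossing node $j$ or violating (i) — we may assume they are adjacent and swap them. The only transposition that actually involves node $j$ is the one(s) implementing ``move node $j$ to the right,'' and by the structure of resolving sink $j$ (all covered-edge reversals needed are of edges incident to $j$ oriented so that $j$ becomes a child, i.e. moves later), property (ii) is preserved; all other transpositions are among non-$j$ nodes and, being the minimal swaps dictated by covered-edge reversals, preserve (iii).

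\textbf{Terminal condition.} At $\ell = m$ we have $\mathcal{G}_{k_m} = \mathcal{G}_{i_j}$, in which sink node $j$ is fixed; I need to argue $\pi_{k_m}$ places node $j$ exactly in position $p-(j-1)$, i.e.\ the $j$th from the end. Since in $\mathcal{G}_{i_j}$ node $j$ is a sink among the nodes not among the first $j-1$ fixed sinks, and those $j-1$ sit at the end by invariant (i), node $j$ must be the last of the remaining nodes, hence in position $p-j+1$; combined with invariant (ii), it has moved monotonically rightward to land there. Finally I extend this from the subsequence to the full sequence $\mathcal{G}_k$ by inserting, between consecutive $\mathcal{G}_{k_\ell}$, the intermediate DAGs (each differing by one covered reversal or addition) and repeating the same per-step argument, so that indeed $\pi_k \in \Pi(\mathcal{G}_k)$ for \emph{every} index.

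\textbf{Main obstacle.} The delicate part is the bookkeeping in the covered-edge-reversal case: showing that one can always realize the reversal by a transposition (or a short sequence of transpositions among non-$j$ nodes) that (a) keeps node $j$ moving only rightward and (b) does not disturb the relative order of the other non-$j$ nodes or the fixed sink block. This requires carefully invoking the property of the Solus et al.\ refined Chickering sequence that, while resolving sink $j$, the covered edges reversed are precisely those needed to make $j$ a sink, so every reversal is either ``between $j$ and a neighbor'' (pushing $j$ right) or lies entirely in the already-settled structure; pinning down this claim about \emph{which} covered edges appear is where I expect to spend the most care.
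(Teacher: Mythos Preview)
The paper does not actually prove this lemma: its entire proof is the single sentence ``The correctness of this lemma follows directly from Lemma 13 of \cite{solus17}.'' So there is no argument on the paper's side to compare yours against; the content is delegated wholesale to the external reference.

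Your direct attempt contains a real inconsistency. You assert that ``all other transpositions are among non-$j$ nodes and, being the minimal swaps dictated by covered-edge reversals, preserve (iii).'' That cannot be right: transposing two non-$j$ nodes changes their relative order, which is exactly what invariant (iii) forbids. The only way (iii) can survive the whole subsequence is if \emph{every} covered-edge reversal in the ``resolve sink $j$'' phase is of an edge incident to $j$ itself (so that each step is simply swapping $j$ with its immediate right neighbor). That structural fact---that the sink-fixing Chickering sequence can be arranged so all reversals in phase $j$ touch $j$---is precisely the content of the Solus et al.\ lemma the paper cites, and it is what your plan needs but does not supply. Your ``main obstacle'' paragraph correctly flags this, but the inductive step above it is written as though non-$j$ reversals might occur and be absorbed harmlessly, which they cannot.

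The same issue bites your ``sliding'' manoeuvre for making $a,b$ adjacent before a swap, and your hope that edge additions are automatically compatible with the current ordering: both would, in general, rearrange non-$j$ nodes and break (iii). The clean route is to first establish (or import from Solus et al.) that in this phase every operation involves $j$; once you have that, the induction is trivial and invariants (i)--(iii) follow immediately. Without it the argument does not close.
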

\begin{proof}
The correctness of this lemma follows directly from Lemma 13 of \citet{solus17}.
\end{proof}
The following corollary is an immediate consequence of this lemma.
\begin{corollary} \label{lem:chickering_order}
For any DAG $\mathcal{G}$ over vertex set $[p]$ and any I-map $\mathcal{H}$, there exists a sequence of topological orderings
\[
\pi_0 \in \Pi(\mathcal{G}_0), \pi_1 \in \Pi(\mathcal{G}_1), \cdots, \pi_{\tau} \in \Pi(\mathcal{G}_{\tau})
\]
with $\mathcal{G}_0 = \mathcal{G}$ and $\mathcal{G}_{\tau} = \mathcal{H}$ corresponding to a Chickering sequence in which we fix the order of the nodes in reverse starting from the last node in $\pi_{\tau}$. Specifically, the last node in $\pi_{\tau}$ is moved to the right until it is in the $p$-th position, then the second-last node in $\pi_{\tau}$ to the right until it is in the $(p-1)$-th position, etc. until all nodes are in the order given by $\pi_{\tau}$.
\end{corollary}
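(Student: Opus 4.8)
The plan is to obtain the corollary by iterating the preceding lemma $p$ times, peeling off the sink nodes of $\mathcal{H}$ one at a time and stitching together the induced sequences of topological orderings.

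First I would fix a Chickering sequence $\mathcal{G} = \mathcal{G}_0 \leq \mathcal{G}_1 \leq \cdots \leq \mathcal{G}_\tau = \mathcal{H}$ in which one sink node of $\mathcal{H}$ is fixed at a time; this exists by Chickering (2002) together with the refinement of \citet{solus17} recalled above. Write $\mathcal{G}_{i_0} = \mathcal{G}_0, \mathcal{G}_{i_1}, \ldots, \mathcal{G}_{i_p} = \mathcal{G}_\tau$ for the distinguished sub-sequence, so that the segment from $\mathcal{G}_{i_{j-1}}$ to $\mathcal{G}_{i_j}$ resolves the $j$th fixed sink, and let $\pi_\tau$ be the topological ordering of $\mathcal{H}$ in which the $j$th fixed sink sits in position $p-j+1$. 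Equivalently, $\pi_\tau$ reads off the fixed sinks in reverse order of fixing, so the last node of $\pi_\tau$ is the sink fixed first (which is automatically a sink of $\mathcal{H}$, being last in a topological order).

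Then I would build the ordering sequence by induction on $j$, maintaining the invariant that $\pi_{i_{j-1}} \in \Pi(\mathcal{G}_{i_{j-1}})$ has its last $j-1$ entries equal to the first $j-1$ fixed sinks, arranged as in $\pi_\tau$. The base case $j=1$ is vacuous: take $\pi_0$ to be any topological ordering of $\mathcal{G}_0 = \mathcal{G}$. For the inductive step, apply the lemma with this $j$ and the current $\pi_{i_{j-1}}$ to obtain orderings $\pi_{i_{j-1}}, \pi_{i_{j-1}+1}, \ldots, \pi_{i_j}$ with $\pi_k \in \Pi(\mathcal{G}_k)$, in which the $j$th sink moves only rightward, stopping in the $j$th position from the end, while all other nodes keep their relative order. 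Since the last $j-1$ entries (the earlier fixed sinks) are not disturbed and the $j$th sink lands immediately to their left, $\pi_{i_j}$ satisfies the invariant with $j$ in place of $j-1$. After the step $j=p$ the ordering is completely determined: $\pi_{i_p} \in \Pi(\mathcal{G}_\tau) = \Pi(\mathcal{H})$ has all $p$ positions occupied by fixed sinks in the prescribed order, so $\pi_{i_p} = \pi_\tau$. Concatenating the $p$ segments (which overlap exactly at the endpoints $\pi_{i_j}$) yields the desired sequence $\pi_0 \in \Pi(\mathcal{G}_0), \ldots, \pi_\tau \in \Pi(\mathcal{G}_\tau)$, and by construction the node in position $p$ of $\pi_\tau$ is the one moved to the right first, then the node in position $p-1$, and so on.

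The substantive content is all in the earlier lemma; the only thing to be careful about here is the bookkeeping — verifying that the hypothesis of the lemma (``the last $j-1$ nodes are the first $j-1$ fixed sinks'') is preserved from one iteration to the next, and that the per-sink ordering segments glue into a single coherent sequence indexed by the full Chickering sequence $\mathcal{G}_0, \ldots, \mathcal{G}_\tau$. I do not anticipate a genuine obstacle.
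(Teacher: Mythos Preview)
Your proposal is correct and matches the paper's approach: the paper simply declares the corollary an ``immediate consequence'' of the preceding lemma, and your inductive concatenation of the per-sink segments is exactly the intended unpacking of that phrase. The bookkeeping you flag (preservation of the last-$j{-}1$-entries invariant and gluing at the endpoints $\pi_{i_j}$) is the only content to verify, and you have handled it correctly.
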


Using this result, we now state the following lemma, which is useful in the proof of consistency of the algorithm.

\begin{lemma} \label{lem:directed_path}
For any permutation $\pi$, there exists a list of covered arrow reversals from $\mathcal{G}_{\pi}$ to the true DAG $\mathcal{G}_{\pi^*}$ such that (1) the number of edges is weakly decreasing:
\[
\mathcal{G}_{\pi} = \mathcal{G}_{\pi^0} \geq \mathcal{G}_{\pi^1} \geq \cdots \geq
\mathcal{G}_{\pi^{m-1}} \geq \mathcal{G}_{\pi^m}\geq \cdots \geq
\mathcal{G}_{\pi^{M-1}} \geq \mathcal{G}_{\pi^M} = \mathcal{G}_{\pi^*}
\]
and (2) if $i \rightarrow j$ is reversed from $\mathcal{G}_{\pi^{m-1}}$ to $\mathcal{G}_{\pi^m}$, then there is no directed path from $i$ to $j$ in $\mathcal{G}_{\pi^*}$.

\end{lemma}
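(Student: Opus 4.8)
The plan is to build the list by reversing a Chickering–Solus sequence that transforms the true DAG $\mathcal{G}_{\pi^*}$ into $\mathcal{G}_\pi$, and then to deduce property~(2) by controlling how the relative order of a pair of nodes can change along the corresponding walk on the permutahedron. Throughout I work under the standing hypotheses of Theorem~\ref{the:consistency} (faithfulness of $f^{(\emptyset)}$ and causal sufficiency).

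First I would observe that $\mathcal{G}_\pi$ is an I-map of $\mathcal{G}_{\pi^*}$: every d-separation of $\mathcal{G}_\pi$ is a conditional independence of $f^{(\emptyset)}$ (since $\mathcal{G}_\pi$ is an I-map of $f^{(\emptyset)}$), hence a d-separation of $\mathcal{G}_{\pi^*}$ by faithfulness, so $\mathcal{G}_{\pi^*}\leq\mathcal{G}_\pi$; here $\pi^*$ is any topological ordering of the true DAG, so $\mathcal{G}_{\pi^*}$ is that DAG. Applying Corollary~\ref{lem:chickering_order} with $\mathcal{G}=\mathcal{G}_{\pi^*}$ and $\mathcal{H}=\mathcal{G}_\pi$, taking the target ordering to be $\pi$ itself, yields a walk on the permutahedron from some topological ordering $\rho$ of $\mathcal{G}_{\pi^*}$ to $\pi$ along which the minimal I-maps satisfy $\mathcal{G}_{\pi^*}=\mathcal{G}_{\rho}\leq\cdots\leq\mathcal{G}_\pi$, organized into phases in which the nodes are fixed from the end: first the last node of $\pi$ is transposed rightward until it reaches position $p$, then the second-to-last node of $\pi$ is transposed rightward until it reaches position $p-1$, and so on. Reversing this walk gives a list of adjacent transpositions from $\pi$ back to $\rho$ along which $|\mathcal{G}_{\pi^m}|$ is weakly decreasing; since transposing a pair that is non-adjacent in the current minimal I-map leaves it unchanged, and transposing a non-covered edge strictly increases its size, every step that changes the DAG reverses a then-covered edge (possibly also deleting other edges). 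Relabelling, this is the sequence $\mathcal{G}_\pi=\mathcal{G}_{\pi^0}\geq\cdots\geq\mathcal{G}_{\pi^M}=\mathcal{G}_{\pi^*}$ required by~(1).

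For~(2), the key combinatorial fact is that along the forward walk the relative order of any two nodes $u,v$ flips at most once. Indeed, a node never moves again once it has been placed in its final position, and while a node is being placed it moves only rightward and stops just before the block of already-fixed nodes; hence $u$ and $v$ can be transposed only during the phase of whichever of them is fixed first, and then at most once. Reversing the walk preserves this ``at most one flip per pair'' property. Now suppose the step from $\mathcal{G}_{\pi^{m-1}}$ to $\mathcal{G}_{\pi^m}$ reverses $i\to j$ into $j\to i$; at the corresponding transposition the relative order of $\{i,j\}$ changes from ``$i$ before $j$'' to ``$j$ before $i$'', and since it cannot change again, $j$ precedes $i$ in $\rho$. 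As $\rho$ is a topological ordering of $\mathcal{G}_{\pi^*}$, there is no directed path from $i$ to $j$ in $\mathcal{G}_{\pi^*}$, which is exactly~(2).

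The step I expect to be the main obstacle is making the ``at most one flip per pair'' bookkeeping fully rigorous from the structure guaranteed by Corollary~\ref{lem:chickering_order}: one has to check that the walk can genuinely be organized into the phases described (each phase moving a single node monotonically rightward and never disturbing an already-fixed node), that each intermediate graph really is the minimal I-map of the corresponding permutation, and one must account for the transpositions across pairs that are non-adjacent in the current minimal I-map, which do not change the graph but still flip an order. Once this permutahedron-walk structure is pinned down, both conclusions follow immediately.
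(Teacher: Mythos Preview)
Your approach is essentially the same as the paper's: both reverse the Chickering--Solus sequence furnished by Corollary~\ref{lem:chickering_order} (applied with $\mathcal{G}=\mathcal{G}_{\pi^*}$ and $\mathcal{H}=\mathcal{G}_\pi$) and then extract property~(2) from the phase structure of the walk. The paper's argument is framed dually---``if $i$ is an ancestor of $j$ in $\mathcal{G}_{\pi^*}$ then $j\to i$ is never reversed in the forward sequence''---and its case analysis (either $j$ is fixed first and the relative order never changes, or $i$ is fixed first and passes $j$ once moving rightward) is exactly your ``at most one flip per pair'' bookkeeping specialized to ancestor--descendant pairs; so the obstacle you flag is handled in the paper by precisely that two-case argument, relying (as you do) on the sink-fixing structure inherited from \citet{solus17}.
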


\begin{proof}
It is sufficient to show that there exists a Chickering sequence from $\mathcal{G}_{\pi^*}$ to $\mathcal{G}_{\pi}$ such that if $i$ is an ancestor of $j$ in $\mathcal{G}_{\pi^*}$, then $j \rightarrow i$ is not reversed in the sequence. For all ancestor-descendant pairs $i,j$ in $\mathcal{G}_{\pi^*}$, $i$ precedes $j$ in all orderings belonging to $\Pi(G_{\pi^*})$. By Corollary \ref{lem:chickering_order}, there exists a Chickering sequence from $\mathcal{G}_{\pi^*}$ to $\mathcal{G}_{\pi}$ and a corresponding sequence of orderings such that no node ever moves from the left of its ancestor to the right of its ancestor. Specifically, for any ancestor-descendant pair $i,j$, either $j$ is fixed before $i$ and their relative ordering never changes, or $i$ is fixed first before $j$ and moves from the left of $j$ to the right of $j$ once. It follows that there is no edge $j \rightarrow i$ reversed in the Chickering sequence from $\mathcal{G}_{\pi^*}$ to $\mathcal{G}_{\pi}$. 
\end{proof}

In turn, Lemma \ref{lem:directed_path} allows us to prove the existence of a greedy path from $\mathcal{G}_{\pi}$ to the true DAG $\mathcal{G}_{\pi^*}$ by reversing $\mathcal{I}$-covered edges.

\begin{lemma} \label{lem:i-cov}
For any permutation $\pi$, there exists a list of $\mathcal{I}$-covered arrow reversals from $\mathcal{G}_{\pi}$ to the true DAG $\mathcal{G}_{\pi^*}$ such that the number of edges is weakly decreasing.
\end{lemma}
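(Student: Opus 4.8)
\textbf{Proof proposal for Lemma~\ref{lem:i-cov}.}
The plan is to reuse the covered-edge-reversal sequence supplied by Lemma~\ref{lem:directed_path} and show that every reversal in it is actually an $\mathcal{I}$-covered edge reversal; the weakly-decreasing edge count is then inherited for free. Concretely, fix a permutation $\pi$, set $\mathcal{G}^*:=\mathcal{G}_{\pi^*}$ (the true DAG, so that $\{f^{(I)}\}_{I\in\mathcal{I}}\in\mathcal{M}_{\mathcal{I}}(\mathcal{G}^*)$ and $\emptyset\in\mathcal{I}$), and let
\[
\mathcal{G}_{\pi} = \mathcal{G}_{\pi^0} \geq \mathcal{G}_{\pi^1} \geq \cdots \geq \mathcal{G}_{\pi^M} = \mathcal{G}^*
\]
be the list of covered arrow reversals from Lemma~\ref{lem:directed_path}. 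Suppose the step from $\mathcal{G}_{\pi^{m-1}}$ to $\mathcal{G}_{\pi^m}$ reverses the edge $i\to j$, which is covered in $\mathcal{G}_{\pi^{m-1}}$. By the definition of an $\mathcal{I}$-covered edge, there is nothing to verify unless $\{i\}\in\mathcal{I}$, in which case I must show $f^{(\{i\})}(X_j)=f^{(\emptyset)}(X_j)$.

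The key observation is that property~(2) of Lemma~\ref{lem:directed_path} guarantees there is no directed path from $i$ to $j$ in $\mathcal{G}^*$, i.e.\ $j$ is not a descendant of $i$ in $\mathcal{G}^*$, and I would then turn this combinatorial fact into the required invariance via the $\mathcal{I}$-Markov property. Since $\emptyset\in\mathcal{I}$ and $\{f^{(I)}\}_{I\in\mathcal{I}}\in\mathcal{M}_{\mathcal{I}}(\mathcal{G}^*)$, Proposition~\ref{lem:i-markov} gives that $\{f^{(I)}\}_{I\in\mathcal{I}}$ satisfies the $\mathcal{I}$-Markov property with respect to $(\mathcal{G}^*)^{\mathcal{I}}$, so it remains to check the d-separation hypothesis of condition~(2) of Definition~\ref{def:i-markov} with $A=\{j\}$, $C=\emptyset$, and $I=\{i\}$, namely that $\zeta_{\mathcal{I}\setminus\{i\}}$ d-separates $j$ from $\zeta_{\{i\}}$ in $(\mathcal{G}^*)^{\mathcal{I}}$. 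Every path from $\zeta_{\{i\}}$ to $j$ begins $\zeta_{\{i\}}\to i\to\cdots$; if it is a directed path it visits only descendants of $i$ and therefore cannot reach $j$, and if it contains a collider, that collider is a vertex of $[p]$ (an $\mathcal{I}$-vertex is a source and cannot be a collider) whose descendants in $(\mathcal{G}^*)^{\mathcal{I}}$ are again vertices of $[p]$, hence disjoint from the conditioning set $\zeta_{\mathcal{I}\setminus\{i\}}$, so the collider is inactive and the path is blocked. Condition~(2) then yields $f^{(\{i\})}(X_j)=f^{(\emptyset)}(X_j)$. (Alternatively, one can argue directly: $f^{(\{i\})}$ and $f^{(\emptyset)}$ both factor over $\mathcal{G}^*$ as in Equation~(\ref{eq:int-factor}) by Lemma~\ref{lem:M}, differ only in the conditional for $X_i$, and marginalizing each onto the ancestral set of $j$ — which does not contain $i$, since $j$ is a non-descendant of $i$, and is closed under parents — leaves only the factors indexed by that ancestral set, which coincide; marginalizing further to $X_j$ gives the claim.)

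Carrying this out for every $m$ shows that the entire Lemma~\ref{lem:directed_path} sequence consists of $\mathcal{I}$-covered arrow reversals, and since it is already edge-count weakly decreasing, the lemma follows. I expect the only real subtlety to be the bookkeeping in the d-separation (equivalently, marginalization) check above — the content of the argument is precisely the identification of ``no directed path from $i$ to $j$ in $\mathcal{G}^*$'' with ``the marginal of $X_j$ is invariant to an intervention on $\{i\}$''; notice in particular that this step relies only on $\{f^{(I)}\}_{I\in\mathcal{I}}\in\mathcal{M}_{\mathcal{I}}(\mathcal{G}^*)$ with $\emptyset\in\mathcal{I}$ and on the guarantees already baked into Lemma~\ref{lem:directed_path}, and does not invoke Assumptions~\ref{ass:1}--\ref{ass:2}.
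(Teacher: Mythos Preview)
Your proposal is correct and follows essentially the same approach as the paper: invoke the covered-edge sequence from Lemma~\ref{lem:directed_path}, observe that if $\{i\}\in\mathcal{I}$ then the ``no directed path from $i$ to $j$ in $\mathcal{G}_{\pi^*}$'' property yields d-separation of $\zeta_{\{i\}}$ and $j$ in the $\mathcal{I}$-DAG, and apply Proposition~\ref{lem:i-markov} to conclude $f^{(\{i\})}(X_j)=f^{(\emptyset)}(X_j)$. Your write-up is in fact more careful than the paper's, spelling out the collider argument for the d-separation and offering the alternative marginalization route.
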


\begin{proof}
From Lemma \ref{lem:directed_path}, we know that there exists a sequence of covered arrow reversals from $\mathcal{G}_{\pi}$ to $\mathcal{G}_{\pi^*}$ in which the number of edges is weakly decreasing; and that this sequence has the property that if arrow $i \rightarrow j$ is reversed from $\mathcal{G}_{\pi^{m-1}}$ to $\mathcal{G}_{\pi^m}$, then there is no directed path from $i$ to $j$ in $\mathcal{G}_{\pi^*}$. It remains to be shown that $i \rightarrow j$ is $\mathcal{I}$-covered. Suppose $\{i\} \in \mathcal{I}$ and let $\mathcal{G}^{(\mathcal{I})}_{\pi^*}$ denote the $\mathcal{I}$-DAG of $\mathcal{G}_{\pi^*}$ (Definition \ref{def:i-dag}). Note that $\{\zeta_{\{i\}}\}$ is d-separated from $j$ in $\mathcal{G}^{(\mathcal{I})}_{\pi^*}$ since there is no directed path from $i$ to $j$. Therefore, $f(X_j)$ is invariant to $I \in \{\emptyset, \{i\}\}$ by the $\mathcal{I}$-Markov property (Definition \ref{def:i-markov}) and Proposition \ref{lem:i-markov}. It follows that $i \rightarrow j$ is $\mathcal{I}$-covered in $\mathcal{G}_{\pi^{m-1}}$. If $\{i\} \notin \mathcal{I}$, then the result is trivial as $i \rightarrow j$ is $\mathcal{I}$-covered as long as it is covered.
\end{proof}

The following lemma proves the correctness of using $\mathcal{I}$-contradictory arrows as the secondary search criterion; essentially, it states that when $\mathcal{G}_{\pi}$ is in the same MEC but not the same $\mathcal{I}$-MEC as $\mathcal{G}_{\pi^*}$, then $\mathcal{G}_{\pi}$ has more $\mathcal{I}$-contradictory arrows than $\mathcal{G}_{\pi^*}$.

\begin{lemma} \label{lem:i-cont}
For any permutation $\pi$ such that $\mathcal{G}_{\pi}$ and $\mathcal{G}_{\pi^*}$ are in the same MEC, there exists a list of $\mathcal{I}$-covered arrow reversals from $\mathcal{G}_{\pi}$ to the true DAG $\mathcal{G}_{\pi^*}$
\[
\mathcal{G}_{\pi} = \mathcal{G}_{\pi^0} \geq \mathcal{G}_{\pi^1} \geq \cdots \geq
\mathcal{G}_{\pi^{m-1}} \geq \mathcal{G}_{\pi^m}\geq \cdots \geq
\mathcal{G}_{\pi^{M-1}} \geq \mathcal{G}_{\pi^M} = \mathcal{G}_{\pi^*}
\]
such that the number of arrows is non-increasing and for all $m$, if $\mathcal{G}_{\pi^{m-1}}$ and $\mathcal{G}_{\pi^m}$ are not in the same $\mathcal{I}$-MEC, then $\mathcal{G}_{\pi^m}$ is produced from $\mathcal{G}_{\pi^{m-1}}$ by the reversal of an $\mathcal{I}$-contradictory arrow.
\end{lemma}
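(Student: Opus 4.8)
The plan is to show that the list of $\mathcal{I}$-covered arrow reversals already produced by Lemma~\ref{lem:i-cov} has the extra property claimed here. Since $\mathcal{G}_\pi$ and $\mathcal{G}_{\pi^*}$ lie in the same MEC, the number of arrows is constant along that list, so every step is a genuine covered arrow reversal and every intermediate graph $\mathcal{G}_{\pi^m}$ has the same skeleton as $\mathcal{G}_{\pi^*}$, hence the same neighbour set at every node. The list also inherits property~(2) of Lemma~\ref{lem:directed_path}, so whenever a step reverses $i\to j$ to $j\to i$ there is no directed path from $i$ to $j$ in $\mathcal{G}_{\pi^*}$; as $i$ and $j$ are adjacent in $\mathcal{G}_{\pi^*}$ (same skeleton), this forces the orientation $j\to i$ in $\mathcal{G}_{\pi^*}$. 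It therefore remains only to show: if such a step changes the $\mathcal{I}$-MEC, then the reversed arrow $i\to j$ is $\mathcal{I}$-contradictory in $\mathcal{G}_{\pi^{m-1}}$.

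First I would pin down which covered reversals change the $\mathcal{I}$-MEC. Reversing a covered edge leaves both the skeleton and the v-structures of the underlying DAG unchanged \cite{chickering02}, and it leaves the skeleton of the $\mathcal{I}$-DAG unchanged because the $\mathcal{I}$-edges are fixed; so by Theorem~\ref{the:i-mec} the $\mathcal{I}$-MEC changes only if the set of v-structures of the $\mathcal{I}$-DAG that involve an $\mathcal{I}$-vertex changes. Such a v-structure has the form $\zeta_K\to c\leftarrow b$ with $c\in K$, $b\notin K$, and $b\to c$ an edge of the DAG, and it can be created or destroyed by reversing $i\to j$ only when $\{b,c\}=\{i,j\}$. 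This leaves exactly two (not mutually exclusive) cases: \textbf{(A)} there is $K\in\mathcal{I}$ with $j\in K$, $i\notin K$ (i.e.\ $K\in\mathcal{I}_{j\setminus i}$), so $\zeta_K\to j\leftarrow i$ is a v-structure of $\mathcal{G}^{\mathcal{I}}_{\pi^{m-1}}$ that disappears; or \textbf{(B)} there is $K\in\mathcal{I}$ with $i\in K$, $j\notin K$ (i.e.\ $K\in\mathcal{I}_{i\setminus j}$), so $\zeta_K\to i\leftarrow j$ is a v-structure of $\mathcal{G}^{\mathcal{I}}_{\pi^m}$ that appears.

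For case~\textbf{(A)}, recall $i$ is a child of $j$ in $\mathcal{G}_{\pi^*}$, $j\in K$, $i\notin K$; Assumption~\ref{ass:2} applied to $\mathcal{G}_{\pi^*}$ then gives $f^{(K)}(X_i\mid X_S)\neq f^{(\emptyset)}(X_i\mid X_S)$ for every $S\subseteq\neigh_{\mathcal{G}_{\pi^*}}(i)\setminus\{j\}=\neigh_{\mathcal{G}_{\pi^{m-1}}}(i)\setminus\{j\}$, which, as $K\in\mathcal{I}_{j\setminus i}$, is exactly condition~(2) of Definition~\ref{def:i-cont-edge}. For case~\textbf{(B)}, I would use that $\{f^{(I)}\}_{I\in\mathcal{I}}$ satisfies the $\mathcal{I}$-Markov property with respect to $\mathcal{G}^{\mathcal{I}}_{\pi^*}$ (true by Proposition~\ref{lem:i-markov}, since the data are generated by a causal model on $\mathcal{G}_{\pi^*}$ and $\emptyset\in\mathcal{I}$): the key point is that for every $I'\in\mathcal{I}_{i\setminus j}$, the set $\pa_{\mathcal{G}_{\pi^*}}(j)\cup\zeta_{\mathcal{I}\setminus I'}$ d-separates $j$ from $\zeta_{I'}$ in $\mathcal{G}^{\mathcal{I}}_{\pi^*}$ — since $j\notin I'$ this set contains $\pa_{\mathcal{G}^{\mathcal{I}}_{\pi^*}}(j)$, which d-separates $j$ from the non-descendant $\zeta_{I'}$ by the local Markov property, and the surplus conditioning vertices $\{\zeta_J : J\in\mathcal{I},\,J\neq I',\,j\notin J\}$ are source vertices whose inclusion cannot open any path. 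Hence $f^{(I')}(X_j\mid X_{\pa_{\mathcal{G}_{\pi^*}}(j)})=f^{(\emptyset)}(X_j\mid X_{\pa_{\mathcal{G}_{\pi^*}}(j)})$ for all $I'\in\mathcal{I}_{i\setminus j}$, and taking $S:=\pa_{\mathcal{G}_{\pi^*}}(j)$, which lies in $\neigh_{\mathcal{G}_{\pi^{m-1}}}(j)\setminus\{i\}$ because $i$ is a child of $j$ in $\mathcal{G}_{\pi^*}$, yields condition~(1) of Definition~\ref{def:i-cont-edge}. In either case $i\to j$ is $\mathcal{I}$-contradictory, as needed. (When $\mathcal{I}$ consists only of single-node interventions, in case~\textbf{(A)} one uses Assumption~\ref{ass:1} instead of Assumption~\ref{ass:2} together with the simplified notion of $\mathcal{I}$-contradictory edge, which is why Assumption~\ref{ass:2} becomes dispensable there.)

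I expect the main obstacle to be the structural case analysis of the second paragraph — verifying that a covered reversal can affect only $\mathcal{I}$-vertex v-structures of the stated form, and matching each surviving case to the correct clause of Definition~\ref{def:i-cont-edge} — together with the d-separation bookkeeping in case~\textbf{(B)} showing that conditioning on $\pa_{\mathcal{G}_{\pi^*}}(j)$ and the surplus $\mathcal{I}$-vertices really d-separates $j$ from $\zeta_{I'}$; everything else follows directly from Lemmas~\ref{lem:i-cov} and~\ref{lem:directed_path} and the standard properties of covered edge reversals.
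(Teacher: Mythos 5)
Your proof is correct and follows essentially the same route as the paper's: take the sequence from Lemma~\ref{lem:i-cov}, use the no-directed-path property to conclude $j\to i$ in $\mathcal{G}_{\pi^*}$, and then split into the cases $\mathcal{I}_{i\setminus j}\neq\emptyset$ (yielding clause~(1) of Definition~\ref{def:i-cont-edge} via the $\mathcal{I}$-Markov property) and $\mathcal{I}_{j\setminus i}\neq\emptyset$ (yielding clause~(2) via Assumption~\ref{ass:2}). You are in fact somewhat more explicit than the paper at two points it leaves implicit --- the v-structure argument showing that an $\mathcal{I}$-MEC change forces $\mathcal{I}_{i\setminus j}\cup\mathcal{I}_{j\setminus i}\neq\emptyset$, and the concrete choice $S=\pa_{\mathcal{G}_{\pi^*}}(j)$ with its d-separation verification --- but the underlying argument is the same.
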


\begin{proof}
From Lemma \ref{lem:i-cov}, we know there exists a sequence of $\mathcal{I}$-covered arrow reversals from $\mathcal{G}_{\pi}$ to $\mathcal{G}_{\pi^*}$ in which the number of edges is weakly decreasing, with the property that if $i \rightarrow j$ is reversed from $\mathcal{G}_{\pi^{m-1}}$ to $\mathcal{G}_{\pi^m}$, then $i$ is not an ancestor of $j$ in $\mathcal{G}_{\pi^*}$. 

Suppose the arrow $i \rightarrow j$ is reversed from $\mathcal{G}_{\pi^{m-1}}$ to $\mathcal{G}_{\pi^m}$. Since $\mathcal{G}_{\pi^{m-1}}$ and $\mathcal{G}_{\pi^*}$ are in the same MEC and $i$ is not an ancestor of $j$, this implies that $j \rightarrow i$ is in $\mathcal{G}_{\pi^*}$. Since $\mathcal{G}_{\pi^{m-1}}$, $\mathcal{G}_{\pi^m}$ are not in the same $\mathcal{I}$-MEC, then we must have $\mathcal{I}_{i\backslash j} \cup \mathcal{I}_{i\backslash j} \neq \emptyset $. Now, let $\mathcal{G}^{(\mathcal{I})}_{\pi^*}$ denote the $\mathcal{I}$-DAG of $\mathcal{G}_{\pi^*}$ (Definition \ref{def:i-dag}), and consider the following cases:

(1) $\mathcal{I}_{i \backslash j} \neq \emptyset$. Then there exists a subset $S \subset ne_{\mathcal{G}}(j) \backslash \{i\}$ that d-separates $\zeta_{\mathcal{I}_{i \backslash j}}$ from $j$ in $\mathcal{G}^{(\mathcal{I})}_{\pi^*}$. By the $\mathcal{I}$-Markov property (Definition \ref{def:i-markov}) and Proposition \ref{lem:i-markov}, $f^{(\emptyset)}(X_j|X_S) = f^{(I)}(X_j|X_S)$ for all $I \in \mathcal{I}_{i \backslash j}$. 

(2) $\mathcal{I}_{j \backslash i} \neq \emptyset$. Then for any subset $S \subset ne_{\mathcal{G}}(i) \backslash \{j\}$, $\zeta_{\mathcal{I}_{j \backslash i}}$ is d-connected to $i$ in $\mathcal{G}^{(\mathcal{I})}_{\pi^*}$. By Assumption \ref{ass:2}, $f^{(\emptyset)}(X_i|X_S) \neq f^{(I)}(X_i|X_S)$ for some $I \in \mathcal{I}_{j \backslash i}$.

(3) $\{i\} \in \mathcal{I}$. Then $\zeta_{\{i\}}$ is d-separated from $j$ in $\mathcal{G}^{(\mathcal{I})}_{\pi^*}$. Therefore, $f^{\{i\}}(X_j)=f^{\emptyset}(X_j)$ by the $\mathcal{I}$-Markov property (Definition \ref{def:i-markov}) and Proposition \ref{lem:i-markov}.

(4) $\{j\} \in \mathcal{I}$. Then $\zeta_{\{j\}}$ is not d-separated from $i$ in $\mathcal{G}^{(\mathcal{I})}_{\pi^*}$. Therefore, $f^{\{j\}}(X_i)\neq f^{\emptyset}(X_i)$ by Assumption \ref{ass:1}.

These are the defining properties of $\mathcal{I}$-contradictory edges. Therefore, the arrow $i \rightarrow j$ is $\mathcal{I}$-contradictory in $\mathcal{G}_{\pi^{m-1}}$.
\end{proof}

\begin{proof}[Proof of Theorem \ref{the:consistency}] 
This follows directly from Lemmas \ref{lem:i-cov} and \ref{lem:i-cont}.
\end{proof}

\subsection{Pooling Data for CI Testing}

The following proposition gives sufficient conditions under which CI relations hold when the data come from a mixture of interventional distributions:

\begin{proposition}\label{pro:pool}
Let $\{f^{(I)}\}_{I \in \mathcal{I}} \in \mathcal{M}_{\mathcal{I}}(\mathcal{G})$ for a DAG $\mathcal{G} = ([p], E)$ and intervention targets $\mathcal{I}$ s.t. $\emptyset \in \mathcal{I}$. For some $\mathcal{I}_s \subset \mathcal{I}$ and some disjoint $A, B, C \subset [p]$, suppose that $C \cup \zeta_{\mathcal{I}\backslash \mathcal{I}_s}$ d-separates $A$ from $B \cup \zeta_{\mathcal{I}_s}$ in $\mathcal{G}^{\mathcal{I}}$. Then $X_A \indep X_B \mid X_C$ under the distribution $X \sim \sum_{I \in \{\emptyset\} \cup \mathcal{I}_s} \alpha_{I}f^{(I)}$, for any $\alpha_{I} \in (0,1)$ s.t. $\sum_{I \in \{\emptyset\} \cup \mathcal{I}_s} \alpha_{I} = 1 $.
\end{proposition}

Proposition \ref{pro:pool} can be used to derive a set of checkable conditions on $\mathcal{G}_{\pi}$ to determine whether each interventional dataset $I \in \mathcal{I}$ can be pooled with observational data to test $X_i \indep X_k \mid X_{\an_{\mathcal{G}_\pi}(i) \setminus \{k\}}$ for $k \in \pa_{\mathcal{G}_\pi}(i)$. 

\begin{corollary} \label{cor:pool}
Suppose we want to test $X_i \indep X_k \mid X_{\an_{\mathcal{G}_\pi}(i) \setminus \{k\}}$ for some $k \in \pa_{\mathcal{G}_\pi}(i)$. Let $\mathcal{I}_s \subset \mathcal{I}$ be interventional targets such that the following two conditions hold for every $j \in I \in \mathcal{I}_s$:

(1) $j = i$ or $j$ is neither a descendant nor an ancestor of $i$;

(2) $\pi(k) > \pi(j)$ and $k$ is not a parent of $j$; or $\pi(j) > \pi(k)$ and $j$ is not an ancestor of $k$,

where all relations are being considered with respect to $\mathcal{G}_{\pi}$, and $\pi(i)$ denotes the index of $i$ in $\pi$. Then under the faithfulness assumption, $X_i \indep X_k \mid X_{\an_{\mathcal{G}_\pi}(i) \setminus \{k\}}$ under $X \sim f^{\emptyset}$ if and only if this CI relation also holds under $X \sim \sum_{I \in \{\emptyset\} \cup \mathcal{I}_s} \alpha_{I}f^{(I)}$, where $\alpha_{I} \in (0,1)$ and $\sum_{I \in \{\emptyset\} \cup \mathcal{I}_s} \alpha_{I} = 1$.

\end{corollary}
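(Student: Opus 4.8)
The plan is to derive this from Proposition~\ref{pro:pool} together with the faithfulness assumption. Write $C := \an_{\mathcal{G}_\pi}(i)\setminus\{k\}$, and let $\mathcal{G}$ be the true DAG (the one to which $f^{(\emptyset)}$ is faithful), so that $\{f^{(I)}\}_{I\in\mathcal{I}}\in\mathcal{M}_{\mathcal{I}}(\mathcal{G})$; recall that $\mathcal{G}_\pi$ is an I-map of $f^{(\emptyset)}$, hence $\mathcal{G}\leq\mathcal{G}_\pi$ by faithfulness. By faithfulness of $f^{(\emptyset)}$ with respect to $\mathcal{G}$, the relation $X_i\indep X_k\mid X_C$ holds under $f^{(\emptyset)}$ if and only if $C$ d-separates $i$ and $k$ in $\mathcal{G}$. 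So the claim reduces to showing that, given conditions~(1)--(2), this d-separation in $\mathcal{G}$ is equivalent to $X_i\indep X_k\mid X_C$ under the mixture $\tilde f:=\sum_{I\in\{\emptyset\}\cup\mathcal{I}_s}\alpha_I f^{(I)}$.

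For the forward direction I would apply Proposition~\ref{pro:pool} with the same $\mathcal{I}_s$, the above $C$, and $A$, $B$ the singletons $\{k\}$ and $\{i\}$ (conditions~(1)--(2) force $k\notin\bigcup_{I\in\mathcal{I}_s}I$, so placing $k$ in the singleton $A$ is consistent with the requirement that $A$ be d-separated from $\zeta_{\mathcal{I}_s}$). Assuming $C$ d-separates $i$ and $k$ in $\mathcal{G}$, I would verify that $C\cup\zeta_{\mathcal{I}\setminus\mathcal{I}_s}$ d-separates $\{k\}$ from $\{i\}\cup\zeta_{\mathcal{I}_s}$ in $\mathcal{G}^{\mathcal{I}}$, splitting this into (a) $k$ d-separated from $i$, and (b) $k$ d-separated from each $\zeta_J$ with $J\in\mathcal{I}_s$, always given $C\cup\zeta_{\mathcal{I}\setminus\mathcal{I}_s}$. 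For (a): the $\mathcal{I}$-vertices are sources, so conditioning on $\zeta_{\mathcal{I}\setminus\mathcal{I}_s}$ opens no collider, and any new $i$--$k$ active path would have to pass through some $\zeta_J$, $J\in\mathcal{I}_s$, as a fork; conditions~(1)--(2) close off this possibility. For (b): any active path from $k$ into $\zeta_J$ enters through a child $j\in J$, i.e.\ has the form $k\cdots j\leftarrow\zeta_J$; condition~(1) blocks such a segment from being activated through a descendant of $i$ (none lies in $C$, which consists of ancestors of $i$), and condition~(2) blocks it through the remaining structure, using $\mathcal{G}\leq\mathcal{G}_\pi$ and faithfulness to transfer the ancestor/parent statements of conditions~(1)--(2) from $\mathcal{G}_\pi$ to $\mathcal{G}$. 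Proposition~\ref{pro:pool} then yields $X_k\indep X_i\mid X_C$ under $\tilde f$.

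For the converse I would argue the contrapositive: if $X_i\nindep X_k\mid X_C$ under $f^{(\emptyset)}$, then by faithfulness there is a path between $i$ and $k$ in $\mathcal{G}$ active given $C$, and conditions~(1)--(2) guarantee that the interventions indexed by $\mathcal{I}_s$ do not perturb the conditionals of the vertices carrying this dependence in a way that could cancel it, so the dependence persists in every $f^{(I)}$, $I\in\mathcal{I}_s$, hence in $\tilde f$ (which in any case contains $f^{(\emptyset)}$ with weight $\alpha_\emptyset>0$).

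The step I expect to be the main obstacle is the forward direction's d-separation verification, and in particular part~(b): the combinatorial case analysis enumerating how a path from $k$ into an $\mathcal{I}$-vertex $\zeta_J$ could be active and checking that conditions~(1) and~(2)---the first controlling each target node's relation to $i$, the second its $\pi$-position and its ancestry relative to $k$---jointly rule out every case, compounded by the fact that these conditions are phrased in terms of $\mathcal{G}_\pi$ while the d-separation must be checked in $\mathcal{G}$, so one must carefully exploit $\mathcal{G}\leq\mathcal{G}_\pi$ together with faithfulness to bridge the two graphs. The ``no cancellation'' step of the converse is conceptually routine but also needs these structural constraints to be made explicit.
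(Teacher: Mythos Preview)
Your proposal is correct and follows essentially the same route as the paper: both directions are organized identically, with the forward direction reduced (via faithfulness and $\mathcal{G}\leq\mathcal{G}_\pi$) to verifying the d-separation hypothesis of Proposition~\ref{pro:pool} in $\mathcal{G}^{\mathcal{I}}$ with $A=\{k\}$, $B=\{i\}$, and the converse handled by a one-line contrapositive. Your identification of the main obstacle is accurate---the paper's work is exactly the case analysis showing, for each target $j\in I\in\mathcal{I}_s$, that conditions~(1)--(2) force $j$ (hence $\zeta_I$) to be d-separated from $k$ given $\an_{\mathcal{G}_\pi}(i)\setminus\{k\}$ in the true graph; the paper's treatment of the converse is, like yours, a single sentence.
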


\begin{proof}
If $X_i \nindep X_k \mid X_{\an_{\mathcal{G}_\pi}(i) \setminus \{k\}}$ under $X \sim f^{\emptyset}$, then this CI relation will clearly not hold under $X \sim \sum_{I \in \{\emptyset\} \cup \mathcal{I}_s} \alpha_{I}f^{(I)}$, thereby implying the ``if" direction. It remains to prove the ``only if" direction, i.e. that $X_i \indep X_k \mid X_{\an_{\mathcal{G}_\pi}(i) \setminus \{k\}}$ under $X \sim f^{\emptyset}$ implies conditional independence under $X \sim \sum_{I \in \{\emptyset\} \cup \mathcal{I}_s} \alpha_{I}f^{(I)}$.

We first consider the case where $j\neq i$ and $j$ is neither a descendant nor an ancestor of $i$. By the faithfulness assumption, $X_i \indep X_k \mid X_{\an_{\mathcal{G}_\pi}(i) \setminus \{k\}}$ implies that $i$ and $k$ are d-separated by $\an_{\mathcal{G}_\pi}(i) \setminus \{k\}$ in the true DAG $\mathcal{G}_*$. Since $\mathcal{G}_{\pi}$ is an independence map of $\mathcal{G}_*$, it follows from condition (2) that for any $j \in I \in \mathcal{I}_s$, $j$ and $k$ are d-separated by $\an_{\mathcal{G}_\pi}(j)\setminus \{k\}$ in $\mathcal{G}_*$. In addition, since $j$ is neither a descendant nor an ancestor of $i$, then $j$ and $k$ are also d-separated by $\an_{\mathcal{G}_\pi}(i) \setminus \{k\}$ in $\mathcal{G}_*$. 

If $i=j\in I \in \mathcal{I}_s$, then $k$ and $\{ i \} \cup \zeta_{\mathcal{I}_s}$ are d-separated in $\mathcal{G}_*^{\mathcal{I}}$ by $\zeta_{\mathcal{I} \setminus \mathcal{I}_s} \cup \an_{\mathcal{G}_\pi}(i) \setminus \{k\}$. It then follows from Proposition \ref{pro:pool} that $X_i \indep X_k \mid X_{\an_{\mathcal{G}_\pi}(i) \setminus \{k\}}$ when $X \sim \sum_{I \in \{\emptyset\} \cup \mathcal{I}_s} \alpha_{I}f^{(I)}$. 
\end{proof}

\begin{proof}[Proof of Proposition \ref{pro:pool}]
Similar to the proof of the second part of Proposition \ref{lem:i-markov}, it can be shown that for any disjoint $A, B, C \subset [p]$ and any $I \in \mathcal{I}$ such that $C \cup \zeta_{\mathcal{I} \backslash \{I\}}$ d-separates $A$ from $\{\zeta_{I}\}$ in $\mathcal{G}^{\mathcal{I}}$, we have
\[
f^{(I)}(X) = g_1(X_{A'}, X_C) g_2(X_{B'}, X_C; I) \prod_{i \in V \backslash V_{An}} f^{(I)}(X_i | X_{pa(i), \mathcal{G}})
\]
where 
\[g_1(X_{A'}, X_C) = \prod_{i \in A'} f^{(\emptyset)}(X_i | X_{pa(i), \mathcal{G}})
\prod_{i \in C, pa_{\mathcal{G}}(i) \cap A' \neq \emptyset} f^{(\emptyset)}(X_i | X_{pa(i), \mathcal{G}})\]
and 
\[g_2(X_{B'}, X_C; I) = \prod_{i \in C, pa_{\mathcal{G}}(i) \cap A' = \emptyset} f^{(I)}(X_i | X_{pa(i), \mathcal{G}})
\prod_{i \in {B'}} f^{(I)}(X_i | X_{pa(i), \mathcal{G}})\]

where $V_{An}$ is the ancestral set of $A \cup B \cup C$, $A'$ is the largest subset of $V_{An}$ that is d-separated from $B$ and $\zeta_{I}$ given $C$, and $B' = V_{An} \setminus (A' \cup C)$. Noting that $B \subset B'$, we marginalize out $X_{A' \setminus A}$,$X_{B' \setminus B}$ and $X_{B' \setminus B}$, which yields
\[
f^{(I)}(X_A, X_C) = \hat g_1(X_A, X_C) \hat g_2(X_B, X_C; I)
\]
The mixture of distributions over all $I \in \mathcal{I}_s$ is therefore,
\[
\sum_{I \in \mathcal{I}_s} \alpha_{I} f^{(I)}(X_A, X_C) = \hat g_1(X_A, X_C) \sum_{I \in \mathcal{I}_s} \alpha_{I} \hat g_2(X_B, X_C; I)
\]
which factors into separate functions over $X_A$ and $X_B$. Therefore, $X_A \indep X_B | X_C$ when $X$ is sampled from this mixture of distributions.
\end{proof}

\section{Additional simulation results}

\subsection{IGSP vs. perfect-IGSP}

As described in the main text, for each simulation, we sampled $100$ DAGs from an Erd\"os-Renyi random graph model with an average neighborhood size of $1.5$ and $p \in \{10, 20\}$ nodes. The data for each DAG $\mathcal{G}^*$ was generated using a linear structural equation model with independent Gaussian noise: $X = AX + \epsilon$, where $A$ is an upper-triangular matrix with edge weights $A_{ij} \neq 0$ if and only if $i \rightarrow j$, and $\epsilon \sim \mathcal{N}(0,Id)$. For $A_{ij} \neq 0$, the edge weights were sampled uniformly from $[-1, -0.25] \cup [0.25, 1]$ to ensure that they are bounded away from zero. We simulated perfect interventions on $i$ by setting the column $A_{,i} = 0$; inhibiting interventions by decreasing $A_{,i}$ by a factor of $10$; and imperfect interventions with a success rate of $\alpha=0.5$. Here, the results are shown for 10-node graphs in which interventions were performed on all single-variable targets (Figure \ref{fig:sim_igsp_extra}), or all pairs of multiple-variable targets (Figure \ref{fig:sim_igsp_extra}).

IGSP performed better on single-variable interventions than on multi-variable interventions (Figure \ref{fig:sim_igsp_extra}). This is expected based on the discussion on Definition \ref{def:i-cont-edge}; IGSP requires fewer invariance tests when the data come from single-variable interventions. In contrast, perfect-IGSP \cite{wang17} performs similarly between single-variable and multi-variable interventions; by assuming perfect interventions, perfect-IGSP avoids multiple hypothesis testing when there are multi-variable interventions. 

\begin{figure*}[t] 
\center
\subfigure[]{
\includegraphics[scale=0.3]{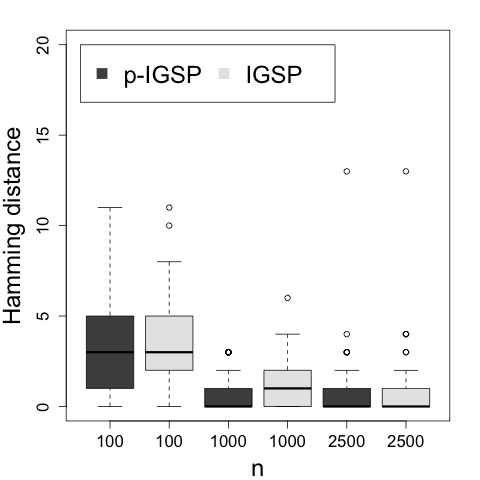}
}
\subfigure[]{
\includegraphics[scale=0.3]{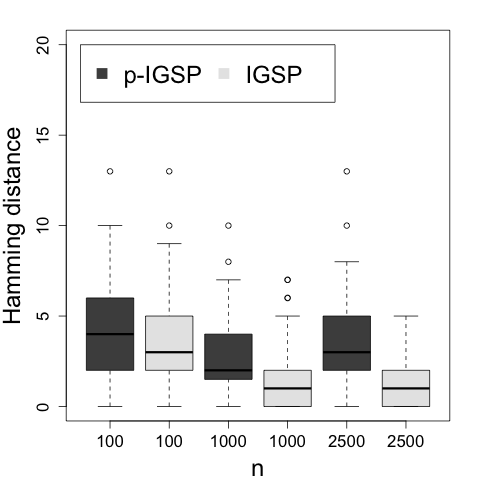}
}
\subfigure[]{
\includegraphics[scale=0.3]{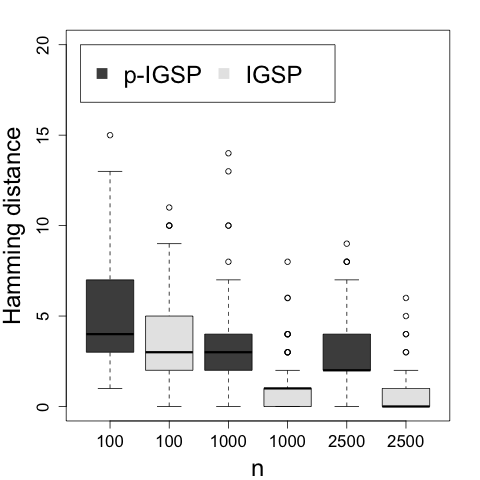}
}
\subfigure[]{
\includegraphics[scale=0.3]{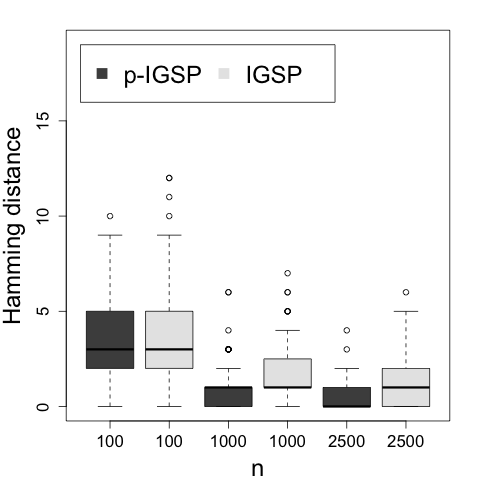}
}
\subfigure[]{
\includegraphics[scale=0.3]{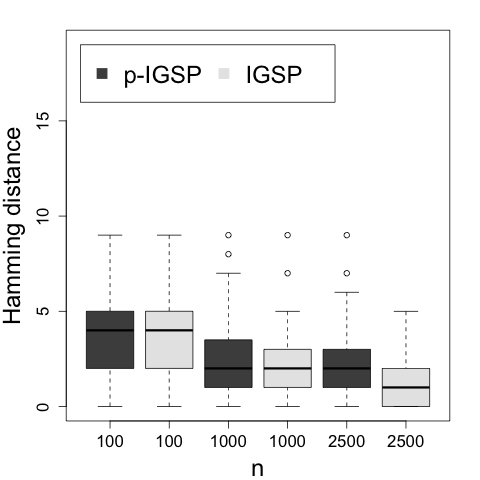}
}
\subfigure[]{
\includegraphics[scale=0.3]{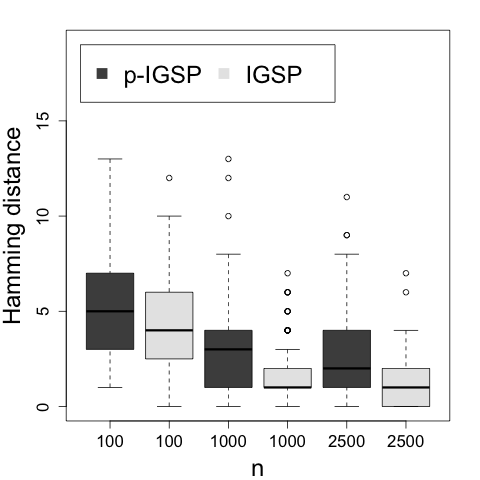}
}
\caption{Distributions of Hamming distances of recovered DAGs using IGSP and perfect-IGSP (p-IGSP) for 10-node graphs under single-variable (a) perfect, (b) imperfect, and (c) inhibitory interventions and multi-variable (d) perfect, (e) imperfect, and (f) inhibitory interventions}
\label{fig:sim_igsp_extra}
\end{figure*}

\subsection{Pooling}
Corollary \ref{cor:pool} described testable conditions under which CI tests can be performed over pooled observational and interventional data in a provably correct way. Here we show that the simple heuristic of pooling all of the datasets for all the CI tests is also effective for improving the performance of IGSP, particularly when the sample sizes are limited. The simulations of Figure \ref{fig:sim_pool_extra} compare IGSP to a heuristic version of IGSP, in which all of the data is pooled. However, the limitation of this method is that it is obviously not consistent in the limit of $n \rightarrow \infty$.

\begin{figure*}[t] 
\center
\subfigure[]{
\includegraphics[scale=0.3]{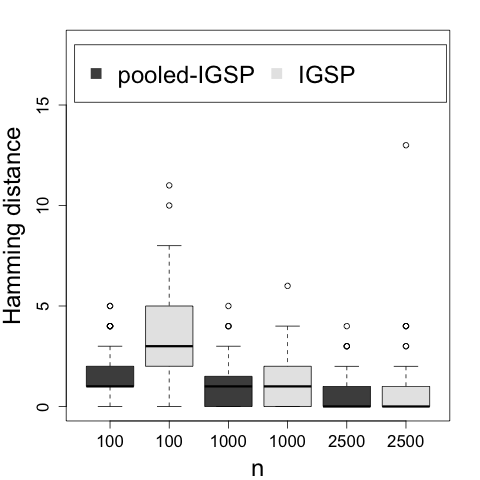}
}
\subfigure[]{
\includegraphics[scale=0.3]{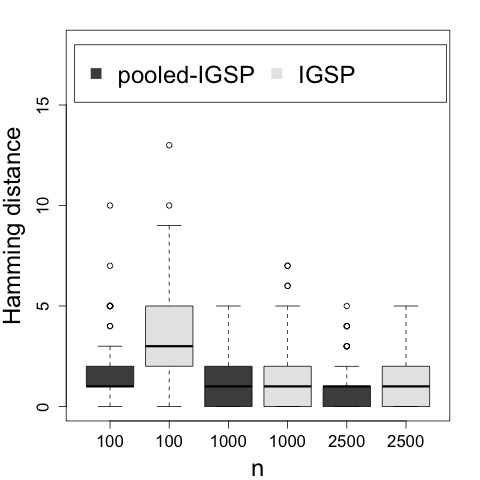}
}
\subfigure[]{
\includegraphics[scale=0.3]{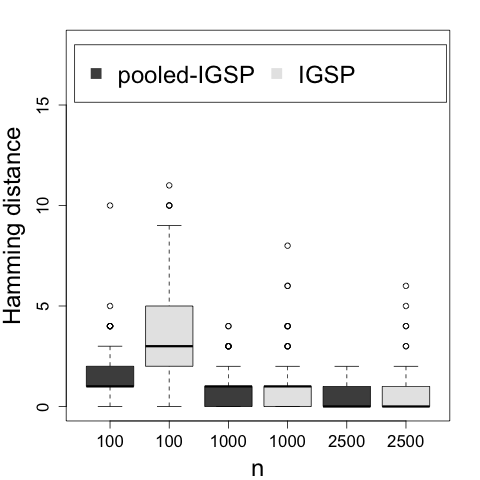}
}
\subfigure[]{
\includegraphics[scale=0.3]{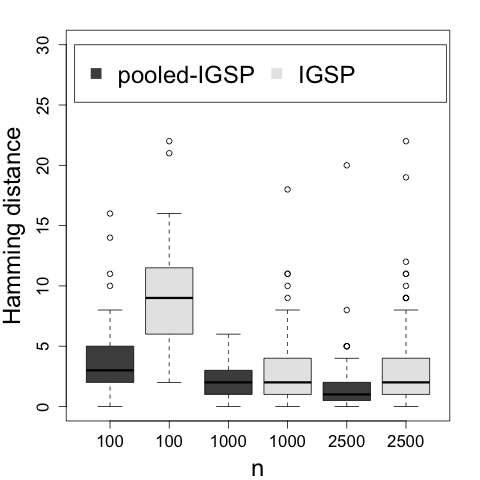}
}
\subfigure[]{
\includegraphics[scale=0.3]{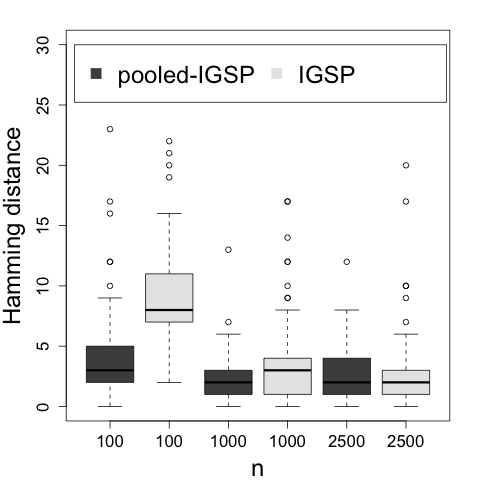}
}
\subfigure[]{
\includegraphics[scale=0.3]{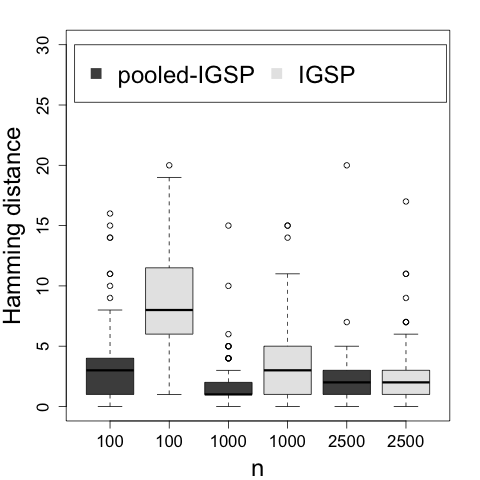}
}
\caption{Distributions of Hamming distances of recovered DAGs using IGSP and a heuristic pooled-IGSP for 10-node graphs under (a) perfect, (b) imperfect, and (c) inhibitory interventions and 20-node graphs under (d) perfect, (e) imperfect, and (f) inhibitory interventions}
\label{fig:sim_pool_extra}
\end{figure*}


\end{document}